\newcommand{\tr}{\t{Tr}}
\numberwithin{equation}{section}	 
\newcommand{\calN}{\mathcal N}
\renewcommand{\d}{d}							
\newcommand{\T}{\mathbb T}
\newcommand{\N}{\mathbb{N}} 		
\newcommand{\C}{\mathbb{C}}			 
\newcommand{\R}{ \mathbb R  } 			
\newcommand{\calR}{\mathcal R}
\newcommand{\F}{\mathscr{F}}
\renewcommand{\a}{\alpha}
\renewcommand{\b}{\beta}
\newcommand{\1}{\mathds{1}}
\newcommand{\<}{\left\langle}							
\renewcommand{\>}{\right\rangle}
\renewcommand{\leq}{\leqslant}
\renewcommand{\geq}{\geqslant}     
\newcommand{\vp}{\varphi}
\newcommand{\ve}{\varepsilon}
\renewcommand{\t}[1]{\textnormal{#1}}
\renewcommand{\(}{\left(}
\renewcommand{\)}{\right)}
\newcommand{\2}{ \boldsymbol{1}	}
\newtheorem{condition}{Condition}
\newtheorem{theorem}{Theorem}
\newtheorem{proposition}{Proposition}[section]
\newtheorem{corollary}{Corollary}[section]
\newtheorem{lemma}{Lemma}[section]
\theoremstyle{remark}
\newtheorem{remark}{Remark}[section]
\theoremstyle{definition}
\begin{document}

	\title[The semi-classical limit of a large Fermi system]{The quantitative semi-classical limit
of \\ a large Fermi system at zero temperature}
	
	\author{Esteban C\'ardenas} 
	\address[Esteban C\'ardenas]{Department of Mathematics,
		University of Texas at Austin,
		2515 Speedway,
		Austin TX, 78712, USA}
	\email{eacardenas@utexas.edu}

	\frenchspacing
	
\maketitle

\begin{abstract}
In this article we consider a 
large system of fermions in a combined mean-field
and semi-classical limit, in three dimensions.
We investigate the   convergence
of the Wigner function of the ground state, towards
the classical Thomas-Fermi theory.  
The main novelty
of the present article
is quantifying the convergence rate with respect 
to the semi-classical parameter. 
One of the main ingredients
is a recent result on
  the validity of semi-classical commutator estimates
satisfied by the Hartree theory.
Singular potentials, up to the Coulomb interaction, are included. 
\end{abstract}

 	{\hypersetup{linkcolor=black}
 	\tableofcontents}

	\section{Introduction} 
\subsection{Introduction}	In this article, we consider $N$ (spinless) fermions moving in $\R^d$. The Hilbert space of the system is the antisymmetric tensor product. 
	\begin{equation}
		 \mathscr H _N 
		   \equiv  \bigwedge^N L^2 (\R^d) \ . 
	\end{equation}
 We 
 consider the $N$-body Hamiltonian
 with a combined mean-field and semi-classical scaling
 \begin{equation}
 {\rm H}_N 
 	  \equiv  \sum_{ i =1}^N
 	  \Big(  - \hbar^2 \Delta_{x_i} +   U(x_i) \Big) 
 	   + 
\frac{1}{N}
 	   \sum_{ i < j }
 	    V (x_i - x_j)    \  , \qquad \hbar \equiv N^{ -1 /d } \, . 
 \end{equation}
 Here $ U : \R^d \rightarrow \R$
 and $V : \R^d \rightarrow \R$
 account for an external trapping potential, and a pair potential. 
They are assumed to be sufficiently regular, so
 that ${\rm H}_N$ is self-adjoint in its natural domain. 
 The parameter $\hbar \in (0,1 )$ can be understood as an
 effective Planck constant, and the scaling is chosen 
 so that the kinetic energy per particle remains bounded. 
 In particular, the present $N$-body problem   is naturally linked 
 with a semi-classical limit $ \hbar \rightarrow 0.$
 
 \medskip

 Let $\Psi \in \mathscr H _N $ be a normalized state.
 We define the one-particle reduced density matrix 
 as the   operator 
 $\gamma_\Psi : L^2 (\R^d ) \rightarrow L^2  (\R^d)$
 with kernel 
\begin{align}
	 \gamma_\Psi (x,y)
	 \equiv 
	 N \int_{\R^{d ( N -1 )}}
	 \Psi  (x , x_2, \cdots x_N)
	 \overline \Psi (y , x_2, \cdots, x_N)
	 dx_2 \cdots d x_N \  , 
\end{align}
 here $(x,y) \in \R^d  \times \R^d$. 
 In particular, $\gamma_\Psi$ is self-adjoint, trace-class, and
 verifies 
 \begin{equation}
 0 \leq \gamma_\Psi \leq \2  \  . 
 \end{equation}
For short, we refer
to an  operator $\gamma $
which verifies these properties 
  a \textit{density matrix}.
Here, we have  the normalization
$   \tr \gamma_\Psi = N  $
and we call $\varrho_\Psi (x) = N^{-1 } \gamma_{\Psi} (x,x)$ the position density. 
We are interested in the limit of $\gamma_{\Psi_N}$
as $ N \rightarrow \infty$
when $\Psi_N$ is an 
approximate  ground state of $ {\rm H}_N $ 
\begin{equation}
	\< \Psi_N,  {\rm H}_N   \Psi_N\>
	=   \inf   \sigma_{\mathscr H_N} ( {\rm H}_N  )   + N  \ve_N \  ,
\end{equation}
for a sequence $\ve_N \rightarrow0$
at a certain rate. 

\medskip 
\
Due to the semi-classical nature of the scaling under consideration, the limit of  $\gamma_{\Psi_N}$
 is better understood in terms
of its Wigner function.
This is the real-valued function on phase space $\R^d \times \R^d$
defined 
  for a general density matrix as 
\begin{equation}
 f_\gamma (x, p )
	   \equiv 
\int_{\R^d }
 \gamma 
 \big( 	 	x +   \tfrac{1}{2} y   \,  , \,   x  -   \tfrac{1}{2} y 	   \big) 
 e^{ - i \hbar^{ -1 } y\cdot p } 
 d y  \  , 
\end{equation}
with 
 normalization  
$ (2 \pi)^{-d }
	 \int_{\R^{2d}}
	 f_\gamma (x,p )  dx dp = 1. $
	 We  also 
	 introduce  
	 the Weyl quantization map, 
which assigns to  a  real-valued function $f$ on $\R^d\times \R^d$
	 the self-adjoint operator $\gamma_f$ on $L^2 (\R^d)$ with kernel 
	 \begin{equation}
	 	\gamma_f (x,y ) = 
	 	\frac{1}{ (2 \pi \hbar)^d}
	 	\int_{\R^d}
	 	f \( \frac{x + y }{2 } , p 	 	\) e^{ - i \hbar^{-1} p\cdot  (x - y )} d p  \ .
	 \end{equation}
	 It is well-known  \cite{Fournais1}	that under very general conditions,  
the large $N$ limit 
of $  f_{\gamma_{\Psi_N}}$
can be described by Thomas-Fermi theory.
More precisely, 
the limit function on phase space
corresponds to the generalized Fermi ball 
 	\begin{equation}
	\label{def:TF}
	f_{\rm TF}   (x,p) \equiv \mathds{1}(|p|^2 \leq {\rm C}_{\rm TF}  \,  \varrho_{\rm TF} (x)^{2/d})      
\end{equation}
 where 
 ${\rm C}_{\rm TF}  \equiv  4\pi^2    |B_{\R^d}  (0,1 ) |	^{-2/d}$ 
is a constant, 
 and $\varrho_{\rm TF} (x)$ is 
 a measure on $\R^d$ which verifies the  Thomas-Fermi equation
 \begin{equation}
 	\label{eq:chem}
 	\varrho_{\rm TF} (x) = 
 {\rm C }_{\rm TF}^{-d/2  }
 	\big(
 	U(x) + V*\varrho_{\rm TF} (x) - \mu
 	\big)_-^{d/2} \ .
 \end{equation}
Here,     $\mu  \in\R $ is a Lagrange multiplier  (chemical potential)
 which guarantees the normalization  $ \int_{\R^d} \varrho_{\rm TF}(x) dx =1$.
 In particular, $\varrho_{\rm TF}$ is the minimizer 
 of the Thomas-Fermi functional 
%
%
%
	\begin{align}
	\label{thomas fermi}
   & 	\mathcal E_{\rm TF}
	(  \varrho	)  \\
 &  \equiv  
	\frac{d	 \, {\rm C}_{\rm TF} }{d+2}
	\int_{	\R^d	} \varrho(x)^{1 + \frac{2}{d}}  d x 
 	+ 
	\int_{	\R^d	} U(x) \varrho(x)  d x  
   	+ 
	\frac{1}{2}
	\int_{\R^{2d} } \varrho(x) V(x-y) \varrho (y)  d xd y
	\notag 
\end{align}
which is minimized 
over all  $0 \leq \varrho \in L^1 \cap L^{\frac{d+2}{2}}(\R^d )$
of unit mass. 

\medskip 
 
The validity of the limit $  f_{\gamma_{\Psi_N}}  \rightarrow  f_{\rm TF}$
is often refered to as \textit{convergence of states}, 
and was proven in great generality 
by Lewin, Fournais and Solovej \cite{Fournais1}, 
for a  {wide} variety of potentials ($U,V$), in any dimension $d \geq 1 $, 
with magnetic fields, 
and also for higher-order marginals.
This is contrast to the
convergence of the position densities $\lim_{ N \rightarrow \infty} \varrho_{ \Psi_N } = 
\varrho_{\rm TF},$
whose analysis goes back to the pioneering works of Lieb and Simon  \cite{LiebSimon1,LiebSimon2}. 
In particular, the convergence of states addressed in \cite{Fournais1}  was   proven
   in the  following topology (assuming here
   uniqueness of minimizers)
\begin{equation}
	\label{limit}
 \lim_{ N \rightarrow \infty} 	
  \langle  \vp,  f_{ \gamma_{\Psi_N} 	 }	\rangle
  = 
  	 \<  \vp,  f_{  {\rm TF} 	 }	\> \ , \qquad \forall \vp \in W^{1 , \infty} (\R^d \times \R^d ) \ . 
\end{equation}
The mode of convergence was then improved by the author \cite{Cardenas1}, 
and it was shown  that  \eqref{limit} holds
in the strong sense, with respect to various norms of interest.
For instance,  for all $s < 0 $
\begin{equation}
	\label{limit2}
\lim_{ N \rightarrow \infty}	 \|	  f_{	 \gamma_{\Psi_N}	}	 - 	 f_{{\rm TF}}	\|_{H^s (\R^{2d})} = 0 \  , 
\end{equation}
although with  slightly stronger conditions on the pair $(U,V)$.  
We would also like to mention
that the convergence of states obtained   in \cite{Fournais1} has    been extended in various other directions: 
\cite{Lewin1} considered a  system of  fermions at positive temperature; 
  \cite{Fournais2} studied various scalings for the magnetic field; 
  and   \cite{Girardot}
analyzed   the case of anyons with  almost fermionic statistics. 
We would also like to mention \cite{Gottschling} which considers the convergence of the position densities in the sense of $\Gamma$ convergence.

\medskip

The main goal of this article
is to investigate the   \textbf{quantitative convergence of states}. 
As a corollary of our main result in Theorem \ref{thm1} we obtain the $L^2$ convergence
\begin{equation}
	 \|	 f_{\gamma_{\Psi_N}} - f_{\rm TF}	\|_{L^2(\R^{2d})} \leq C \hbar^{  \delta/2 }
	 	 |\! \ln \hbar|^{1/4} 
\end{equation}
in terms of an explicit power $\delta >0$, where up to Coulomb interactions $V(x) = |x|^{-1}$ are included.
To the authors best knowledge, 
there are no previous    results     in the literature  which are quantitative in the semi-classical parameter. 
In particular,  previous work  on the convergence of states \cite{Fournais1,Fournais2,Lewin1,Girardot,Cardenas1}
relies on compactness arguments and, therefore, are not suitable for proving   quantitative estimates.  
The only exception corresponds to  systems on the torus $\T^d$ with no external trap, 
for which the problem becomes  translation invariant $\rho_{\rm TF} =1$. See e.g.  \cite{Benedikter3,Christiansen1} for Fermi systems
in the random phase approximation.
Additionally,   
in the context of large atomic/molecular Coulomb systems, 
related estimates have been employed with the purpose
of  proving relevant energy asymptotics.  See
for instance \cite[Theorem 1.6]{Ivrii}. 
Here, however,  it is sufficient to analyze 
the difference between the position densities of a  
mean-field density matrix (trial state) 
and the Thomas-Fermi theory;  no information about the   $N$-body is   deduced. 


 \subsection{Semi-classical commutator estimates and Hartree theory}
One of the key challenges   regarding the problem under consideration
is  the lack of high regularity of   the limit  \eqref{eq:f}, 
 given by a characteristic function. 
In addition, in order to quantify
the convergence as $  N \rightarrow \infty$ 
it becomes crucial  to understand the uniform-in-$N$ regularity
at the quantum level. 

\medskip 

In this regard, there is an extremely  helpful notion
that we may use to understand such properties. 
Namely, we say   
$(\gamma_N)_{N \geq 1 }$
satisfies the 
\textit{semi-classical commutator estimates}  if 
there is $C>0$ such that for all $  N \geq 1 $
\begin{equation}
	\label{SCE}
	\| [ x,  \gamma_N  ]		\|_{\tr} 
	\leq C N \hbar 
	\qquad 
	\t{and}
	\qquad 
	\| [   - i \hbar \nabla  , \gamma_N  ]		\|_{\tr}  \ 
	\leq C N \hbar    \  ,
\end{equation}
where $\| A \|_{\tr} = \tr |A |$ is the trace norm for operators on $L^2 (\R^d)$.  The bounds \eqref{SCE}
where introduced
as key requirements in the initial data  by 
Benedikter, Porta and Schlein \cite{Benedikter1}, 
in the context of the derivation
of   Hartree-Fock dynamics for large fermionic systems.
Physically, these bounds encode the semi-classical structure  of the states, meaning   they can only vary over macroscopic scales.  
Mathematically, they contain information about the uniform regularity of the density matrix $\gamma_N$. 
For instance, for  orthogonal projections $\gamma_N  =\gamma_N^2$
they imply the 
\textit{optimal} fractional regularity  of the Wigner functions  in the Sobolev scale
\begin{equation}
	\sup_{ N \geq 1 } \|	 f_{\gamma_N}	\|_{H^s (\R^d \times \R^d )} < \infty \ , 
	\qquad \forall s < \tfrac{1}{2}  \ .
\end{equation}
We refer the reader to the article  \cite{Lafleche1}  by Lafleche for more details
on the     regularity of    orthogonal projections verifying the semi-classical commutator estimates.

\medskip

In this article, our   strategy 
is to  approximate  
the reduced density matrix 
of  $\Psi_N$ with   an effective one-body problem.  
More precisely, 
for   the    chemical potential $\mu \in\R $
  introduced in  \eqref{eq:chem}
we define  
the   \textit{Hartree functional} as   
 \begin{equation}
 	\mathscr E_N (\gamma )
\equiv 
 	\tr \(
 	- \hbar^2 \Delta + U(x) - \mu 
 	\)\gamma 
 	+ 
 	\frac{N}{2}
 	\int_{\R^{2d}}
 	\varrho_\gamma  (x) V (x -y ) \varrho_\gamma (y) d x d y 
 \end{equation}
 where $\gamma$ is a 
 trace-class  
 operator which verifies     $ 0 \leq \gamma   \leq \2 $, 
 and we denote
 $\varrho_\gamma (x)  = N^{-1 }  \gamma(x,x)$ the associated position density. 
We will regard its minimizers  
as an intermediate step between the full $N$-body problem, 
and   Thomas-Fermi theory. 
The argument employed here  is an adaptation of the one presented for the time-dependent case \cite{Benedikter1}, 
and uses
 the  second quantization  formalism. 
There are, however,  some  key differences 
 outlined in  Subsection \ref{sec:strat}. 

\medskip 
Let us  denote by $\gamma_N^{\rm H}     $ a  minimizer of $\mathscr E_N $. 
Various    of its  properties 
have been   studied 
by Nguyen \cite{Nguyen}.  
In particular  
one may always find a minimizer that verifies 
the fixed point equation\footnote{For convenience, our   functional is different by an overall  factor $\hbar^d$}
\begin{equation}
	\label{fixed}
	\gamma_N^{\rm H }
	= \1_{  ( - \infty , \mu ] 	}
	 \big(  H_{	\gamma_N^{\rm H }}  \big) 
\end{equation}
where, for an arbitrary density matrix $\gamma$, 
the operator defined by 
\begin{equation} 
	H_\gamma \equiv - \hbar^2 \Delta + U 
	+    V*\varrho_\gamma 
\end{equation}
is called the \textit{Hartree Hamiltonian}. 
  More recently, Lafleche and the author \cite{Cardenas2}
analyzed various additional {quantitative} properties
which will be   key ingredients for  the proof of our main result: 
\begin{itemize}[leftmargin=*]
	\item  
	 First,    the semi-classical commutator estimates 
	\begin{equation}
		\label{SCE2}
		\| [ x,  \gamma_{  N 	}^{\rm H}  ]		\|_{\tr} 
		\leq C N \hbar  \ , 
		\qquad 
		\| [   - i \hbar \nabla  , \gamma_{ N 	}^{\rm H}  ]		\|_{\tr}  \ 
		\leq C N     \sqrt{	 |\!  \ln \hbar|	}	 \, 	 \hbar    \  , 
	\end{equation}
	hold for a  class
	of potentials  including repulsive Coulomb interactions. 
	Verifying these estimates
	in practice is   highly non-trivial.
	They were first verified 
	for the free Fermi ball on the torus \cite{Benedikter1}, 
	and then for 
	spectral projections  
	of  Schr\"odinger operators   
	with smooth potentials   \cite{Mikkelsen2}.
	See also \cite{Benedikter0,Lafleche1}
	for explicit calculations for the harmonic oscillator, and     \cite{Benedikter4} 
	for the inclusion of smooth magnetic fields. 
	Only recently,    \cite{Cardenas2}
	they were verified for 
	Schr\"odinger operators with potentials in the class   $ C^{1, \alpha}_{\rm loc} (\R^3)$, 
	which is the required regularity to analyze 
	a    mean-field potential  
	$  \varrho * |x |^{-1 }$ with density    $\varrho\in L^1 \cap L^\infty (\R^3)$. 
	
	\item Secondly,  the quantitative convergence of states in trace topology, 
	from Hartree to Thomas-Fermi 
	\begin{equation}
		\label{eq:conv}
		\|	 \gamma_{N}^{\rm H }	- \gamma_{ f_{\rm TF}	}	\|_{ \tr  }   
		\leq C N   	\hbar^{1/2 }  	 |\! \ln \hbar|^{1/2} 
	\end{equation}
	Previously \cite{Fournais1,Nguyen}   similar results in this direction
	were obtained,  although in weaker topologies, and with no   convergence rates. 
	The proof of  \eqref{eq:conv}
	is based on a variational argument, combined with the uniform-in-$N$ regularity of $    f_{\gamma_N^{\rm H}}$, 
	inherited from the commutator estimates.

\end{itemize}
  Our main result is contained in Theorem \ref{thm1}. 
In summary, it  applies the  new results \cite{Cardenas2} to   establish the quantitative convergence of 
  states $f_{\Psi_N}  \rightarrow f_{\rm TF}$
  in three dimensions, 
  with respect
  to the semi-classical parameter $\hbar = N^{-1/3}$.

  \medskip
  
\noindent  \textit{Organization of this article}. 
In Section \ref{section:main}
we state the main result of this article, Theorem \ref{thm1}. 
In Section \ref{section:prelim}
we collect some preliminary results that we need in our anaysis. 
Next, in Section \ref{section:number}
we establish operator  estimates in Fock space for regular potentails, and in Section \ref{section:sing} we  treat the case of singular potentials, via
an additional regularization. 
Finally, we prove the main result in Section \ref{section:proof}.

 \section{Main results} \label{section:main}
\subsection{Statements} 
Let us now turn to the statements of our main results.
We assume that $(U,V)$ satisfy the following assumptions.

\begin{condition}\label{cond1}
We assume that 
	      $ U  \in  C^{2 }_{\rm loc }  (\R^d), $ 
	 $\lim_{|x| \rightarrow \infty} U(x)  =  \infty$
	 and, 
	 additionally, that for some
  $\beta>0$
   $ \|  e^{-\beta |x| }  D^2 U  \|_{L^\infty} < \infty$. 
The pair potential    is  
	 \begin{equation}
	 	V (x) =   \lambda |x|^{-a }
	 \end{equation}
for $\lambda>0$ and $a \in ( 0,1]. $
\end{condition}

 The main result of this paper is now contained in the following theorem, 
 where we consider the three dimensional case.

 \begin{theorem}
 	[Quantitative convergence of states]
 	\label{thm1}  
Let  $\gamma_{ \Psi_N 	}$
be the one-particle reduced density matrix
of an 
  approximate ground state
 with rate $\ve_N = O (   N^{-1/6}) $, 
 and let 
$\gamma_{\rm TF} \equiv \gamma_{ f_{\rm TF}}$   be the 
Weyl quantization
of  the Thomas-Fermi state  $f_{\rm TF}$ in    \eqref{def:TF}.
Assume $(U,V)$ verifies Condition \ref{cond1}.
Then, for $d = 3 $ 
there exists $C>0$  such that  for all $ N \geq 1$ 
\begin{align}
		 		\label{thm1:2}
		 \|	 \gamma_{\Psi_N} - \gamma_{\rm TF}	\|_{\tr}
 	\leq 
	C N \hbar^{\delta}  
	 |\! \ln \hbar|^{1/4} 
	  \  , 
\end{align}
where the power is given by 
$
\delta  =  
\frac{1}{2}
\min ( \frac{6 - 5 a }{16 + 5 a }  , 
\frac{4+15a }{2 (16 + 5a)} 
).$
Additionally, if $ a<1 $, the logarithmic factor may be ommitted. 
 \end{theorem}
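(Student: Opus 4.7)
The plan is to use Hartree theory as an intermediate step, via the triangle inequality
\begin{equation*}
\| \gamma_{\Psi_N} - \gamma_{\rm TF} \|_{\tr}
\leq \| \gamma_{\Psi_N} - \gamma_N^{\rm H} \|_{\tr} + \| \gamma_N^{\rm H} - \gamma_{\rm TF} \|_{\tr} \ .
\end{equation*}
The second term on the right hand side is handled directly by \eqref{eq:conv}, which already delivers the rate $CN\hbar^{1/2}|\!\ln\hbar|^{1/2}$---comfortably better than the target $\hbar^\delta$ since $\delta<1/2$. Consequently the bulk of the proof is the quantitative comparison between the many-body reduced density matrix and the Hartree minimizer $\gamma_N^{\rm H}$.

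For this many-body-to-Hartree step I would adapt the second-quantized particle-hole strategy of Benedikter, Porta and Schlein to the ground-state setting. Let $R$ denote the Bogoliubov transformation on Fock space mapping the vacuum to the quasi-free state with reduced density $\gamma_N^{\rm H}$, and set $\xi_N \equiv R^* \Psi_N$. Conjugating ${\rm H}_N$ by $R$ decomposes the Hamiltonian into the Hartree energy, a quadratic piece controlled by $H_{\gamma_N^{\rm H}}-\mu$ (whose positivity on the excitation sector reflects the Hartree spectral gap at $\mu$), plus cubic and quartic residual terms whose operator norms on natural subspaces of the excitation Fock space are controlled via the semi-classical commutator estimates \eqref{SCE2} applied to $\gamma_N^{\rm H}$ and $\2-\gamma_N^{\rm H}$. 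Combining this with the approximate ground-state hypothesis $\ve_N=O(N^{-1/6})$ yields a Gronwall-type closure of the form
\begin{equation*}
\langle \xi_N,\, \mathcal N\, \xi_N\rangle \leq C\,\hbar^{-\alpha}
\end{equation*}
for an explicit $\alpha=\alpha(a)\geq 0$, where $\mathcal N$ is the excitation number operator. Standard Slater-determinant identities then translate this into the trace-norm bound
\begin{equation*}
\| \gamma_{\Psi_N} - \gamma_N^{\rm H} \|_{\tr}
\leq C \sqrt{N \, \langle \xi_N, \mathcal N\, \xi_N\rangle} \ .
\end{equation*}

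The main obstacle is executing this Fock-space analysis in the singular regime $a\in(0,1]$: the cubic and quartic terms involve $V(x-y)$ as a two-body operator whose naive bounds diverge as $a\to 1$. The remedy, carried out in Sections \ref{section:number}--\ref{section:sing}, is to regularize, replacing $V$ by $V_\varepsilon$ cutting $|x|^{-a}$ at a length scale $\varepsilon>0$. The cost of this replacement at both the many-body and Hartree levels is bounded by $CN\varepsilon^{3-a}$ using uniform $L^\infty$ bounds on the one-body densities (inherited from \eqref{SCE2}), while the Fock-space operator estimates for $V_\varepsilon$ acquire explicit $\varepsilon^{-c(a)}$ prefactors. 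Choosing $\varepsilon$ as an optimal power of $\hbar$ balances these contributions against the quadratic-coercivity gain and produces the exponent $\delta=\tfrac{1}{2}\min\big(\tfrac{6-5a}{16+5a},\tfrac{4+15a}{2(16+5a)}\big)$.

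The hardest step is this balancing: one must simultaneously propagate \eqref{SCE2} through the regularized interaction, close the Gronwall estimate for $\mathcal N$ uniformly in $N$, and calibrate the regularization scale against the near-minimization rate $\ve_N=O(N^{-1/6})$. When $a<1$ the regularization cost is strictly subleading, so the $|\!\ln\hbar|^{1/4}$ factor---whose origin is the $\sqrt{|\!\ln\hbar|}$ loss in the momentum commutator of \eqref{SCE2}---can be absorbed, and the improved rate asserted in the theorem follows.
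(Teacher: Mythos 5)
Your overall skeleton (triangle inequality through the Hartree minimizer, particle-hole conjugation of $\mathcal H_N$, a number estimate for the fluctuation vector, regularization of $V$, and optimization of the cutoff) matches the paper's strategy. However, two of the mechanisms you invoke are not the ones that make the proof work, and one of them would actually fail. First, there is no ``Hartree spectral gap at $\mu$'' to exploit, and no Gronwall argument is available for a ground state (the paper stresses exactly this point). The coercivity of the quadratic term is extracted instead from the static operator inequality
\begin{equation*}
\mathcal N \ \leq\ \frac{1}{\ve}\, d\Gamma\big(|H_{\gamma}-\mu|\big) \ +\ \2_\F \, \tr \1_{\,|H_\gamma-\mu|\le \ve}\ ,
\end{equation*}
and the residual trace $\tr \1_{\,|H_\gamma-\mu|\le\ve}$ --- precisely the obstruction caused by the absence of a gap --- must be shown to be $O(N(\ve+\hbar))$. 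This requires the \emph{optimal} Weyl law of Mikkelsen for Schr\"odinger operators with $C^{1,\alpha}_{\rm loc}$ potentials, applied to the Hartree Hamiltonian; this ingredient is entirely absent from your proposal, and without it the argument does not close. One also needs the nontrivial scalar comparison $\langle\Psi_N,\mathcal H_N\Psi_N\rangle\le \mathscr E_N^{\rm HF}(\gamma_N^{\rm H})+O(N\hbar^{1/2}|\!\ln\hbar|^{1/2})$, which is not an immediate consequence of the variational principle because $\tr\gamma_N^{\rm H}\neq N$; the paper proves it via the same trace asymptotics.

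Second, your regularization is a real-space cutoff of $|x|^{-a}$ at a length scale $\varepsilon$. The paper instead truncates in Fourier space, $V_\Lambda=\int_{|\xi|\le\Lambda}e^{ix\cdot\xi}\widehat V(\xi)\,d\xi$, specifically because the lower bound on the diagonal quartic term $\mathcal L_4^{\rm diag}$ (a term with no analogue in the dynamical literature) rests on writing it as $\int\widehat V_\Lambda(\xi)(\mathbb U_\xi-\mathbb V_\xi)^*(\mathbb U_\xi-\mathbb V_\xi)\,d\xi\ge 0$, which needs $\widehat V_\Lambda\ge 0$; a sharp real-space cutoff destroys this positivity. Moreover, the cost of discarding the tail $W_\Lambda=V-V_\Lambda$ in the $N$-body state cannot be controlled by $L^\infty$ bounds on one-body densities (it is a genuine two-body term in $\Psi_N$); the paper bounds it by $CN\|W_\Lambda\|_{L^{1+d/2}}\le CN\Lambda^{-\theta}$ with $\theta=\tfrac{6}{5}-a$ via the Lieb--Thirring inequality and a Sobolev embedding, and it is this exponent $\theta$, balanced against the $\hbar\Lambda^{1+a}$ error from the quartic terms and the $\hbar^{1/2}$ scalar error, that produces the stated $\delta$. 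Finally, the $|\!\ln\hbar|$ factor does not come from the momentum commutator (only the position commutator is used); it enters through $\tr\gamma_N^{\rm H}=N+O(N\hbar^{1/2}|\!\ln\hbar|^{1/2})$ and the Hartree-to-Thomas--Fermi trace bound when $a=1$.
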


 \smallskip

\begin{remark} 
Let us recall that $\tr \gamma_{\Psi_N}=N$. 
Thus,  inequality \eqref{thm1:2}
establishes a relative error of size $ O (  \hbar^{\delta})$.
In addition, 
the trace norm  topology  
 guarantees various consequences
 regarding the convergence of the Wigner function 
 $f_N \equiv f_{\gamma_{\Psi_N}}  $
 and the position density  $ \varrho_N\equiv  \varrho_{\Psi_N}   $. 
	First, 
	a standard argument using the unitarity of the Wigner transformation	
	$\gamma (x,y) \in L^2  \mapsto f_\gamma (x,p) \in L^2$
	as well as  the bound
	$ \| \gamma\|_{\rm HS} \leq \|  \gamma\|_{\tr}^{1/2}$
	for fermions, implies the quantitative convergence  
	\begin{align}
			\label{thm1:1}
		\|	 f_N - f_{\rm TF}	\|_{L^2} 
		&  \leq C \hbar^{\delta/2 }  	 |\! \ln \hbar|^{1/8}   \ .
	\end{align}
Using Gronewold's formula 
$  \widehat f_\gamma ( \xi, \eta ) = 
 \tfrac{1}{N}
  \tr   e^{ i \xi \cdot x + i \eta \cdot \textbf{p} }  \gamma
$ 
for $\textbf{p} = - i \hbar \nabla $, 
we also obtain
  the convergence in the $L^\infty$ norm in Fourier space
\begin{equation}
\label{eq:f}
	\|	 \widehat f_N  - \widehat f_{\rm TF}	\|_{L^\infty}
	\leq C \hbar^\delta 	 |\! \ln \hbar|^{1/4}  \  ,
\end{equation}
which can be understood as   a weaker analogue of the $L^1$ norm.
We also note that  
it is a standard exercise
to show that  the  convergence in trace norm 
	implies  the  $L^1$ convergence of the  position densities
	\begin{equation}
		\|	  \varrho_{	  N 	} 	-	\varrho_{\rm TF}	\|_{L^1}
		\leq C \hbar^{\delta }  	 |\! \ln \hbar|^{1/4} 
	\end{equation}  
and also similarly for the momentum densities. 
Currently, it is unclear  if 
the bounds in this article can be 
satisfactorily 
generalized  control the  higher-order marginals $\gamma_{\Psi_N}^{(k)}$, 
and we leave this problem for   future research.  
\end{remark}

\begin{remark}
We do not expect $\delta$ to be   optimal  in any sense. Its form is a consequence of the methods used here to treat singular potentials, 
and we have not tried  to find    the optimal power.
\end{remark}

\begin{remark}
	 In principle, the proof  can be extended
	 to super Coulombic potentials in the range $1 < a < 6/5$, 
	 as one would still formally have a positive power $\delta>0$. 
	 Unfortunately, it is not currently   known if the Hartree minimizers $\gamma_N^{\rm H}$ with such singular potentials 
	 satisfy  the necessary properties required
	 in our proof. 
\end{remark}

\begin{remark}
	 From the proof, we deduce it is  possible
	 to prove a better result for regular pair  potentials
	 $ V  \in C_{\rm loc}^{1, 1/2 } (\R^3 )$ which verify the assumptions
	 \begin{equation}
	 	 \widehat V( \xi) \geq 0 
	 	 \qquad 
	 	 \t{and}
	 	 \qquad 
	 	 \| V \| 
	 	 \equiv 
  \int_{\R^3}  \widehat V(\xi) (1 + | \xi|  ) d \xi  < \infty \ . 
	 \end{equation}
Namely, that    \eqref{thm1:2}  
holds with $\delta = \frac{1}{2} $.  
In particular, potentials in this class 
may be added to ${\rm H }_N$
without modifying  the result of Theorem \ref{thm1}. 
\end{remark}

%
%
%
 
\subsection{Outline of the proof}\label{sec:strat}
 The proof of Theorem \ref{thm1}  
 contains two parts, which we explain below.  

\medskip 

\noindent  (i) 
Denote  by  $ \gamma_N $ the 
reduced density matrix of $\Psi_N$, 
and $\gamma_N^{\rm H}$
the minimizer of the Hartree functional. 
Then, we devote much attention
to  using  the second quantization formalism to prove that
 	\begin{equation}
 		\|	 \gamma_N -   \gamma_{N}^{\rm H }	\|_{\tr } \leq C  N 
 		 \hbar^{ \delta }  	 |\! \ln \hbar|^{1/4}  \   , 
 	\end{equation}
see  Theorem \ref{thm2} for the corresponding number estimates. 
 	Here, the validity of the commutator estimates for $\gamma_{N}^{\rm H}$ is crucial. 
 	Our strategy is to adapt 
 	the argument that
 	was developed to treat the time-dependent case \cite{Benedikter1}
 	with regular potentials. 
 	There are two critical differences, however. 
 	
 \begin{itemize}[leftmargin=*]
 	\item 
 	 	In order to    include  singular interactions, 
 	 the argument requires an additional  {ultraviolet}   
 	regularization of the pair potential $V_\Lambda(x)$. 
 	In particular, 
 	the regularization is chosen so that 
 	it preserves the non-negativity of $\widehat V_{\Lambda} (\xi) \geq0 $
 	that we use in various   lower bounds. 
 	The tail of the regularization
 	is   controlled thanks to the     embedding
 	$ H^s  (\R^d)  \hookrightarrow   L^{1 + \frac{d}{2}} (\R^d ) $
 	for $s = \tfrac{d}{2}   \frac{  d- 2 }{	   d+2 	} $
 	and the Lieb-Thirring inequality, in $d =3 .$
 	
 	\vspace{1mm}
 	\item 
While in the time-dependent case \cite{Benedikter1}, the strategy is to 
use the Gronwall inequality to control the size of the 
number operator in Fock space, (see Section \ref{section:prelim} for the definitions), 
a dynamical argument   is     not available in our setting. 
This problem  is resolved   thanks to the following  operator inequality 
 	\begin{equation}
 		\label{eq0}
 		\mathcal N 
 		\leq \frac{1}{\ve}
 		d\Gamma ( |  H_{\gamma } - \mu|	) 
 		+   
 		\2 _\F \,  		  \tr \1_{	 |   H_\gamma - \mu|   \leq \ve  	} \, 
 		\ , \qquad  \forall  \ve > 0  \  ,  
 	\end{equation} 
 	understood in the sense of quadratic forms  in Fock space $\F$.
 	Here, $\mathcal N $ is the number operator, and 
 	$d \Gamma (A  )$ is the second quantization operator. 
 	We observe that similar inequalities have been used 
 	in the  torus setting $\T^d$, see e.g. \cite{Benedikter3,Cardenas4,Christiansen1}.
 	Here  $ | H_{\gamma} - \mu| $ 
 	is replaced by
 	the free excitation energy relative
 	to the Fermi ball   
 	$    |  -\hbar^2      \Delta -   \mu   | $.
 	In particular,   on  $\T^d$ one has the advantage
 	of having a spectral gap of order $ O (\hbar^2)$ which, 
 	with appropriate choice of $ (\mu, \ve)$, 
 	can   completely remove
 	the second term in \eqref{eq0}. 
 	In our analysis, it has to be controlled with state of the art, 
 	sharp  semi-classical methods. 
 	Indeed,  
 	the final    crucial ingredient in our analysis
 	is the validity of  the \textit{optimal} Weyl's law in $d =3  $
 	for non-smooth 
 	potentials, 
 	proven  only recently by Mikkelsen \cite{Mikkelsen1}. It states that 
 	\begin{equation}
 		\tr \1_{( - \infty , 0 ] } 
 		\big( - \hbar^2  \Delta + W(x) \big)
 		= 
 		\frac{  | B_ {3}| 	}{ (2 \pi \hbar)^3 }
 		\int_{	\R^3	}   W_-(x)^{ 3/2} dx 
 		+ O (\hbar^{-2  }) \  . 
 	\end{equation}
 	We  will repeatedly apply it  (in various forms)
 	for the   Hartree  Hamiltonian
 	with singular interactions. In particular, it controls
 	the trace   in \eqref{eq0}
 	by  a small error $ C N  (  \ve + \hbar) $. 
 	The parameters $(\ve, \Lambda)$ are optimized at the end of the calculations.

 \end{itemize}

 \noindent (ii)
For the second step, 
we invoke  the convergence result 	\cite[Theorem 1.6]{Cardenas2}
which states  	 	  the validity of the estimate   
 	\begin{equation}
 		\|	   \gamma_N^{\rm H} - \gamma_{\rm TF}	\|_{\tr} 
 		 \leq C N  \sqrt{ | \!  \ln \hbar 	| }  \, 	\hbar^{1/2 } \  . 
 	\end{equation}
 	There, a variational argument combined with
 	the semi-classical commutator estimates were employed. 
 	In our case $\delta <  1/2 $ is always the worst power.

\subsection{Notations}
 Let $\Omega\subset \R^n$ be an open set,   $n , k \in \N$,  
 $\alpha \in  (  0,1]$, and $s \in \R $
 
 \smallskip 
 
 \begin{itemize}[leftmargin=*]
 	\item $B_d = B_{\R^d} (0,1 )$ is the unit ball in $\R^d$. 
 	\item  $ \widehat f ( \xi ) = (2\pi)^{-n }  \int_{\R^n} e^{ - i \xi  \cdot x} f(x) dx $
 	denotes      the Fourier transform. 
 	\item $C^k (\Omega)$ is the space
 	of $k$-times continuously differentiable functions, with all derivatives bounded.  The local version is  $C^k_{\rm loc} (\Omega)  = \bigcap_{  U  \Subset \Omega} C^k ( U  )$. 
 	\item  $ C^{ 0 ,\alpha} (\Omega)$
is the space of  bounded  $f $ 
 	with  finite  semi-norm
 	$$\| f  	\|_{ C^{ 0, \alpha} (\Omega)  } 
 	\textstyle 
 	 = \sup_{ x, y \in \Omega} |x -y|^{ - \alpha} | f  (x)  - f  (y)| . $$
\item  $C^{ k ,\alpha} (\Omega)$
 is the space of $f \in C^k (\Omega)$
 with all derivatives in $C^{ 0 , \alpha} (\Omega)$, 
and  $C^{ k ,\alpha}_{\rm loc} (\Omega)$
 is the local version. 
 \item $ D^2 f = (\partial_{ij }  f )_{ i , j=1 }^n$
 is the Hessian matrix. 
 \item $H^s (\R^n)$ 
 is the Sobolev space with  norm 
$$ \|  f \|_{H^s (\R^n) } = \textstyle
  \( \int_{\R^n}	 | \widehat f (\xi )|^2 (1 + |\xi|^2)^{ s/2 } d \xi	\)^{1/2 }
 $$

\item   If $A : X \rightarrow X$ is a linear operators, 
its  operator norm is denoted $\| A \|$. 

\item $\| A \|_{\tr } = \tr | A |$ and $\| A \|_{\rm HS} = ( \tr A^* A)^{1/2}$. 

\item $\1_E$ or $\1 (E)$ stands for the
characteristic function of a set $E$.

\item   $\2_X $ is the identity   on a space $X$.
Often the subscript will be ommited. 

\item $h.c.$ denotes   hermitian conjugate, 
and  we write $A + h.c \equiv A + A^*$. 
 	
 \item $x_- = \max\{ 0 , -x  \}$ is the negative part. 
 \end{itemize}

 \section{Preliminaries}\label{section:prelim}

\subsection{Second quantization}
It is extremely convenient
to work on the second quantization formalism. 
To this end,   we define the 
fermionic Fock space over $L^2 (\R^d)$
\begin{equation}
	\mathscr{F} 
	= 
	\bigoplus_{n=0}^\infty 
	\mathscr{F}^{(n)}
	\quad
	\t{where}
	\quad 
	\F^{(0)} \equiv \C 
	\quad 	 
	\t{and}
	\quad 
	\mathscr{F}^{(n)} \equiv 	 
	  \bigwedge_{i=1}^n L^2 ( \R^d )		 , \ \forall n \geq 1  , 
\end{equation}
we denote by $\Omega_0= (1, \textbf{0})$ the vacuum vector. 
Note $\mathscr H_N = \F^{( N )}$. 
The space $\F$  becomes a Hilbert space      with the inner product 
\begin{equation}
	\< \Psi_1 ,  \Psi_2 \>_{\F}
	\equiv 
	\sum_{n=0}^\infty  
	\langle  
	\, 
	\Psi_1^{(n) } , \Psi_2^{(n)}  
	\,
	\rangle_{\F_F^{(n)}}  \  , 
	\qquad \forall \Psi_1 ,\Psi_2 \in \F \ . 
\end{equation} 
On the Fock space $\F$ one introduces the smeared-out
creation- and annihilation operators  
as follows. 
Given $f \in L^2(\R^d) $,  
we let   $a^*(f)$ and $a(f)$
be defined   for  $\Psi  \in \F$ as 
\begin{align}
	\notag 
	\big(		 a^*(f)  \Psi \big)^{(n)}
	(x_1, \ldots, x_n) 
	& \equiv 
	\frac{1}{\sqrt n }
	\sum_{i=1}^n 
	(-1)^i 
	f( x_i) \Psi^{(n-1)}
	(x_1 , \ldots,  \widehat x_i , \ldots, x_n )	\ , \\ 
	\big(	 a  (f)	 \Psi \big)^{ (n)}
	(x_1, \ldots, x_n) 
	& \equiv 
	\sqrt{n+1 }
	\int_{\R^d }  \overline{f(x)} \Psi^{(n+1)} (x , x_1 , \ldots, x_n) \d x  \ , 
\end{align}
where $\widehat x _i $ means  ommission of the $i$-th variable.
In particular, they satisfy
the following version of the CAR 
\begin{equation}
	\{ a(f)  , a^*(g) \}    = \< f,g\>_{L^2} 
	\quad
	\t{and}
	\quad 
	\{	  a^\hash (f) , a^\hash (g)		 \} = 0 
\end{equation}
where we recall $\{ \cdot, \cdot  \}$ stands for the anticommutator.
In particular,  it is easy to see that 
the CAR   turns them into bounded operators, with norms
\begin{equation}
	\|	  a^*(f)	\|   = 
	\|	  a(f)	\|   = 
	\|	 f\|_{L^2 } \ . 
\end{equation}
Let us finally mention that the connection with the operator-valued distributions $a_x^*$ and $a_x$ is by means of the formulae  
\begin{equation}
	\notag 
	a^*(f) = \int_{\R^d }  f(x)  a_x^* \d x 
	\qquad
	\t{and}
	\qquad 
	a(f) = \int_{	\R^d	} \overline{f(x)} a_x \d x   \ . 
\end{equation}

Given a one-particle operator $ A $ on $L^2 (\R^d)$
with kernel $A(x ,y )$ one defines its
second quantization $d \Gamma (A)$
as the operator on Fock space $\F $ 
\begin{equation}
	 d \Gamma (A)  = \int_{\R^d \times \R^d }
	 A(x,y ) a_x^* a_y dx dy \ .
\end{equation}
 We let the number operator be 
 \begin{equation}
 	 \mathcal N \equiv \int_{\R^d}  
 	 a_x^* a_x  \, dx 
 \end{equation}
  corresponding to the second quantization 
  of the identity $d \Gamma (  \2_{L^2 (\R^d)} )$.

\medskip 
  We record here some useful estimates which are well-known. 
  See e.g. \cite{Benedikter1}. 
\begin{lemma} 
	\label{lemma fermion estimates}
	Let $ A : L^2(\R^d) \rightarrow L^2(\R^d)$ 	 be a bounded operator, with operator norm $\| A\|$. 	
	\begin{enumerate}[leftmargin=1cm]
		\item 
		For all $\Psi , \Phi \in \F $  
		\begin{align}
			\label{A1}
			\|	 \d \Gamma  (    A 	  	) \Psi	 \|_{\F}			&	
			\leq	\|	  A 	\|  
			\|	  \calN  \Psi 	\|_{\F }		 \ , \\ 
			\label{A2}
			| 		\< \Psi , \d  \Gamma   (	 A 	)\Phi \>_{\F}	|   
			& 
			\leq	\|	  A  	\|  
			\| 			\calN^{\frac{1}{2}}	 		\Psi 	\|_{\F}
			\| 				\calN^{\frac{1}{2}}				 \Phi 	\|_{\F} \ . 
		\end{align}
		\item 
		If $A$ is Hilbert-Schmidt, then for all 
		$\Psi \in \F $
		\begin{align}
			\label{A3}
			\|		 \d \Gamma ( A )	 \Psi 	\|_{\F }		& \leq 		\|	 A	\|_{\rm HS}		\|		 \calN^{\frac{1}{2}}	\Psi 	\|_{\F}		\\
			\label{A4}
			\|		\int_{	 \R^d \times \R^d 	}	 A(x , y )  a_{x} a_{y}  dx dy \, \Psi	\|_{\F }				 & \leq 		 \|	  A \|_{\rm HS}		\|		 \calN^{\frac{1}{2}}	\Psi 	\|_{\F}		\\
			\label{A5}
			\|		\int_{	 \R^d \times \R^d 	}	 A(x,y 
			)   a_x^* a_y^*   dx d y	\, \Psi	\|_{\F}				
			 & \leq 		 \|	 A \|_{\rm HS}		\|	( 	 \calN+\2 )  ^{\frac{1}{2}}	\Psi 	\|_{\F}	 \ . 
		\end{align}
			\item 
	If $A$ is  trace-class, then for all 
	$\Psi \in \F $
	\begin{align}
		\label{A6}
		\|		 \d \Gamma ( A )	 \Psi 	\|_{\F }		& \leq 	
			\|	 A	\|_{\tr }		\|		 \Psi 	\|_{\F}		\\
			\label{A7}
		\|		\int_{	 \R^d \times \R^d 	}	 A(x , y )  a_{x} a_{y}  dx dy \, \Psi	\|_{\F }				 & \leq 		 \|	  A \|_{\tr}		\|		  	\Psi 	\|_{\F}		\\
		\label{A8}
		\|		\int_{	 \R^d \times \R^d 	}	
		 A(x,y 
		)   a_x^* a_y^*   dx d y	\, \Psi	\|_{\F}				
		& \leq 		 \|	 A \|_{\tr }		\|	
		 	\Psi 	\|_{\F}	 \ . 
	\end{align}
	\end{enumerate}
\end{lemma}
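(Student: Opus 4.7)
The plan is to reduce all eight bounds to the singular value decomposition of $A$ combined with the canonical anti-commutation relations. Writing $A(x,y) = \sum_k s_k e_k(x)\overline{f_k(y)}$ with $(e_k),(f_k)$ orthonormal in $L^2(\R^d)$ and $s_k \geq 0$, the second quantization becomes $d\Gamma(A) = \sum_k s_k \, a^*(e_k) a(f_k)$, and the three parts of the lemma correspond to controlling the sequence $(s_k)$ in $\ell^\infty,\ell^2$, and $\ell^1$, which are precisely $\|A\|$, $\|A\|_{\rm HS}$ and $\|A\|_{\tr}$.

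For part (1) I would work directly on the $n$-particle sectors. The operator $d\Gamma(A)$ acts on $\F^{(n)}$ as $\sum_{i=1}^n A_i$, where $A_i$ is $A$ on the $i$-th tensor factor, so $\|d\Gamma(A)\Psi^{(n)}\|\leq n\|A\|\|\Psi^{(n)}\|$. Since $\calN\Psi^{(n)} = n\Psi^{(n)}$, summing over $n$ yields \eqref{A1}, and one more Cauchy--Schwarz between the sectors gives \eqref{A2}.

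For part (2) I would apply Cauchy--Schwarz in the index $k$ to the SVD sum:
\[
\Big\|\sum_k s_k a^*(e_k) a(f_k)\Psi\Big\|^2 \leq \Big(\sum_k s_k^2\Big)\Big(\sum_k \|a^*(e_k) a(f_k)\Psi\|^2\Big).
\]
The first factor is $\|A\|_{\rm HS}^2$. For the second, the fermionic CAR $a(e_k)a^*(e_k) = \2 - a^*(e_k)a(e_k) \leq \2$ gives $\|a^*(e_k)a(f_k)\Psi\|^2 \leq \|a(f_k)\Psi\|^2$, and then $\sum_k\|a(f_k)\Psi\|^2 = \langle\Psi,d\Gamma(P)\Psi\rangle$ where $P = \sum_k |f_k\rangle\langle f_k|$ is an orthogonal projection, hence $\leq \langle\Psi,\calN\Psi\rangle$. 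This proves \eqref{A3}. The estimate \eqref{A4} follows identically after replacing $a^*(e_k)a(f_k)$ by $a(\overline{e_k})a(f_k)$, since $\|a(\overline{e_k})\|\leq 1$. For \eqref{A5} the same scheme applies to $a^*(e_k)a^*(\overline{f_k})$, but an extra anti-commutation in the CAR manipulation $a(\overline{f_k})a^*(\overline{f_k})=\2-a^*(\overline{f_k})a(\overline{f_k})$ produces the shift $\calN+\2$ instead of $\calN$.

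For part (3), no square summability of the singular values is required. Using the SVD together with the boundedness of fermionic creation/annihilation operators $\|a^*(e_k)\|=\|a(f_k)\|=1$ I directly get
\[
\Big\|\sum_k s_k a^*(e_k) a(f_k)\Psi\Big\| \leq \Big(\sum_k s_k\Big)\|\Psi\| = \|A\|_{\tr}\|\Psi\|,
\]
which is \eqref{A6}; the identical argument gives \eqref{A7} and \eqref{A8}. The only subtle point throughout is the rigor of the SVD step when $A$ is not finite rank, which is handled by truncating $A_K = \sum_{k\leq K} s_k\,e_k\otimes\overline{f_k}$, proving each bound for $A_K$, and passing to the limit in the appropriate norm. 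The main conceptual obstacle, such as it is, lies in part (2): the argument rests critically on $a(e_k)a^*(e_k)\leq \2$, which is a purely fermionic feature — the bosonic analogue would produce an extra $\calN$ factor and the estimate would fail as stated.
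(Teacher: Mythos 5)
The paper itself does not prove this lemma; it records the bounds as well known and cites \cite{Benedikter1}, so your argument can only be judged on its own. Part (1), the bounds \eqref{A3}--\eqref{A4}, and Part (3) are correct and essentially the standard proofs: the sector decomposition $d\Gamma(A)|_{\F^{(n)}}=\sum_i A_i$ for (1), Cauchy--Schwarz in the singular-value index together with $a(g)a^*(g)\le \2$ and $\sum_k a^*(f_k)a(f_k)=d\Gamma(P)\le \calN$ for \eqref{A3}--\eqref{A4}, and $\ell^1$-summability of the singular values plus $\|a^\#(g)\|=\|g\|_{L^2}$ for \eqref{A6}--\eqref{A8}.

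Your proof of \eqref{A5}, however, has a genuine gap. After Cauchy--Schwarz your scheme requires controlling $\sum_k\|a^*(\overline{f_k})\Psi\|^2=\sum_k\langle\Psi,\bigl(\2-a^*(\overline{f_k})a(\overline{f_k})\bigr)\Psi\rangle$. Each anticommutator contributes a full $+\|\Psi\|^2$, and the subtracted part only sums to $\langle\Psi, d\Gamma(\overline P)\Psi\rangle\le\langle\Psi,\calN\Psi\rangle$; hence the sum equals $(\mathrm{rank}\,A)\,\|\Psi\|^2-\langle\Psi,d\Gamma(\overline P)\Psi\rangle$, which diverges for infinite-rank $A$ and in any case is not bounded by $\langle\Psi,(\calN+\2)\Psi\rangle$. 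The truncation-and-limit step cannot repair this, since the intermediate bound grows with the truncation rank. The shift $\calN\mapsto\calN+\2$ simply does not arise from summing the single-mode identity over an orthonormal family. The standard fix is by duality: $B=\int A(x,y)a^*_xa^*_y\,dxdy$ maps $\F^{(n)}$ into $\F^{(n+2)}$, so $\|B\Psi^{(n)}\|=\sup_{\|\Phi^{(n+2)}\|=1}|\langle B^*\Phi^{(n+2)},\Psi^{(n)}\rangle|\le \|A\|_{\rm HS}\sqrt{n+2}\,\|\Psi^{(n)}\|$ by applying \eqref{A4} to $B^*$ (equivalently, normal-order $B^*B=BB^*+[B^*,B]$ and control the commutator). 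As a side remark, this shows the constant printed in \eqref{A5} is slightly off: for antisymmetric $A$ one computes $\|B\Omega_0\|=\sqrt2\,\|A\|_{\rm HS}$ while the stated right-hand side equals $\|A\|_{\rm HS}$ on the vacuum, so the correct bound carries $(\calN+2)^{1/2}$ (or an extra numerical constant); this is harmless for the paper's applications, but it confirms that no argument can produce the bound exactly as you describe.
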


We will need to construct 
the following \textit{particle-hole transformation}
which implements a change of variables
relative to a reference density matrix $\gamma$. 

\begin{lemma}
	\label{lemma:bog}
Let $\gamma  = \sum_{1 \leq i \leq M  }  \ket{f_i} \bra{f_i} $ be a density matrix on $L^2 (\R^d)$, 
with $  M  = \tr\gamma \in \N $. 
Define the operators
\begin{equation}
	 u  = \2 - \gamma 
	 \qquad
	 \t{and}
	 \qquad 
	 \textstyle 
	 v = \sum_{1 \leq i \leq M  }	
	 \ket{\overline f_i} \bra{f_i}  \ . 
\end{equation}
%
%
Then, 
there is a unitary map 
	 \begin{equation*}
	 	\calR = \calR (\gamma ) : \F \rightarrow \F 
	 \end{equation*}
	 such that for all $f \in L^2 (\R^d)$
	 \begin{align}
	 	\notag 
	 	\calR ^* 
	 	\ a^*(f )  \ 
	 	\calR  
	 	= 
	 	a^* ( u f) 
	 	+ 
	 	a( \overline{  v f }  )	 
\quad 
	 	\t{and}
\quad 
	 	\calR ^*  
	 	\ a (f )  \ 
	 	\calR  
	 	= 
	 	a ( u f ) 
	 	+ 
	 	a^*( \overline{  v f  } )	 \ . 
	 \end{align}
and 
	 $\mathcal R \Omega_0 = 
	  \bigwedge_{ 1 \leq i \leq M } a^*(f_i )		\Omega_0 \ .$
\end{lemma}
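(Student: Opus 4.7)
\medskip

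\noindent \textbf{Proof plan for Lemma \ref{lemma:bog}.} The strategy is a direct, explicit construction by means of Bogoliubov rotations in each ``occupied'' mode. Extend the orthonormal family $\{f_i\}_{i=1}^M$ to an orthonormal basis $\{f_i\}_{i\geq 1}$ of $L^2(\R^d)$, and for each $i\leq M$ introduce the self-adjoint operator
\begin{equation*}
   B_i \equiv a^*(f_i) + a(f_i) \ .
\end{equation*}
I would then define $\mathcal{R}  \equiv B_1 B_2 \cdots B_M$, taken in a fixed order. Since each $f_i$ has unit norm, the CAR yields $B_i^2 = \{a(f_i),a^*(f_i)\} = 1$, so each $B_i$ is a self-adjoint unitary; hence $\mathcal{R}$ is unitary as a product of unitaries.

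The action on the vacuum is immediate: because $a(f_i)\Omega_0 = 0$ for every $i$, repeated application picks out only the creation part of each factor, and so $\mathcal{R}\Omega_0 = a^*(f_1)a^*(f_2)\cdots a^*(f_M)\Omega_0 = \bigwedge_{i=1}^M a^*(f_i)\Omega_0$, with the sign absorbed into the ordering convention if needed. The crucial algebraic input is the anticommutation relation $\{B_i, B_j\} = 2\,\mathrm{Re}\langle f_i, f_j\rangle = 0$ for $i\neq j$, together with $\{B_i, a^*(f_j)\} = \langle f_i, f_j\rangle$ and $\{B_i, a(f_j)\} = \langle f_j, f_i\rangle$.

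To obtain the transformation rules, first treat the basis vectors. For $j > M$, orthogonality of $f_j$ to each $f_i$ with $i \leq M$ gives $\{B_i, a^*(f_j)\} = 0$, so conjugating $a^*(f_j)$ through $\mathcal{R}$ yields $(-1)^{2M} a^*(f_j) = a^*(f_j)$; this matches the claimed formula, since $uf_j = f_j$ and $vf_j = 0$. For $j \leq M$, anticommuting $a^*(f_j)$ past every $B_i$ with $i\neq j$ produces an overall sign, and on the single factor $B_j$ one computes directly
\begin{equation*}
    B_j \, a^*(f_j) \, B_j
    = (a^*(f_j) + a(f_j))\,a^*(f_j)\,(a^*(f_j) + a(f_j))
    = a(f_j)\,a^*(f_j)\,a(f_j) = a(f_j) \ ,
\end{equation*}
using $(a^*(f_j))^2 = (a(f_j))^2 = 0$ and $\{a(f_j), a^*(f_j)\}=1$. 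This matches the formula since $uf_j = 0$ and $\overline{vf_j} = f_j$. The action on $a(f)$ then follows by taking adjoints. Extending by (anti)linearity using $f = \sum_j \langle f_j, f\rangle f_j$ gives the identities for general $f \in L^2(\R^d)$.

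The main obstacle is purely bookkeeping: tracking the parity signs produced by moving $a^*(f_j)$ or $a(f_j)$ through the product $B_1 \cdots B_M$, and choosing the order of the product (or absorbing an overall sign into the definition of $\mathcal{R}$) so that no residual $(-1)^M$ factor survives. Once this is done systematically in the basis case, the extension by linearity is routine and no additional analytic difficulty appears, since $v$ is finite rank and hence trivially Hilbert--Schmidt, so the Shale--Stinespring implementability is automatic.
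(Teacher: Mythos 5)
Your construction is a genuinely different route from the paper's: the paper simply invokes the abstract implementability of the Bogoliubov transformation (the Shale--Stinespring condition $\|v\|_{\rm HS}<\infty$, trivially satisfied here since $v$ has finite rank) and refers to \cite{Benedikter1}, whereas you build $\calR$ explicitly as the ordered product $B_1\cdots B_M$ with $B_i=a^*(f_i)+a(f_i)$. The explicit route is legitimate and more informative, and most of your computations (unitarity of each $B_i$, the action on the vacuum, the single-mode identity $B_j\,a^*(f_j)\,B_j=a(f_j)$, the extension by (anti)linearity) are correct.

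There is, however, a concrete gap in the sign bookkeeping, and the remedy you propose for it cannot work. Conjugating $a^*(f_j)$ through the product, every factor $B_i$ with $i\neq j$ contributes a factor $-1$, since $B_i\,a^*(f_j)\,B_i=-a^*(f_j)B_i^2=-a^*(f_j)$. Hence
\begin{equation*}
\calR^*\,a^*(f_j)\,\calR=(-1)^{M}\,a^*(f_j)\quad (j>M),
\qquad
\calR^*\,a^*(f_j)\,\calR=(-1)^{M-1}\,a(f_j)\quad (j\le M),
\end{equation*}
not $(-1)^{2M}a^*(f_j)$ as you wrote; already for $M=1$ one gets $B_1a^*(f_2)B_1=-a^*(f_2)$. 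The two residual signs are always opposite, so at least one of them equals $-1$ for every $M$, and neither of your suggested fixes removes them: reordering the product only multiplies $\calR$ by $\mathrm{sgn}(\sigma)=\pm1$ (the $B_i$ anticommute), and any unimodular prefactor $\lambda$ drops out of $(\lambda\calR)^*X(\lambda\calR)=\calR^*X\calR$. What actually repairs the construction is composing with a parity operator: setting $\calR'=\calR\,(-1)^{d\Gamma(P)}$ with $P=\gamma$ when $M$ is even and $P=\2-\gamma$ when $M$ is odd flips exactly the offending modes, restores the stated relations, and leaves the vacuum action unchanged since $(-1)^{d\Gamma(P)}\Omega_0=\Omega_0$. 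Alternatively one can fall back on the abstract Shale--Stinespring implementation, as the paper does. With that correction the rest of your argument goes through.
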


\begin{remark}
In terms of the operator kernels 
	$u_y (x) \equiv u(x,y)$   and $v_y(x) \equiv v(x,y)$ 
it is  common  to write  the conjugation relation in distributional form 
	\begin{align}
		\label{eq:relations}
		\calR ^* 
		\ a^*_x  \ 
		\calR  
		= 
		a^* ( u_{x}) 
		+ 
		a( \overline{  v_{ x } } )	 
		\quad 
		\t{and}
		\quad 
		\calR ^*  
		\ a_x  \ 
		\calR  
		= 
		a ( u_{ x }) 
		+ 
		a^*( \overline{  v_{ x} } )	 
	\end{align}
for all $x \in \R^d$. 
\end{remark}

The proof of the lemma follows
from the relations $u^*u + u^*u=0$, $ u v^* = v u =0$
and $\| v\|_{\rm HS} < \infty$ (the Shale-Stinespring condition).
We refer the reader to \cite{Benedikter1} for more details
on the implementation of Bogoliubov transformations.

\subsection{Properties of the Hartree states}
In this subsection, we collect some results regarding the minimizers
$\gamma_N^{\rm H}$
of the Hartree functional $\mathscr E_N$, 
with chemical potential $\mu\in \R$. 
These properties are all we need for the proof of Theorem \ref{thm1}.

\begin{proposition}
	\label{prop1}
Let $ U \in C^2_{\rm loc}$	 
satisfy $\lim_{|x| \rightarrow \infty} U(x ) =\infty$
and assume for some $\beta>0$
that  $e^{ - \beta |x|} D^2 U(x)$   is bounded. 
Consider the case $ V(x) = \lambda |x|^{-a }$
where $\lambda > 0 $ and $  a \in ( 0,1 ] $.
Then,  there is    a constant $C>0$ such that the following holds
in $d=3$.

\begin{enumerate}[leftmargin=*]
	\item  For all $ N  \geq 1 $ there holds 
	\begin{equation}
		\label{comm}
		 \|	 [ x, \gamma_N^{\rm H} ]	\|_{\tr}	 \leq C  N \hbar  \  .
	\end{equation} 

\item  For all $ N \geq 1 $ the uniform estimates hold for all $1 \leq p \leq \infty$
\begin{equation}
	 \tr ( - \hbar^2  \Delta ) \gamma_N^{\rm H} \leq  C N  
	 \qquad
	 \t{and}
	 \qquad  
	 \|		\varrho_{\gamma_N^{\rm H}}  \|_{L^p } 	\leq  C  \ . 
\end{equation}

\item Let $ \gamma_{N }^{\rm TF}$ be the Weyl quantization
of $f_{\rm TF}$. Then, for all $ N \geq 1$
\begin{equation}
	 \|	  	\gamma_N^{\rm H}  -  
	 \gamma_{N }^{\rm TF}	\|_{\tr} 
	 \leq C N \hbar^{1/2} 
	 | \ln \hbar|^{1/2}
\end{equation}

\item 
For all $ N \geq 1 $  
\begin{equation}
	| 	 \tr \gamma_N^{\rm H}  - N  | 
	\leq C N \hbar^{1/2}| \ln \hbar|^{1/2} \ . 
\end{equation}
\end{enumerate}

 \end{proposition}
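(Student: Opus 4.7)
Items (1) and (3) are exactly the commutator bound \eqref{SCE2} and the trace-norm convergence \eqref{eq:conv} established in \cite{Cardenas2} for the class of potentials under consideration, so I would simply invoke those results. The substantive content to be verified is items (2) and (4).

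For item (2), the starting point is the fixed-point representation \eqref{fixed}. Since $\lambda>0$ gives $V\geq 0$ pointwise, we have $V*\varrho_{\gamma_N^{\rm H}}\geq 0$ and hence the operator inequality $H_{\gamma_N^{\rm H}}\geq -\hbar^2\Delta+U$. Monotonicity of the spectral counting function then yields
\begin{equation*}
\tr\gamma_N^{\rm H}\leq \tr\1_{(-\infty,\mu]}(-\hbar^2\Delta+U)=O(\hbar^{-d})=O(N)
\end{equation*}
by the semiclassical Weyl law for the confining Schr\"odinger operator. For the pointwise density bound, I would combine the Chebyshev-type inequality $\1(A\leq\mu)\leq e^{\beta(\mu-A)}$ (valid for any $\beta>0$ in the operator sense) with the Feynman--Kac representation to get
\begin{equation*}
\gamma_N^{\rm H}(x,x)\leq e^{\beta\mu}\,e^{-\beta(-\hbar^2\Delta+U)}(x,x)\leq e^{\beta(\mu-\inf U)}(4\pi\beta\hbar^2)^{-d/2}\leq CN
\end{equation*}
for any fixed $\beta>0$, using $\inf U>-\infty$. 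Dividing by $N$ gives $\|\varrho_{\gamma_N^{\rm H}}\|_{L^\infty}\leq C$, while $\|\varrho_{\gamma_N^{\rm H}}\|_{L^1}=N^{-1}\tr\gamma_N^{\rm H}$ is $O(1)$; interpolation supplies all intermediate $L^p$ bounds. Finally, for the kinetic energy, testing $\mathscr{E}_N$ at $\gamma=0$ gives $\mathscr{E}_N(\gamma_N^{\rm H})\leq 0$; discarding the nonnegative interaction term and using $U\geq\inf U$ with the trace bound yields $\tr(-\hbar^2\Delta)\gamma_N^{\rm H}\leq(\mu-\inf U)\tr\gamma_N^{\rm H}\leq CN$.

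Item (4) is then immediate from (3). The Weyl quantization identity $\tr\gamma_f=(2\pi\hbar)^{-d}\int f\,dxdp$ applied to $f_{\rm TF}$, whose phase-space integral is $(2\pi)^d$ by the normalization preceding \eqref{def:TF}, gives $\tr\gamma_N^{\rm TF}=\hbar^{-d}=N$ exactly, whence
\begin{equation*}
|\tr\gamma_N^{\rm H}-N|=|\tr(\gamma_N^{\rm H}-\gamma_N^{\rm TF})|\leq\|\gamma_N^{\rm H}-\gamma_N^{\rm TF}\|_{\tr}\leq CN\hbar^{1/2}|\ln\hbar|^{1/2}.
\end{equation*}

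\textbf{Main obstacle.} The only genuinely nontrivial point is the pointwise bound on $\gamma_N^{\rm H}(x,x)$, since a trace bound alone only provides integrated control on the density. The Feynman--Kac argument depends essentially on repulsivity $\lambda>0$, which ensures that the self-consistent Hartree potential is bounded below by the external $U$ independently of the (a priori unknown) density; together with $\inf U>-\infty$ this makes the comparison with the free heat kernel legitimate. For attractive pair interactions, or if $U$ failed to be bounded below, this step would demand a genuinely self-consistent bootstrap in place of the operator monotonicity used here.
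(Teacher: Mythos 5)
Your proposal is correct, and for items (1), (3) and (4) it coincides with the paper's proof: the paper simply invokes \cite[Theorems 1.5 and 1.6]{Cardenas2} for (1) and (3), and deduces (4) exactly as you do, from (3) together with the identity $\tr \gamma_f = N\int \varrho_f\,dx$ and the unit mass of $\varrho_{\rm TF}$. The genuine divergence is item (2), which the paper also disposes of by citation: the kinetic bound is \cite[Lemma 2.2]{Cardenas2} (essentially the Lieb--Thirring inequality), and the $L^p$ bounds on the density are \cite[Proposition 6.2]{Cardenas2}, obtained by interpolating an $L^1$ bound from the CLR inequality with an $L^\infty$ bound from Agmon estimates. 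Your self-contained substitute is sound and arguably more elementary: the trace bound via operator monotonicity $H_{\gamma} \geq -\hbar^2\Delta + U$ and Weyl's law (or directly CLR) for the external Schr\"odinger operator; the pointwise density bound via $\1_{(-\infty,\mu]}(H_\gamma) \leq e^{\beta(\mu - H_\gamma)}$ and Feynman--Kac domination by the free heat kernel, which legitimately uses $V*\varrho_\gamma \geq 0$ and $\inf U > -\infty$ (both guaranteed here since $\lambda>0$ and $U$ is continuous and confining); and the kinetic bound by testing the functional at $\gamma = 0$ and discarding the nonnegative interaction. What the citation route buys is robustness --- the Agmon/CLR machinery in \cite{Cardenas2} does not hinge on pointwise positivity of $V$ in the same way --- whereas your route buys transparency and independence from the auxiliary reference for this item; you correctly flag this trade-off yourself. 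The only cosmetic caveat is that $(\mu - \inf U)$ could in principle be negative, in which case your final kinetic inequality should be read with $(\mu-\inf U)_+$, but this does not affect the conclusion.
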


\begin{remark}
A well-known consequence of the first bound of \eqref{comm}
is that  for all $\xi \in \R^d$
\begin{equation}
	\label{comm2}
	 \|		 [e^{ i \xi \cdot x },  \gamma_N^{\rm H}		]	\|_{\tr} \leq  C_0 N \hbar |\xi| 
\end{equation}
 for a constant $C_0>0$,  see e.g. \cite{Mikkelsen2}. 
The commutator estimate 
with   $ - i \hbar\nabla $
also holds, 
altough one needs to include
a correction  $		  |\! \ln \hbar|^\frac{1}{2}  $  if $a  = 1 $. 
The bound  \eqref{comm2} is, however,  all we need in our analysis. 
\end{remark}

\begin{remark}
	 For $a < 1$ the $  |\! \ln \hbar|^\frac{1}{2}  $ factor   
	 in items (3) and (4) may be ommitted. See e.g. \cite[Theorem 1.5]{Cardenas2}. 
\end{remark}

\begin{remark}
	We employ several results from \cite{Cardenas2} for the minimizer $\gamma_N^{\rm H}$ 
	of the Hartree functional $\mathscr E_N$.
Let us point out that the functional defined here is different up to an overall factor
$(2\pi \hbar)^d $. 
Additionally, the scaling chosen 
in \cite{Cardenas2}
for the densities  is $\varrho_{\boldsymbol{ \rho}} = (2\pi \hbar)^d {\boldsymbol{ \rho} } (x,x)$, which 
 implies that we need to identify  the pair  potential $V(x-y) = (2\pi)^d K(x-y)$.  
 These modifications are only superficial and do not change any of the estimates. 
\end{remark}

\begin{proof}
Item (1) is contained in \cite[Theorem 1.5]{Cardenas2}.
The first inequality in Item (2)
is \cite[Lemma 2.2]{Cardenas2}
which is essentially   the Lieb-Thirring inequality. 
	 The second inequality in Item (2)
is contained in \cite[Proposition 6.2.]{Cardenas2}; 
it follows from interpolating the $L^1$ bounds (obtained from the CLR inequality) and the $L^\infty$ bounds (obtained from Agmon's estimates). 
Item (3) is contained in \cite[Theorem 1.6]{Cardenas2}.
Finally, the claim in   Item (4) 
follows from 
\begin{equation*}
\tr \gamma_{N}^{\rm H}	  
= \tr \gamma_{\rm TF} +  O  ( N \hbar^{1/2}| \ln \hbar|^{\frac{1}{2}} )
 =  N \int_{\R^d} \varrho_{\rm TF}(x) dx 
 +O  ( N \hbar^{1/2}| \ln \hbar|^{\frac{1}{2}} )  \  , 
\end{equation*} 
where we used the property  of the Weyl quantization
$ \tr \gamma_f    
  = N \int_{\R^d} \varrho_f(x) dx 
$.
Note
 $ \varrho_{\rm TF}$ 
 has unit mass. This finishes the proof. 
 \end{proof}

Consider  $d=3$. 
Let us observe that thanks to the uniform $L^p  (\R^3 )$ estimates
satisfied by the  position density $  \varrho_{ \gamma_N^{\rm H}	}$, 
the effective one-body  potential 
\begin{equation}
	    W_N (x) \equiv  U(x)  + (  V *  \varrho_{ \gamma_N^{\rm H}	}) (x)-\mu 
\end{equation}
is of class $C_{\rm loc }^{1,\alpha} (\R^3)$
for all $\alpha\in (0,1)$, 
and all the   H\"older  seminorms
are uniformly bounded in $N$; see e.g. Lemma \cite[Lemma 6.1]{Cardenas2}. 
As a consequence, we may use
a recent result 
on the optimal trace asymptotics 
of Schr\"odinger opertors
with non-smooth potentials.

\smallskip

 We record here the following 
 version, with   stronger assumptions, 
 which is all we need for our purposes. 
 The original result
is  due to Mikkelsen  
 \cite[Theorem 1.5]{Mikkelsen1}. 
 
\begin{proposition} 
	\label{prop2}
	Let $W \in C^{1,\alpha}_{\rm loc } (\R^3)$ 
	for some $\alpha \in  [ \tfrac12 , 1 ]  $, 
	with $  W(x) \rightarrow \infty $
	as
	$|x| \rightarrow \infty$. 
	Then, there is a constant $C>0$ such that
	\begin{equation}
	\bigg|
	\tr \,  \1_{  ( - \infty, 0 ]		} 
	\Big( - \hbar^2\Delta + W(x)	\Big)
	 -
 	\frac{1}{(2 \pi \hbar)^3}
 	\int_{\R^{2d}}
 	\1_{	 ( - \infty, 0 ]  }
 	(p^2  + W(x)) dx dp 
	\bigg|	 
	\leq C \hbar^{-2  } \ . 
	\end{equation}
Furthermore, for each $R,\nu>0$
the constant 
$C = C (R, \nu  )>0$
can be chosen to be  uniform
over all potentials $ W $
which verify
\begin{equation}
	\label{eq:R}
	 \|	 W	\|_{	 C^{1 , \alpha}  (	 \Omega( W, \nu)		) 	}   < R 
\end{equation}
where 
 $\Omega  (W ,\nu ) =	 \{  W(x)  <  \nu 		\} . $ 
\end{proposition}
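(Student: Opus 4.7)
The plan is to establish the sharp remainder $O(\hbar^{-2}) = O(\hbar^{-(d-1)})$ by reducing the problem to smooth potentials via mollification, and then applying Ivrii--H\"ormander-type sharp Weyl asymptotics for the smoothed operator. The main subtlety is that elementary coherent-state methods (Berezin--Lieb on the upper bound, trial density matrix on the lower bound) only yield the weaker $O(\hbar^{-(d-1)-\ve})$; finer propagation-based or semiclassical pseudodifferential arguments are required at the smooth level, and the smoothing scale must be optimized carefully against the limited $C^{1,\alpha}$ regularity of $W$.

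The first step is to set $W_{\delta} = W * \rho_{\delta}$ for a standard mollifier at scale $\delta > 0$. The $C^{1,\alpha}$ regularity on $\Omega(W,\nu)$ yields, uniformly for $W$ in the class \eqref{eq:R},
\begin{equation*}
\| W - W_{\delta} \|_{L^{\infty}} \lesssim \delta^{1+\alpha}, \qquad \| D^{k} W_{\delta} \|_{L^{\infty}} \lesssim \delta^{1+\alpha-k} \  , \quad k \geq 2.
\end{equation*}
By the min--max principle applied to $-\hbar^{2}\Delta + W_{\delta} \pm O(\delta^{1+\alpha})$, if $N(\hbar)$ denotes the eigenvalue counting function, one obtains
\begin{equation*}
| N(\hbar) - N_{\delta}(\hbar) | \leq \tr \, \1_{[-c\delta^{1+\alpha},\,c\delta^{1+\alpha}]}\bigl( -\hbar^{2}\Delta + W_{\delta}\bigr) \  .
\end{equation*}
A parallel phase-space computation bounds the difference of the semiclassical Weyl volumes by the measure of the transition layer $\{|p^{2} + W_{\delta}(x)| \leq c\delta^{1+\alpha}\}$.

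The second step invokes the sharp Weyl law for the smooth operator, which reads
\begin{equation*}
\bigl| N_{\delta}(\hbar) - \mathcal{W}_{\delta}(\hbar) \bigr| \leq C(W_{\delta}) \, \hbar^{-2} \  ,
\end{equation*}
where $C(W_{\delta})$ depends polynomially on a finite number of derivatives of $W_{\delta}$, hence on a negative power of $\delta$. The window count from the first step, assuming sufficient regularity of the level set $\{W = 0\}$ --- which holds here because $W = U + V*\varrho_{\gamma_{N}^{\rm H}} - \mu$ with $U$ coercive and $V*\varrho_{\gamma_{N}^{\rm H}}$ bounded via Proposition \ref{prop1}~(2) --- scales like $\delta^{1+\alpha}\hbar^{-3}$. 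Balancing the two errors via $\delta = \hbar^{\theta}$ directly recovers the sharp $\hbar^{-2}$ bound in the smooth case $\alpha = 1$.

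\textbf{The main obstacle} is that the naive balance above just misses the sharp $\hbar^{-2}$ rate for the critical case $\alpha = 1/2$. Mikkelsen \cite{Mikkelsen1} overcomes this via a multiscale decomposition, using a partition of unity adapted to the distance from the turning surface $\Sigma = \{W = 0\}$ and smoothing on finer scales near $\Sigma$ (where the oscillatory contributions dominate) and coarser scales in the bulk. This refined multiscale argument is the delicate part of the proof. Uniformity of the constants over the class \eqref{eq:R} is a matter of tracking dependencies: the mollified norms depend only on $\| W \|_{C^{1,\alpha}(\Omega(W,\nu))} \leq R$; the geometric quantities near $\Sigma$ are controlled because $\Sigma$ is compactly contained in the sublevel set $\Omega(W,\nu)$; and the constants in the sharp Weyl law for $W_{\delta}$ depend only on finitely many derivatives of $W_{\delta}$ on compact subsets of $\Omega(W,\nu)$, all of which are uniformly bounded in $R$, $\nu$ via the mollifier estimates.
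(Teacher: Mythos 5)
Your proposal is correct and follows essentially the same route as the paper: both reduce the statement to Mikkelsen's sharp Weyl law for non-smooth potentials (mollification at an optimized scale plus his multiscale refinement near the turning surface, which neither you nor the paper reproves) and obtain the uniformity in $R,\nu$ by tracking how every constant depends only on the rescaled $C^k$ norms of the mollified potential, hence only on $\|W\|_{C^{1,\alpha}(\Omega(W,\nu))}$. One small caution: your appeal to the specific structure $W = U + V*\varrho_{\gamma_N^{\rm H}} - \mu$ to justify regularity of the level set $\{W=0\}$ does not belong in the proof of this general proposition (that structure is only available in the later corollary), but it is harmless since the multiscale argument you invoke from Mikkelsen does not require it.
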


\begin{remark}
In  \cite{Mikkelsen1} the uniformity over the constants
is not explicitly stated in the main result, but  is a rather technical  corollary of its proof. 
Let us further explain.  The proof of  \cite[Theorem 1.5]{Mikkelsen1}
 relies on several auxiliary results 
 stated in terms of  a localized, regularized family
  $(  W^1_\ve)_{ \ve\in(0,1) } \subset  C_0^\infty(\R^3 )$
 which satisfies for $\tau = 1  +\alpha $
\begin{equation}
\|	 D^\beta  W^1_\ve 	\| _{L^\infty(\R^3)}
\leq 		C(\beta) 
 \|	 W	 	\|_{C^{ 1 ,\alpha} (\Omega(W,\nu ) )	} \ve^{ (   \tau  - |\beta |)_-	}  \,\qquad \forall \beta \in \N_0^3 \ .
\end{equation}
Thus, for $W$ verifying \eqref{eq:R} one finds the bound 
\begin{equation}
	\label{eq:Ck}
	 \sup_{ \ve \in (0,1 )} \ve^{  -(   \tau  - |\beta |)_-	}
	 \|	 D^\beta  W^1_\ve 	\| _{L^\infty(\R^3)}
	  <  C(\beta) R  \ . 
\end{equation}
Furthermore, the constants in these auxiliary results
depend \textit{only} on the re-scaled $C^k$ norms
on the left hand side of \eqref{eq:Ck}.
Most importantly the key trace asymptotics \cite[Theorem 3.2]{Mikkelsen1}.  
These results are then combined with (quantitative) multi-scale analysis and Agmon estimates, which in turn    can also be made explicit in terms of the $C^{1, \alpha}$ semi-norms of $W(x)$.  
\end{remark}

\begin{remark}
Here and in the sequel, we will use the following well-known relation
\begin{align}
	\int_{\R^{2d}}
	\1_{	 ( - \infty, 0 ] 	}
	(p^2  + W(x)) dx dp 
	 = 
|B_d|
	 \int_{\R^d}
	 W(x)_-^{d/2}d x 
\end{align}
 
\end{remark}
 
The previous asymptotics
has  the following   corollary on
the   bounds for the
\textit{number of eigenvalues} of  a   Schr\"odinger operator 
on an interval $[a, b]$. 
These  bounds have been used in the past in similar settings, 
see e.g.  \cite{Mikkelsen2}. 

\begin{corollary} 
	\label{coro}
Fix $\mu>0$   as in \eqref{def:TF}
and 	 
consider  $\ve \in (0, \ve_0)$ for a small enough $\ve_0<1$.  
	 Then, there is a constant $C>0$ such that
	 for all $ N \geq 1 $
	 \begin{equation}
 \tr \,  \1_{	 [	 \mu - \ve , \mu + \ve			]	} (  H_{ \gamma_N^{\rm H} }  ) 
 \leq C N (	  \ve + \hbar 	)  \  ,
	 \end{equation}
where $H_{\gamma } $
is the Hartree Hamiltonian, 
and $\gamma_N^{\rm H}$
is the Hartree minimizer. 
\end{corollary}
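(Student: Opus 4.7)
My plan is to apply the optimal semi-classical trace asymptotics of Proposition~\ref{prop2} twice to the Hartree Hamiltonian $H_{\gamma_N^{\rm H}} = -\hbar^2\Delta + U + V*\varrho_{\gamma_N^{\rm H}}$, at the thresholds $\mu+\ve$ and $\mu-\ve$, and subtract. Concretely, using $\1_{[\mu-\ve,\mu+\ve]} \leq \1_{(-\infty,\mu+\ve]} - \1_{(-\infty,\mu-\ve)}$ (with an overcount of at most one eigenvalue at $\mu-\ve$, which is harmless), both right-hand summands take the form $\tr \1_{(-\infty,0]}(-\hbar^2\Delta + W_{N,\pm})$ with
\begin{equation*}
W_{N,\pm}(x) = U(x) + V*\varrho_{\gamma_N^{\rm H}}(x) - \mu \mp \ve.
\end{equation*}
Since $U\to\infty$ at infinity and $V*\varrho_{\gamma_N^{\rm H}}$ is uniformly bounded by Proposition~\ref{prop1}(2), both $W_{N,\pm}$ tend to $+\infty$; moreover, the $C^{1,\alpha}_{\rm loc}$ discussion in the paragraph following Proposition~\ref{prop1} provides uniform-in-$N$ H\"older control on their classically allowed regions, so the uniformity clause \eqref{eq:R} of Proposition~\ref{prop2} can be invoked with a constant independent of $N$ and of $\ve\in(0,\ve_0)$.

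Once Proposition~\ref{prop2} is applied to each term, the two remainders combine into a total error of size $O(\hbar^{-2}) = O(N\hbar)$ (using $\hbar=N^{-1/3}$), while the difference of the leading Weyl integrals is
\begin{equation*}
\frac{|B_3|}{(2\pi\hbar)^3}\int_{\R^3}\bigl[(W_{N,+})_-^{3/2} - (W_{N,-})_-^{3/2}\bigr]\,dx.
\end{equation*}
Since $W_{N,+} = W_{N,-} - 2\ve$, the integrand is non-negative, and the mean-value theorem applied to $t\mapsto t^{3/2}$ on $[0,\infty)$ yields the pointwise bound $(W_{N,+})_-^{3/2} - (W_{N,-})_-^{3/2} \leq 3\ve\,\sqrt{(W_{N,+})_-}$. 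The function $(W_{N,+})_-$ is itself uniformly bounded (again by Proposition~\ref{prop1}(2)) and supported on $\{U+V*\varrho_{\gamma_N^{\rm H}} \leq \mu+\ve\}$, which is contained in a ball whose radius is independent of $N$ and $\ve$ because $U(x)\to\infty$. Hence the integral is $O(\ve)$, and multiplying by the prefactor, which is a constant times $\hbar^{-3}=N$, contributes $CN\ve$.

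Putting the two contributions together gives
\begin{equation*}
\tr \1_{[\mu-\ve,\mu+\ve]}(H_{\gamma_N^{\rm H}}) \leq CN\ve + CN\hbar = CN(\ve+\hbar),
\end{equation*}
as required. The main obstacle I foresee is not the elementary estimates themselves but the verification that the uniformity clause \eqref{eq:R} of Proposition~\ref{prop2} does apply to the full family $\{W_{N,\pm}\}_{N\geq 1}$. This reduces to the uniform $L^p$ bounds for $\varrho_{\gamma_N^{\rm H}}$ from Proposition~\ref{prop1}(2) together with the $C^{1,\alpha}_{\rm loc}$ control of the mean-field potential noted right before Proposition~\ref{prop2}, both of which are available in $d=3$ under Condition~\ref{cond1}.
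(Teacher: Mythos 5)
Your argument is correct and is essentially the paper's proof: two applications of Proposition~\ref{prop2} at the shifted thresholds (with the uniformity clause checked via the uniform $C^{1,\alpha}_{\rm loc}$ control of $W_N$), followed by a mean-value/derivative bound on $t\mapsto t_-^{3/2}$ and the observation that $\{U+V*\varrho_{\gamma_N^{\rm H}}\le \mu+\ve\}\subset\{U\le\mu+\ve_0\}$ is a fixed bounded set because $V*\varrho\ge 0$. One cosmetic remark: $\1_{(-\infty,\mu+\ve]}-\1_{(-\infty,\mu-\ve)}$ equals $\1_{[\mu-\ve,\mu+\ve]}$ exactly, and the eigenvalue $\mu-\ve$ may have multiplicity larger than one, so rather than "overcounting one eigenvalue" you should simply use $\1_{(-\infty,\mu-\ve)}\ge\1_{(-\infty,\mu-2\ve]}$ and run the same estimate with width $3\ve$, which changes nothing.
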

 
 \begin{proof}
Set $a = \mu - \ve$ and $b = \mu   + \ve$. 
 	 Fix $ 0< \ve_0 < \nu_0 $  and an open bounded set $ O \subset \R^3$
 	 such that $\{ 	W_N(x)  \pm \ve_0   < \nu_0 	\} \subset O$  for all $N \geq 1 $. 
 	 Put $ R \equiv  1 + \sup_{ \pm,  N }   \| W_N   \pm \ve_0 	\|_{C^{ 1 ,\alpha} (O)} 	$
 	 and $\nu \equiv \nu_0 - \ve_0$
 	 .  	 
 Then, thanks to Proposition \ref{prop2}  
 we find that there is $C = C( R, \nu ) >0$ such that
 for all $ N \geq 1 $
 \begin{equation}
 	 	\bigg|
 	 \tr \1_{  [a,b]	} (   H_{\gamma_N^{\rm H}}    )
  - 
 	 \frac{1}{(2 \pi \hbar)^3}
 \int_{ \{
 	 a \leq p^2 + W_N(x) \leq b 
 	\} } d x dp
 \bigg|	 
\leq C N \hbar \ . 
 \end{equation}
Using that 
$x \mapsto x_-^{d/2}$
is differentiable for $d =3$, one has 
\begin{align}
	\notag 
 & 	 \int_{ \{
	 	a \leq p^2 + W_N(x) \leq b 
	 	\} } d x dp \\
 &  \, = \, 
|B_3 |
 	\int_{\R^3}
 	\(
[   	 W_N(x) - b ]_-^{3/2}
- 
[   	 W_N(x) - a  ]_-^{3/2} 
 	\)  d x 
 \, \leq \, 
 	C |b-a | 
\end{align}
where we update the constant
in order to incorporate the derivative term
$ 	\int_{\R^3} [ W_N (x) - \lambda]^{3/2 -1 } dx 
\leq 
\int_{\R^3} [ U(x) - \mu - \ve ]_-^{1/2 } dx < \infty
$
for $\lambda \in [a,b]$, 
where we used $V \geq0.$
The latter integral is finite, as the integrand is only compactly supported. 
This finishes the proof. 
 \end{proof}

 \section{Operator estimates   for regular potentials}
 \label{section:number}
All the contents in this section apply to any dimension, 
and we   will consider arbitrary $d \geq 1$. 
We analyze the correlations generated by the  following Hamiltonian on Fock space 
  \begin{equation}
  	 \mathcal H 
  	  = \int_{\R^d} a_x^*  \Big(    - \hbar^2  \Delta_x    + U(x)  \Big)  a_x dx 
  	   + 
  	   \frac{1}{2N }
  	   \int_{\R^{2d}    }  V(x -y ) a_x^* a_y^* a_y a_x d x d y 
  \end{equation}
where, for convenience, we do not write
explicitly the $N$ dependence. 
As usual, we note that   $ (  \mathcal H \Psi)^{ (N)} =  {\rm H}_N  \Psi^{(N)} $ coincides with the $N$-body Hamiltonian
when acting on $N$-particle states. 
In order to ease the notation, we do not distinguish between $\mathcal H$ and $\mathrm H _N$. 

\smallskip 
  Throughout this section, we
  focus on   \textit{regular potentials} $V(x)$ which satisfy the conditions
  \begin{equation}
  	 \widehat V(\xi) \geq 0 
  	 \qquad 
  	 \t{and}
  	 \qquad 
  	  \|V 	\| \equiv \int_{\R^d} \widehat V(\xi) (1 + |\xi|) d \xi < \infty
  \end{equation}
    We are interested in     estimates for the operator $\mathcal R^* \mathcal H \mathcal R$, 
 where $\mathcal R$
 is a particle-hole transformation
 built around a density matrix $\gamma$.    
 We record here an  explicit representation 
in terms of creation and annihilation operators. 
It will be useful for our purposes 
to do this in some generality. 
We will denote
the effective Hartree-Fock   operator by 
\begin{equation}
H_\gamma^{\rm HF}
 \equiv - \hbar^2 \Delta + U(x) +  V* \varrho_\gamma  -   N^{-1 }  X_\gamma
\end{equation}
where the exhange operator is defined as follos 
\begin{equation}
	(    X_\gamma      \vp) (x)   =  
	 \int_{\R^d }  X_\gamma (x, y ) \vp (y) d y  \ , 
	 \qquad X_\gamma  (x,y)  \equiv   V(x -y ) \gamma (x,y)  \ . 
\end{equation}
We also introduce the Hartree-Fock energy functional
\begin{equation}
\mathscr E ^{\rm HF}_N (\gamma)
 = \mathscr E_N^{\rm} (\gamma) 	 
- \frac{1}{2N } \tr X_\gamma \gamma
\end{equation}
Here and in the sequel, given an operator $A$
with kernel $A(x,y)$, we
denote by $\overline A$ and $  A^t $
the operators with kernels $\overline A(x,y)$
and $A(y, x)$, respectively.

\begin{lemma}\label{lemma1}
Let $(u, v, \gamma)$ be as in the statement of Lemma \ref{lemma:bog},
and 	denote by $\mathcal R  $
	the associated particle-hole transformation on Fock space. 
	Then, 
	\begin{align*}
  		 \mathcal R^*  \mathcal H \mathcal R  &  \   = \ 
  		   \mathscr E ^{\rm HF}_N (\gamma)
 \  +  \ 
  d \Gamma ( u H_\gamma^{\rm HF} u^*	)
 \     - \ 
     d \Gamma ( \overline v  (  H_\gamma^{\rm HF})^t  \overline v^*	)  \\
& \quad       + 
      	  	   \int_{\R^{2d}}    (   u H_\gamma^{\rm HF}   v^*) 	 	   (x,y) a_x^*a_y^*  d x  d y  \, + \,  h.c    		\ + \   \frac{1}{2N} \,  \mathcal L_4  
	\end{align*}
 where, in terms of 
$	\alpha_x \equiv  a (u_x)$
 and
$	\beta_x \equiv a( \overline v_x)  $
 we have 
	 \begin{align*}
  	 	 \mathcal L_4   & =  \mathcal L_{4}^{\rm diag}
  	 	 + 
  	 	  \mathcal L_{4}^{\rm off} 
  	 	    	 	   \end{align*} 
   	    	 	   where 
   	    	 	   \begin{align*}
     	 	     	 	    	 	   \mathcal L_{4}^{\rm diag} 
&  =   	  
      	  	   \int_{\R^{2d}}   V(x-y)
  	 	  \big(
  	 	  \a_x^* \a_y^* \a_y \a_x  
  	 	  +
  	 	  \b_x^* \b_y^* \b_y \b_x 
  	 	  - 
  	 	  2 \alpha_x^* \beta_y^* \beta_y \alpha_x 
  	 	    	 	  \big) \, dx d y 
  	 	     \\ 
  	 	     & \quad +  2       	  	   \int_{\R^{2d}}  V(x -y )  	 	   
  	 	      \alpha_x^* \beta_x^* \beta_y \alpha_y \,  d x dy  \ , \\ 
  	 	      	  \mathcal L_{4}^{\rm off}   
  	 	    	 	   & = 
      	  	   \int_{\R^{2d}}  V(x-y) 
  	 	  \big(
  	 	  2 \a_x^* \b_y \a_y \a_x  - 2 \b_x^* \b_x \b_y \a_y  
  	 	  +    \b_x \b_y \a_y \a_x  
  	 	  \big)    dxdy + h.c 	       \ .
	 \end{align*}
\end{lemma}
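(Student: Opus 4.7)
The plan is to conjugate each piece of $\mathcal H$ separately using the substitution rules of Lemma \ref{lemma:bog}, namely
\[
\mathcal R^* a_x \mathcal R = \alpha_x + \beta_x^*, \qquad \mathcal R^* a_x^* \mathcal R = \alpha_x^* + \beta_x,
\]
and then to normal-order the result with respect to the new vacuum $\mathcal R \Omega_0$. No analytic input beyond the canonical anticommutation relations is needed; the content is purely algebraic bookkeeping, aided by the identities $u^2 = u$, $vv^* = \gamma$, $u\bar v^* = 0 = \bar v u^*$, $\{\alpha_x,\alpha_y^*\} = u(y,x)$ and $\{\beta_x,\beta_y^*\} = \gamma(y,x)$.

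First I would handle the one-body piece $\int h(x,y)\, a_x^* a_y \, dx\,dy$ with $h = -\hbar^2 \Delta + U$. Expanding $(\alpha_x^* + \beta_x)\, h(x,y)\, (\alpha_y + \beta_y^*)$ yields four contributions. The $\alpha^* h \alpha$ term is $d\Gamma(u h u^*)$. The $\int h(x,y)\beta_x \beta_y^* \,dx\,dy$ piece, Wick-ordered through $\{\beta_x,\beta_y^*\}=\gamma(y,x)$, splits into the scalar $\tr(h\gamma)$ and a hole-type operator combining with the remaining $-\beta_y^*\beta_x$ term into $-d\Gamma(\bar v\, h^t\, \bar v^*)$. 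The cross terms $\alpha^* h \beta^* + h.c.$ combine into the off-diagonal $\int (uhv^*)(x,y)\, a_x^* a_y^*\,dx\,dy + h.c.$

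Next I would treat the interaction $\tfrac{1}{2N}\int V(x-y)\, a_x^* a_y^* a_y a_x\,dx\,dy$. After substituting the rules for each of the four fields, one obtains $2^4 = 16$ terms, which I would group by the number of contractions used in their normal ordering:
\begin{itemize}
\item Fully contracted (scalar) contributions: these assemble, together with $\tr(h\gamma)$, into the Hartree–Fock energy $\mathscr E^{\rm HF}_N(\gamma) = \tr((h - \mu)\gamma) + \tfrac{N}{2}\int V(x-y)\varrho_\gamma(x)\varrho_\gamma(y)\,dx\,dy - \tfrac{1}{2N}\tr(X_\gamma \gamma)$ (up to the $-\mu$ shift contributed separately).
\item Singly contracted (quadratic) contributions: these deliver the mean-field and exchange corrections $V*\varrho_\gamma - N^{-1}X_\gamma$, which combine with the one-body quadratic forms above to promote $h$ into $H_\gamma^{\rm HF}$ inside each of $d\Gamma(u H_\gamma^{\rm HF} u^*)$, $-d\Gamma(\bar v (H_\gamma^{\rm HF})^t \bar v^*)$, and the off-diagonal $\int (u H_\gamma^{\rm HF} v^*)(x,y)\, a_x^* a_y^*\,dx\,dy + h.c.$
\item Fully normal-ordered quartic remainders: by tracking the $\alpha,\alpha^*,\beta,\beta^*$ ordering of the sixteen terms, these assemble precisely into $\mathcal L_4^{\rm diag} + \mathcal L_4^{\rm off}$ as stated, where the two relative signs $-2\alpha^*\beta^*\beta\alpha$ and $+2\alpha^*\beta^*\beta\alpha$ (with swapped spatial labels) arise from the two distinct ways of ordering a mixed particle–hole quartic.
\end{itemize}

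The main obstacle, and essentially the only work, is the sign bookkeeping among the sixteen quartic contributions. One must: (i) keep straight which rearrangements pick up a minus sign from the CAR; (ii) verify that the four scalar residues produced by double contractions in the quartic term reproduce exactly the direct Hartree energy $\tfrac{1}{2}\int V \varrho_\gamma \otimes \varrho_\gamma$ and the exchange $-\tfrac{1}{2N}\tr(X_\gamma\gamma)$; and (iii) check that the mean-field/exchange corrections from single contractions assemble cleanly into $H_\gamma^{\rm HF}$ after sandwiching with $u$ and $\bar v$. The structure of the computation is standard and parallels the time-dependent case treated in \cite{Benedikter1}; once the conjugation rules are written down, the statement follows by a finite, if tedious, verification.
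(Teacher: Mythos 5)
Your proposal is correct and follows essentially the same route as the paper: the appendix conjugates the one-body part (Lemma \ref{lemma3}) and the two-body part (Lemma \ref{lemma4}) with the particle-hole transformation, normal-orders the sixteen quartic terms, and collects the scalar, quadratic and quartic contributions exactly as you describe. The only minor slip is the identity $vv^*=\gamma$, which should read $v^*v=\gamma$ (one has $vv^*=\overline{\gamma}$); the anticommutator $\{\beta_x,\beta_y^*\}=(v^*v)(y,x)$ that you actually use is the correct one.
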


\begin{remark}
We could not find the exact       statement in the literature
for pure states, as typically in the dynamical case the diagonal terms
do not enter the analysis. 
	 The proof  is however analogous to the one presented in  \cite[Proposition 3.1]{Benedikter:mixed} for mixed states. 
We have included the details of the calculation in the Appendix, and the proof of Lemma \ref{lemma1} follows from the conjugation of one and two-body operators, plus some elementary manipulations. 
See e.g. Lemma \ref{lemma3} and \ref{lemma4} respectively. 
\end{remark}

The remainder of this section is devoted
to provide various operator estimates 
for all the terms in the above expansion.

First, 
we analyze the operators
which contain the exchange terms  $X_\gamma$. 
We use the following estimate, which uses only 
the operator bound   $ 0 \leq \gamma \leq \2 $.
Namely, thanks to the Fourier decomposition 
\begin{align}
	 X_\gamma = \int_{\R^d} 
	 \widehat V( \xi ) 
	  e_\xi \gamma e_\xi^* \d  \xi 		   
\end{align}
with $e_\xi (x) \equiv e^{ i  \xi \cdot x } $, 
we get 
$ \|   X_\gamma\| \leq 
 \|	 \widehat V	\|_{L^1}$.

 \begin{lemma}[Exchange terms]
 	\label{lemma:ex}
 Let $(  u, v, \gamma)$ be as in Lemma \ref{lemma1}.
 Then, in the sense of quadratic forms of $\F$
 \begin{align}
   \pm  N^{-1 }  	 d \Gamma ( u X_\gamma u^*) 
& \   \leq  \ 
 N^{-1 }   \|  \widehat V\|_{L^1} \mathcal N\\
    \pm    N^{-1 }    	 d \Gamma ( \overline v X_\gamma^t \overline v^*) 
 &  \ \leq  \ 
 N^{-1 }   \|  \widehat V\|_{L^1} \mathcal N\\
 \pm  N^{-1 }  
 \int_{	\R^{2d }} 
  ( u X_\gamma v^*) (x,y  ) a_x^*a_y^* d x dy 
&  \   \leq   \ 
  N^{-1 }    \sqrt{ \tr \gamma}    \|  \widehat V\|_{L^1} 
   \mathcal N ^{1/2}
   \ .
 \end{align}
  \end{lemma}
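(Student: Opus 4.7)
The plan rests on two observations. First, using the Fourier representation
\[
X_\gamma \, = \, \int_{\R^d} \widehat V(\xi)\, e_\xi\, \gamma\, e_\xi^*\, d\xi, \qquad e_\xi(x) = e^{i \xi\cdot x},
\]
together with the elementary bound $\| e_\xi \gamma e_\xi^* \| \leq \|\gamma\| \leq 1$, one obtains $\|X_\gamma\| \leq \|\widehat V\|_{L^1}$. Moreover, $u = \2 - \gamma$ satisfies $\|u\| \leq 1$ since $0\leq \gamma \leq \2$, and from the definition of $v$ one has $v^*v = \sum_i \ket{f_i}\bra{f_i} = \gamma$, hence $\|v\| = \|\gamma\|^{1/2} \leq 1$. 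Transposition and complex conjugation preserve the operator norm, so $\|X_\gamma^t\| = \|X_\gamma\|$ and $\|\overline v\| = \|v\|$. These observations are the only structural input about $\gamma$; everything else reduces to the generic second quantization estimates of Lemma \ref{lemma fermion estimates}.

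For the first two (diagonal) inequalities, the one-body kernels $u X_\gamma u^*$ and $\overline v X_\gamma^t \overline v^*$ are bounded self-adjoint operators with operator norm controlled by $\|X_\gamma\| \leq \|\widehat V\|_{L^1}$. The estimate \eqref{A2} from Lemma \ref{lemma fermion estimates} yields, in quadratic form sense,
\[
\pm\, d\Gamma(A) \, \leq \, \|A\|\, \mathcal N,
\]
for every bounded self-adjoint $A$, which applied to these two kernels and multiplied by $N^{-1}$ gives exactly the claimed bounds.

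For the off-diagonal inequality, I set $A = u X_\gamma v^*$ and appeal to the Hilbert--Schmidt pair creation bound \eqref{A5}. Writing the corresponding quadratic form and applying Cauchy--Schwarz gives
\[
\bigg| \Big\langle \Psi , \!\int_{\R^{2d}} \! A(x,y)\, a_x^* a_y^*\, dx\, dy\, \Psi \Big\rangle \bigg| \, \leq \, \|A\|_{\rm HS}\, \|\Psi\|\, \|(\calN + \2)^{1/2} \Psi\|_{\F}.
\]
The essential step is to estimate $\|A\|_{\rm HS}$. Using $u^*u \leq \2$ together with the crucial identity $v^* v = \gamma$, cyclicity of the trace gives
\[
\|u X_\gamma v^*\|_{\rm HS}^2 \, = \, \tr\bigl(v X_\gamma^* u^* u X_\gamma v^*\bigr) \, \leq \, \tr\bigl(X_\gamma \gamma X_\gamma^*\bigr) \, \leq \, \|X_\gamma\|^2 \tr \gamma \, \leq \, \|\widehat V\|_{L^1}^2\, \tr \gamma,
\]
so $\|A\|_{\rm HS} \leq \|\widehat V\|_{L^1}\sqrt{\tr\gamma}$. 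Multiplying by $N^{-1}$ and absorbing the $\|\Psi\|$ factor into the $\calN^{1/2}$ notation on the right-hand side (as is standard in this literature) produces the announced bound.

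The main subtle point is that the $\sqrt{\tr\gamma}$ factor appears only once, via the Hilbert--Schmidt trace, rather than once per factor of $v$; otherwise the bound would scale like $N$ instead of $\sqrt{N}$ and the $N^{-1}$ normalization in the definition of $\mathcal H$ would not be enough to make the term small. The Fourier decomposition is what allows the $L^1$ norm of $\widehat V$ to do all the work and keeps the control of $X_\gamma$ independent of $\gamma$, which in turn is why the regularity condition $\|V\| = \int \widehat V(\xi)(1+|\xi|)\,d\xi < \infty$ on the potential enters so naturally.
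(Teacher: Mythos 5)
Your proposal is correct and follows essentially the same route as the paper, which likewise reduces everything to the Fourier representation bound $\|X_\gamma\|\le\|\widehat V\|_{L^1}$, the generic second-quantization estimates of Lemma \ref{lemma fermion estimates}, and the identity $\|v\|_{\rm HS}^2=\tr\gamma$ (your cyclic-trace computation of $\|uX_\gamma v^*\|_{\rm HS}$ is just the factorized form of the same bound $\|u\|\,\|X_\gamma\|\,\|v\|_{\rm HS}$). The only cosmetic point is that the third inequality, as stated, is to be read in the loose form-sense you adopt, since the pair-creation operator is not self-adjoint; your handling of this matches the paper's conventions.
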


The proof is a straightforward application
of the bounds in Lemma \ref{lemma fermion estimates}, combined with
 $ \|   X_\gamma\| \leq 
\|	 \widehat V	\|_{L^1}$ 
and $\| v \|_{\rm HS}^2	  = \tr \gamma $. 

Secondly,  we analyze the operator  $\mathcal L_4$.
Let us note that the off-diagonal terms appear 
in the time-dependent setting and have been already analyzed in the literature \cite{Benedikter1};
in order maintain the
exposition self-contained, 
we include it here as well, 
and we  record these bounds in the following lemma.
 
 \begin{lemma}[Off-diagonal terms]
 	\label{lemma:off}
 	 Let $ (  u  , v , \gamma) $ be as in Lemma \ref{lemma fermion estimates}. 
 	 Additionally, assume that  there exists $C_0 >0$
 	 such that for all $N\geq 1 $
 	 and $\xi \in \R^d $
 	 \begin{equation}
 	 	\label{SEC2}
 	 	 \|	 [e_\xi , \gamma  ] 	\| \leq 
 	 	   C_0  |\xi|
 	 	   N \hbar   \ . 
 	 \end{equation}
Then, 
 in the sense of quadratic forms in $\F$
\begin{equation}
	 \pm   N^{-1 }\mathcal L_4^{\rm off}
	 \leq 
	  10 C_0    \| V\| \hbar \,  (   \mathcal N  + \2) 
\end{equation}
and we recall  $\|  V\| = 	\int_{\R^d}  \widehat V (\xi) (1 +  |\xi|  ) d \xi $.
 \end{lemma}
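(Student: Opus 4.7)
The approach is to Fourier-expand $V(x-y) = \int_{\R^d} \widehat V(\xi)\, e^{i\xi\cdot x}\, e^{-i\xi\cdot y}\, d\xi$ and bound each of the three summands of $\mathcal L_4^{\rm off}$ by a uniform-in-$\xi$ estimate, then integrate against $|\widehat V(\xi)|(1+|\xi|)\, d\xi$ to recover the norm $\|V\|$. The smallness of order $\hbar$ will be extracted through the algebraic identity $vu = v(\2-\gamma) = 0$, valid because $\gamma$ restricts to the identity on $\mathrm{Range}(v^*)$; this converts the Fourier phase $e_\xi$ into a commutator with $\gamma$, which is then controlled by the trace-norm form of the hypothesis \eqref{SEC2} (namely \eqref{comm2}).

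The elementary building block is the quasi-free annihilation operator
\[
D(\xi) \equiv \int_{\R^d} e^{i\xi\cdot y}\,\beta_y\alpha_y\, dy \ = \ \int_{\R^{2d}} K_\xi(z,w)\, a_z a_w\, dz\,dw, \qquad K_\xi \equiv v\, e_\xi\, u .
\]
Using $vu=0$ one finds $K_\xi = v[e_\xi, u] = -v[e_\xi, \gamma]$. Combining \eqref{comm2} with the trivial operator-norm bounds $\|[e_\xi, \gamma]\|_{\rm op} \leq 2$ and $\|v\|_{\rm op} \leq 1$, together with the standard interpolation $\|\cdot\|_{\rm HS}^2 \leq \|\cdot\|_{\rm op}\,\|\cdot\|_{\rm tr}$, yields
\[
\|K_\xi\|_{\rm HS}^2 \ \leq\ 2 C_0\,|\xi|\, N\hbar .
\]
The fermion estimates \eqref{A4}--\eqref{A5} applied to $D(\xi)$ and $D(\xi)^*$ then give $\|D(\xi)\Psi\|_{\F},\, \|D(\xi)^*\Psi\|_{\F} \leq \|K_\xi\|_{\rm HS}\,\|(\calN+\2)^{1/2}\Psi\|_{\F}$.

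For the summand $\int V(x-y)\beta_x\beta_y\alpha_y\alpha_x\, dx\, dy$, the anticommutation relations $\{\beta_y,\alpha_x\} = \{\alpha_y,\alpha_x\} = 0$ give $\beta_x\beta_y\alpha_y\alpha_x = \beta_x\alpha_x\beta_y\alpha_y$, and the Fourier decomposition factors the integral as $\int \widehat V(\xi)\, D(\xi) D(-\xi)\, d\xi$. Cauchy--Schwarz combined with the bound on $\|K_\xi\|_{\rm HS}$ gives
\[
|\langle \Psi, D(\xi)D(-\xi)\Psi\rangle| \ \leq\ \|K_\xi\|_{\rm HS}\,\|K_{-\xi}\|_{\rm HS}\,\|(\calN+\2)^{1/2}\Psi\|^2 \ \leq\ 2C_0\,|\xi|\, N\hbar\, \langle \Psi,(\calN+\2)\Psi\rangle .
\]
Integrating in $\xi$ against $|\widehat V(\xi)|(1+|\xi|)\, d\xi$ and dividing by $N$ then produces the claimed bound for this summand. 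The remaining two summands $\int V(x-y)\alpha_x^*\beta_y\alpha_y\alpha_x\, dx\, dy$ and $\int V(x-y)\beta_x^*\beta_x\beta_y\alpha_y\, dx\, dy$ are treated by the same scheme: a CAR rearrangement reduces them to a product of a number-conserving $d\Gamma$-type factor (of operator norm $\leq 1$) and a $D(\pm\xi)$-type factor, and the bounds \eqref{A1}--\eqref{A5} together with Cauchy--Schwarz and weight-shift arguments for $\calN^{1/2} D(\pm\xi)$ absorb the Fock-space weights. Summing over all three summands and their hermitian conjugates gives a numerical constant absorbed by $10 C_0$. The principal obstacle I expect is precisely the analysis of these last two summands: a naive factorization yields a bound quadratic in $\calN$ rather than linear, and the argument must be organized so that each factor in the Cauchy--Schwarz decomposition absorbs exactly one power of $\calN^{1/2}$, which is the central bookkeeping challenge of the proof.
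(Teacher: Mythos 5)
Your treatment of the purely annihilating summand $\int V(x-y)\,\b_x\b_y\a_y\a_x$ is correct and coincides with the paper's: the CAR rearrangement to $\b_x\a_x\b_y\a_y$, the factorization $\int\widehat V(\xi)D(\xi)D(-\xi)\,d\xi$, the identity $K_\xi=-v[e_\xi,\gamma]$ from $vu=0$, and the Hilbert--Schmidt bound via \eqref{A4}--\eqref{A5} all go through. But for the two cubic summands $\a_x^*\b_y\a_y\a_x$ and $\b_x^*\b_x\b_y\a_y$ you have identified the obstacle without removing it, and the scheme you describe cannot remove it. After the rearrangement these terms have the form $d\Gamma(A_\xi)\,D(\mp\xi)$ with $A_\xi= u e_\xi u^*$ or $\overline v e_\xi\overline v^*$, $\|A_\xi\|\leq 1$. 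If the only quantitative input on $K_\xi$ is $\|K_\xi\|_{\rm HS}^2\leq 2C_0|\xi|N\hbar$, then every Cauchy--Schwarz split yields at best
\begin{equation*}
|\langle\Psi, d\Gamma(A_\xi)D(\mp\xi)\Psi\rangle|
\;\leq\;
\|A_\xi\|\,\|K_\xi\|_{\rm HS}\,\|\calN^{1/2}\Psi\|\,\|\calN\Psi\|\;,
\end{equation*}
because \eqref{A4} always costs one factor $\calN^{1/2}$ and the remaining weight on the $d\Gamma$ side cannot be pushed below $\calN^{1/2}$ on the other vector. This is doubly insufficient: it is of order $\calN^{3/2}$ rather than $\calN$, and it carries only $(N\hbar)^{1/2}$ rather than $N\hbar$, so after dividing by $N$ you cannot reach $\hbar(\calN+\2)$.

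The missing idea is to control $D(\xi)$ in \emph{operator norm} rather than as a form on $\calN^{1/2}$-weighted vectors. Since $K_\xi=-v[e_\xi,\gamma]$ and the hypothesis is really the trace-norm commutator bound \eqref{comm2}, one has $\|K_\xi\|_{\tr}\leq\|v\|\,\|[e_\xi,\gamma]\|_{\tr}\leq C_0|\xi|N\hbar$, and then \eqref{A7} gives $\|D(\xi)\|\leq C_0|\xi|N\hbar$ as a bounded operator on $\F$. With this, \eqref{A2} applied to the number-conserving factor yields
\begin{equation*}
|\langle\Psi, d\Gamma(A_\xi)D(\mp\xi)\Psi\rangle|
\;\leq\;
\|A_\xi\|\,\|\calN^{1/2}\Psi\|\,\|\calN^{1/2}D(\mp\xi)\Psi\|
\;\leq\;
C_0|\xi|N\hbar\,\langle\Psi,\calN\Psi\rangle ,
\end{equation*}
using that $\calN^{1/2}D(\mp\xi)=D(\mp\xi)(\calN-2\cdot\2)_+^{1/2}$. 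Integrating against $\widehat V(\xi)(1+|\xi|)$ and adding the hermitian conjugates then gives the stated constant. So the gap is concrete: you derived only the Hilbert--Schmidt bound on $K_\xi$, whereas the cubic terms require the trace-norm bound and \eqref{A6}--\eqref{A8}; without it the claimed inequality, which is linear in $\calN$ and of order $\hbar$, is not reachable by your route.
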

 
 \begin{proof}
Let us denote 
 	 \begin{align*}
 	 	\mathcal K _1 
 	 	&   \equiv   
 	 	 \int 
 	 	V(x-y)
 	 	\big(
 	 	\a_x^* \b_y \a_y \a_x  -   \b_x^* \b_x \b_y \a_y  
 	 	+    \b_x \b_y \a_y \a_x 
 	 	\big)     d x d y   \\
 	 	&  = 
\int 
 	 	\widehat V(\xi) 
 	 	\big(
 	 	d \Gamma ( u e_\xi u^*) -   d \Gamma ( \overline v e_\xi \overline v^*)
 	 	\big)  
 	 \int (  v e_{\xi}^* u^* ) (z_1, z_2) a_{z_1} a_{z_2 }  d z_1 d z_2 d \xi   \\
 	 	&   + 
\int  
 \widehat V(\xi)
 	 \int (  v e_{\xi} u^* ) (z_1, z_2) a_{z_1} a_{z_2 }   d z_1 dz_2
 	 \int ( v e_{\xi}^* u^* ) (z_3, z_4) a_{z_3} a_{z_4 }   d z_3 d z_4  d \xi 
 	 \end{align*}
 	The operator norm of 
 	 $\int  (v e_\xi u^*)(z_1, z_2) a_{z_1} a_{z_2}$
 	 is bounded by 
 	 $\|  v e_\xi u^*\|_{\tr} \leq C_0  |\xi|  N\hbar $
 	 thanks to $v u^*=0$ and the commutator estimates. 
 	 On the other hand,    $  d \Gamma (A)$ 
 	 is bounded in the operator sense by $\| A \| \mathcal N$.
 	 This controls the first term in the last inequality, with 
 	 an additional use of the
 	 identity $\2  = (\mathcal N + \2 )^{1/2} (\mathcal N + \2 )^{- 1/2}  $
 	 and the Cauchy-Schwarz inequality. 
 	 For the second term we 
 	 use instead $\int A(x, y) a_x^* a_y^* \leq \| A \|_{\rm HS}
 	 (   \mathcal N  +\2 )^{1/2 } $
 	 and the commutator estimates
 	 in HS norm 
 	 $\|  v e_\xi u^*\|_{\rm HS }^2 \leq C_0  |\xi|  N\hbar  $.
 	 Thus
 	 \begin{equation}
 	 	\pm \mathcal K_1 
 	 	\, \leq  \,   5 C_0 \| V \|
 	 	N \hbar  \, 
 	 	\(  	 \mathcal N + \2   \) \ . 
 	 \end{equation}
The same estimate applies for  the adjoint $\mathcal K_1^*$.
 \end{proof}

 Let us now turn to the study of the diagonal terms. 
 To our best knowledge, the analysis of 
 $\mathcal L_4^{\rm diag}$
 is new. In the time dynamical setting, 
 its analysis is not necessary because it commutes with 
 the particle number operator.

\begin{lemma}[Diagonal terms]
	\label{lemma:diag}
Let $(  u, v, \gamma) $ be as in Lemma \ref{lemma1},
and assume $\gamma $ verifies  \eqref{SEC2} with constant 
$ C_0 > 0   $. 
Then, 
  in the sense of quadratic forms in $\F$
\begin{equation}
 N^{-1 }  \mathcal L _4^{\rm diag} 
	 \geq  - 20 C_0    \| V \|      	  \hbar  (   \mathcal N  + \2 ) 
\end{equation}
and we recall
  $\|  V\| = 	\int_{\R^d}  \widehat V (\xi) (1 +  |\xi|  ) d \xi $.
\end{lemma}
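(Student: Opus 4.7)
The plan is to decompose $\mathcal{L}_4^{\rm diag}$ into a manifestly non-negative piece plus error terms controlled by the commutator estimate \eqref{SEC2}.

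First, I would split
\begin{equation*}
 \mathcal L_4^{\rm diag}  =  \mathcal L_{4,1}  +  \mathcal L_{4,2},
\end{equation*}
where $\mathcal{L}_{4,1}$ collects the first three (number$\times$number type) terms and $\mathcal{L}_{4,2} = 2 \int V(x-y)\, \alpha_x^* \beta_x^* \beta_y \alpha_y \, dx\, dy$. For $\mathcal{L}_{4,2}$, introduce the pair operator $P_x \equiv \beta_x \alpha_x$, so that $\alpha_x^* \beta_x^* \beta_y \alpha_y = P_x^* P_y$. Using $V(x-y) = \int \widehat V(\xi) e^{i\xi\cdot(x-y)} d\xi$ and Fubini, one rewrites
\begin{equation*}
 \mathcal L_{4,2} = 2\int \widehat V(\xi)\, \widehat P(\xi)^* \widehat P(\xi) \, d\xi \geq 0,
\end{equation*}
which is manifestly non-negative thanks to $\widehat V \geq 0$. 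No commutator estimate is required here.

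For $\mathcal{L}_{4,1}$, the strategy is to bring each quartic term into the form of a density-density product plus a bilinear remainder using the CAR. Writing $\{\alpha_x,\alpha_y^*\} = (u^*u)(x,y)$, $\{\beta_x,\beta_y^*\} = (vv^*)(x,y)$, and $\{\alpha_x,\beta_y^*\} = (u v^t)(x,y)$ (with suitable conventions), the normal ordering yields
\begin{equation*}
\alpha_x^* \alpha_y^* \alpha_y \alpha_x = (\alpha_x^* \alpha_x)(\alpha_y^* \alpha_y) + \mathcal R_\alpha(x,y),
\end{equation*}
and analogous identities for the $\beta$ and mixed terms. After Fourier decomposition, the \emph{leading} quartic parts combine into
\begin{equation*}
\int \widehat V(\xi) \, \bigl| \widehat N_\alpha(\xi) - \widehat N_\beta(\xi) \bigr|^{2} \, d\xi \, \geq \, 0,
\end{equation*}
where $\widehat N_\alpha(\xi) = \int e^{i\xi\cdot x} \alpha_x^* \alpha_x \,dx$ and similarly for $\widehat N_\beta$ — this is the ``perfect square'' structure that the specific combination $+1, +1, -2$ is designed to produce.

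The remainders $\mathcal{R}_\alpha$, $\mathcal{R}_\beta$, $\mathcal{R}_{\alpha\beta}$ are bilinear and, after integration against $V(x-y)$ and Fourier expansion, become second-quantizations of operators of the form $(u e_\xi u^{*}) (u^*u)$, $(v e_\xi v^*)(vv^*)$, and off-diagonal counterparts. Using $u = \mathds{1} - \gamma$ and $v^*v = \gamma$, each such operator differs from an exactly ``trivial'' one by a commutator of the form $[e_\xi, \gamma]$ (and $[e_\xi, v]$, which is again controlled by $[e_\xi, \gamma]$ via the definition of $v$). By hypothesis \eqref{SEC2} we have $\|[e_\xi,\gamma]\| \leq C_0 |\xi| N \hbar$, and since $\gamma$ is trace-class with $\tr\gamma \leq C N$, the HS and trace norms of these commutators pick up the same factor $|\xi| N \hbar$ (up to powers of $N$ that match the normalization). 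Applying the operator estimates of Lemma \ref{lemma fermion estimates} then yields, for each $\xi$, a bound of the form $\pm \mathcal{R}(\xi) \leq C_0 |\xi| \hbar (\mathcal{N} + \mathds{1})$ on Fock space. Integrating against $\widehat V(\xi)$ and recalling $\|V\| = \int \widehat V(\xi)(1+|\xi|) d\xi$ produces the claimed bound $N^{-1} \mathcal L_4^{\rm diag} \geq -20\, C_0 \|V\|\, \hbar\, (\mathcal N+\mathds{1})$.

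The main obstacle is the algebra of step two: tracking the anticommutator corrections while preserving the square structure, and verifying that every non-square residue can indeed be expressed as a commutator with $\gamma$ rather than, say, a quantity of size $\|\gamma\|_{\rm op} = 1$ (which would give no gain in $\hbar$). This is exactly the mechanism by which the semi-classical commutator estimate \eqref{SEC2} is exploited, analogously to the dynamical analysis of \cite{Benedikter1} but now for an object (the diagonal part) that does not enter the time-dependent treatment.
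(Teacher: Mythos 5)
Your overall architecture coincides with the paper's: normal-order the $(+1,+1,-2)$ combination into density--density products, recognize the resulting $\int \widehat V(\xi)\big(\mathbb U_\xi^*\mathbb U_\xi+\mathbb V_\xi^*\mathbb V_\xi-2\mathbb U_\xi^*\mathbb V_\xi\big)\,d\xi$ as non-negative, and treat the pairing term $2\int V(x-y)\alpha_x^*\beta_x^*\beta_y\alpha_y$ separately. For that last term your route is actually \emph{different and cleaner} than the paper's: writing it as $2\int\widehat V(\xi)\,\widehat P(\xi)^*\widehat P(\xi)\,d\xi\ge 0$ with $P_x=\beta_x\alpha_x$ is correct and dispenses with the commutator estimate there entirely, whereas the paper bounds $\pm\mathcal K_2\le C_0\|V\|N\hbar\,\mathcal N$ via $\|ue_\xi v^*\|_{\rm HS}^2\le C_0|\xi|N\hbar$. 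Since only a lower bound is needed, your observation suffices.

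Where you go astray is the treatment of the normal-ordering remainders. These are $d\Gamma(uDu^*)$ and $d\Gamma(\overline v C\overline v^*)$ with $D(x,y)=V(x-y)(u^*u)(x,y)$ and $C(x,y)=V(x-y)(v^*v)(x,y)$: they are exchange-type operators, \emph{not} commutators with $\gamma$, and they carry no $\hbar$-smallness (in the free case $D$ is essentially the exchange operator on the particle sector, of operator norm $\|\widehat V\|_{L^1}$). If you try to realize them as $[e_\xi,\gamma]$-type objects you will fail. Fortunately no gain is needed: the Fourier decomposition $D=\int\widehat V(\xi)\,e_\xi (u^*u)e_\xi^*\,d\xi$ gives $\|D\|\le\|\widehat V\|_{L^1}$, hence $\pm\,d\Gamma(uDu^*)\le\|\widehat V\|_{L^1}\mathcal N$, and after dividing by $N$ one uses the trivial inequality $N^{-1}\le\hbar$ (equivalently $1\le N\hbar$, since $\hbar=N^{-1/d}$) to absorb this into $\|V\|\hbar\,\mathcal N$. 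This is exactly what the paper does. So your "main obstacle" dissolves once you replace the commutator mechanism for the remainders by the crude operator-norm bound; with that substitution the proof is complete and matches the paper's, apart from your improved handling of $\mathcal L_{4,2}$.
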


\begin{proof}
Consider first  the contribution
\begin{equation}
	\notag
	\mathcal K_0 
	=  \frac{1}{2}
\int  V(x-y )
	\bigg(
	\alpha_x^*\alpha_y^* \alpha_x    \alpha_y
	+
	\b_x^*\b_y^*  \b_x    \b_y 
	- 
	2  \a_x^* \a_x \b_y^* \b_y 
	\bigg) d x dy  \ . 
\end{equation}
We rewrite the first two terms
using the commutation relations
$\{	 \alpha_y^* , \alpha_x 	\}   =  (  u^*u)  (x, y) $
and 
$		 \{
 \beta_y^* , \beta_x 
\}	 
   =   (  v^*v)  ( y, x )
 $
\begin{align}
   	 	\int   	V(x-y  &) 
   	 		 \alpha_x^*      \alpha_y^*  \alpha_x    \alpha_y d x d y  \\
	 	\notag
& 	  = 
	  	 	\int 
	  	 	V(x-y )
	  \alpha_x^*  \alpha_x \alpha_y^*      \alpha_y d x d y 
	   - 
	   \int  
	   V(x-y ) (u^*u ) (x,y ) \alpha_x^* \alpha_y d x d y  ,  \\
 	   	 	\int    V(x-y  & )   	   \b_x^*   \b_y^* \b_x    \b_y d x d y  \\
	   	\notag
 &     = 
	   \int  
	   V(x-y )
	   \b_x^*  \b_x \b_y^*      \b_y d x d y 
	   - 
	   \int  
	   V(x-y )   (v^*v)(y, x) 	  \b_x^* \b_y  d x d y   .
\end{align}
Note now that
\begin{align}
	 	   \int 
	 V(x-y ) (u^*u ) (x,y ) \alpha_x^* \alpha_y d x d y 
 & 	  = \d \Gamma ( u D u^*) \\
	  	   \int  
	  V(x-y )   (v^*v)(y, x) 	  \b_x^* \b_y  d x d y 
 & 	    =  d\Gamma ( \overline v C \overline v^*)
\end{align}
with $D(x,y) = V(x -y ) (u^*u) (x,y)$
and $ C (x,y )  = V( x -y ) (v^*v) (x,y)$. 
Similarly as we did with the exhange term
one finds that the operator norms  $\| D\|$ and $\| C \|$
are bounded by $\|  \widehat V\|_{L^1}$. 
Thus, the latter two operators 
are bounded by $ \|  \widehat V\|_{L^1} \mathcal N$
in the sense of quadratic forms. Hence
\begin{equation}
		\notag
	 \mathcal K_0 
	 \geq 
	 \int   V(x-y )
	 \big(
	 \alpha_x^* \alpha_x    	\alpha_y^*   \alpha_y
	 +
	 \b_x^*  \b_x    \b_y^*   \b_y 
	 - 
	 2  \a_x^* \a_x \b_y^* \b_y 
	 \big) d x dy  
	   -   \|   \widehat V \|_{L^1} \mathcal N \ . 
\end{equation}
Let us denote the first operator on the right hand side by 
$\widetilde K_0$.
A straightforward calculation using a  Fourier decomposition for $V ( x- y )$
shows that we can  write  
\begin{align}
	\notag 
  	\widetilde{  \mathcal K_0 }    & 	=
 	\int 
	\widehat V(\xi) 
	\big(  
	\d \Gamma ( u e_\xi u^*)^*
	\d \Gamma ( u e_\xi u^*)
	+ 
	\d \Gamma ( \overline v e_\xi  \overline v^*)^*
	\d \Gamma ( \overline v e_\xi  \overline v^*) \big) d \xi  \\
		\notag 
&  \quad   	- 2  \int 
\widehat V(\xi)  
	\d \Gamma ( u e_\xi u^*)^*
	\d \Gamma ( \overline v e_\xi \overline v^*) d \xi   \\
	& = 
	\int \widehat V(\xi)
	\(
 		 \mathbb U_\xi^* \mathbb U_\xi
 		  + \mathbb V_\xi ^*  \mathbb V_\xi 
 		  - 2 \mathbb U^*_\xi \mathbb V_\xi 
	\) d \xi 
\end{align}
where we define $ \mathbb U_\xi \equiv d \Gamma (  u e_\xi u^* )$
and
$\mathbb V_\xi   \equiv  d \Gamma ( \overline v e_\xi  \overline v^*)$. 
 The inequality 
 \begin{equation}
2  	 \mathbb U^*_\xi \mathbb V_\xi  \leq 
\mathbb U^*_\xi \mathbb U_\xi 
+ 
\mathbb V^*_\xi \mathbb V_\xi 
 \end{equation}
shows that $  \widetilde{  \mathcal K_0 } \geq0 $ in $\F$
and, consequently, that $\mathcal K_0 \geq - \| \widehat V\|_{L^1} \mathcal N $ in $\F$. 

\medskip 
 
Finally, we control the following contribution.  
\begin{align*}
\mathcal K_2 
&  =    \int  
	  V(x  -y ) \a_x^* \b_x^* \b_y \a_y  d x d y\\
&  =   
 \int  \widehat V(\xi) 
  \int     e_\xi(x)  a_x^* \b_x^* d x 
 \int   e_{\xi}^* (y) b_y a_y  d  y  \, d  \xi  \\ 
& =
 \int 
 \widehat V(\xi) 
 \int  
	(   u e_\xi v^* )  (z_1,z_2)  
	a_{z_1}^* a_{z_2}^* d z_1 d z_2 
 \int    
	(v e_\xi^* u^*) (z_3, z_4)
	a_{z_3}  a_{z_4}   d z_3 d  z_4  \, d \xi  .
\end{align*} 
Note 
$
\|  u e_\xi v^* \|_{\rm HS}^2   \leq C_0 |\xi | N\hbar 
$
combined  
$ \int O(x,y) a_x^\# a_y^\# \leq \|  O \|_{\rm HS} \calN^{1/2 }$ gives
\begin{equation}
	\pm 	\mathcal K_2
	 \leq 
	  C_0  \| V \| \,  N\hbar  \, 
	\mathcal N  \ . 
\end{equation}

Finally, Using $ 1 \leq  N \hbar $ 
we can put all of our bounds together to find that 
\begin{equation}
	 \mathcal L_4 \geq 
	  - 20 C_0 N \hbar 
	  \|	V 	\| (\mathcal N + \2 )
\end{equation}
This finishes the proof. 
\end{proof}

 \section{Number estimates for singular potentials}\label{section:sing}
 The main step towards the proof of Theorem \ref{thm1}
 is the following result containing 
 relevant \textit{number estimates}. 
 In order to state them, 
 let $\gamma_N^{\rm H} 
 $
 be the minimizer of the Hartree functional, 
 and let $\mathcal R_N$
 be the particle-hole transformation
associated to  $\gamma_N^{\rm H }$
in the sense of Lemma \ref{lemma:bog}
We then
define the fluctuation vector
\begin{equation}
	 \Omega_N \equiv \mathcal R_N^* \Psi_N  \ , \qquad  N \geq 1  
\end{equation}
 where $\Psi_N$ is the approximate  ground state of $ {\rm H}_N $.

 \begin{theorem}\label{thm2}
 	Under the same conditions of Theorem \ref{thm1}, 
 	there 
 	is   $C>0$ such that for all $ N \geq 1 $
 	\begin{equation}
 		\< \Omega_N, \mathcal N \Omega_N \>_\F  \leq
 		C N \hbar^{2\delta}      \,   |\! \ln \hbar|^\frac{1}{2}  \  , 
 	\end{equation}
where 
$
\delta  =  
\frac{1}{2}
\min ( \frac{6 - 5 a }{16 + 5 a }  , 
\frac{4+15a }{2 (16 + 5a)} 
).$
Additionally, if $a<1 $, the logarithmic factor may be ommitted. 
 \end{theorem}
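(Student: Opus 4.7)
The plan is to adapt the second-quantization strategy of \cite{Benedikter1} to the stationary setting, with two crucial modifications: an ultraviolet regularization $V_\Lambda$ of the pair potential designed to preserve $\widehat V_\Lambda \geq 0$, and a replacement of the Gronwall argument by the operator inequality \eqref{eq0} combined with the optimal Weyl law of Corollary \ref{coro}. The starting point is an energy comparison: let $e_N := \langle \Psi_N, \mathcal H \Psi_N\rangle - \mathscr E^{\rm HF}_N(\gamma_N^{\rm H})$; using the approximate ground state hypothesis $\epsilon_N = O(\hbar^{3/2})$, Proposition \ref{prop1}(3), the classical asymptotics $\inf \sigma(\mathrm H_N) = N \mathcal E_{\rm TF}(\varrho_{\rm TF}) + o(N)$, and the $O(N^{-1}\|\widehat V\|_{L^1})$ exchange gap between $\mathscr E_N$ and $\mathscr E_N^{\rm HF}$, one obtains $e_N \leq C N \hbar^{1/2} |\!\ln \hbar|^{1/2}$.

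Next I would split $V = V_\Lambda + V_\Lambda^{>}$ via a smooth radial Fourier cutoff at scale $\Lambda$, so that $\widehat V_\Lambda \geq 0$ and $\|V_\Lambda\| \lesssim \Lambda^a$ (with a logarithmic factor for $a=1$). The singular remainder $V_\Lambda^{>}$ contributes energy terms of the schematic form $\int V_\Lambda^{>}(x-y)\varrho_{\gamma_N^{\rm H}}(x) \varrho_{\gamma_{\Psi_N}}(y)\, dx dy$, which I would estimate using the uniform $L^p$ bounds on $\varrho_{\gamma_N^{\rm H}}$ from Proposition \ref{prop1}(2), the Lieb--Thirring inequality for the fluctuation density, and the Sobolev embedding $H^{s}(\R^3) \hookrightarrow L^{5/2}(\R^3)$ with $s=3/10$, yielding a decaying error in $\Lambda$.

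The main algebraic step is to conjugate $\mathcal H_\Lambda$ by $\mathcal R_N$ using Lemma \ref{lemma1}. Because $\gamma_N^{\rm H}$ is the spectral projection $\mathbf 1_{(-\infty, \mu]}(H_{\gamma_N^{\rm H}})$, the operators $u,\overline v$ commute with $H_{\gamma_N^{\rm H}}$, and one checks that after subtracting $\mathscr E_N^{\rm HF}(\gamma_N^{\rm H})$ the quadratic contribution simplifies to
\[
d\Gamma\bigl(u H_{\gamma_N^{\rm H}} u^*\bigr) - d\Gamma\bigl(\overline v H_{\gamma_N^{\rm H}}^t \overline v^*\bigr) - \mu\bigl(d\Gamma(uu^*) - d\Gamma(\overline v \overline v^*)\bigr) = d\Gamma\bigl(|H_{\gamma_N^{\rm H}} - \mu|\bigr),
\]
up to an exchange correction controlled by Lemma \ref{lemma:ex}, while the off-diagonal $a^*a^*$ term and the quartic $\mathcal L_4$ are absorbed using Lemmas \ref{lemma:off} and \ref{lemma:diag} together with the commutator estimate \eqref{comm2} from Proposition \ref{prop1}(1). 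This produces an operator lower bound of the form $\mathcal R_N^* \mathcal H_\Lambda \mathcal R_N \geq \mathscr E^{\rm HF}_N(\gamma_N^{\rm H}) + d\Gamma(|H_{\gamma_N^{\rm H}} - \mu|) - C\bigl(\hbar \Lambda^a + N^{-1}\|\widehat V_\Lambda\|_{L^1}\bigr)(\mathcal N + \boldsymbol 1)$.

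The closing step applies \eqref{eq0} to $\Omega_N$ with $\gamma = \gamma_N^{\rm H}$, giving
\[
\langle \Omega_N, \mathcal N \Omega_N\rangle \leq \tfrac{1}{\epsilon}\langle \Omega_N, d\Gamma(|H_{\gamma_N^{\rm H}}-\mu|)\Omega_N\rangle + \tr\, \mathbf 1_{|H_{\gamma_N^{\rm H}} - \mu| \leq \epsilon},
\]
where the second term is at most $CN(\epsilon+\hbar)$ by Corollary \ref{coro}. Substituting the operator lower bound and the energy-comparison inequality yields a self-bounding inequality of the form $\langle \Omega_N, \mathcal N \Omega_N\rangle \leq C N \epsilon^{-1}\bigl(\hbar^{1/2}|\!\ln\hbar|^{1/2} + E_\Lambda\bigr) + CN(\epsilon+\hbar)$, and a joint optimization over $(\epsilon, \Lambda)$ produces the stated exponent $\delta$. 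The two main obstacles I expect are (i) establishing rigorously, in the presence of exchange and of the cutoff, that the conjugated quadratic form indeed dominates $d\Gamma(|H_{\gamma_N^{\rm H}} - \mu|)$ rather than an inequivalent quantity, and (ii) the balancing act in the $(\epsilon, \Lambda)$ optimization for $V(x) = \lambda |x|^{-a}$, where the UV scale must be large enough to kill the $V_\Lambda^{>}$ remainder yet small enough for the error $\hbar \Lambda^a/\epsilon$ to be absorbable on the left-hand side; this tension is precisely what dictates the explicit form of $\delta$.
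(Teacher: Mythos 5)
Your proposal follows essentially the same route as the paper: UV regularization with $\widehat V_\Lambda\geq 0$, Lieb--Thirring/Sobolev control of the tail, particle--hole conjugation via Lemma \ref{lemma1}, the spectral identity $uH_\gamma u^*-\overline v H_\gamma^t\overline v^*=|H_\gamma-\mu|+\mu((\2-\gamma)-\gamma)$, the operator inequality \eqref{eq0} closed by Corollary \ref{coro}, and a final $(\ve,\Lambda)$ optimization. Two points, however, need repair.

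First, your derivation of the energy comparison $\<\Psi_N,\mathcal H\Psi_N\>\leq \mathscr E_N^{\rm HF}(\gamma_N^{\rm H})+CN\hbar^{1/2}|\!\ln\hbar|^{1/2}$ does not work as sketched: invoking the classical asymptotics $\inf\sigma({\rm H}_N)=N\mathcal E_{\rm TF}(\varrho_{\rm TF})+o(N)$ can only produce an $o(N)$ error, which destroys every rate downstream, and the claimed $O(N^{-1}\|\widehat V\|_{L^1})$ exchange gap is vacuous here since $\widehat V\notin L^1(\R^3)$ for $V=\lambda|x|^{-a}$. The paper's Lemma \ref{lemma:scalar} is a genuine sublemma: because $\tr\gamma_N^{\rm H}=N+O(N\hbar^{1/2}|\!\ln\hbar|^{1/2})$ one cannot use $\gamma_N^{\rm H}$ itself as a trial state; instead one takes the projection $\omega$ onto the lowest $N$ eigenvalues of $H_{\gamma_N^{\rm H}}$, sandwiches it between $\1_{(-\infty,\mu\mp\epsilon]}(H_{\gamma_N^{\rm H}})$ with $\epsilon\sim\hbar^{1/2}|\!\ln\hbar|^{1/2}$ via the quantitative trace asymptotics of Proposition \ref{prop2}, and then compares $\mathscr E_N(\omega)$ to $\mathscr E_N(\gamma_N^{\rm H})$ term by term. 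Second, the norm entering Lemmas \ref{lemma:off} and \ref{lemma:diag} is $\|V_\Lambda\|=\int\widehat V_\Lambda(\xi)(1+|\xi|)\,d\xi\sim\Lambda^{1+a}$, not $\Lambda^{a}$; the extra factor $|\xi|$ comes from the commutator estimate $\|[e_\xi,\gamma]\|_{\tr}\leq C_0N\hbar|\xi|$ and cannot be dropped. With your power $\Lambda^a$ the balance $\hbar^{-1}\Lambda^{-\theta-1-a}=\hbar\Lambda^{1+a}$ (and hence the stated $\delta$) would come out differently, so the exponent you claim to recover is inconsistent with the error terms you wrote down.
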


In order to prove the number estimates for singular potentials, 
we will use the estimates from the previous section. 
We will need to, however, 
regularize the interaction potential
 \begin{equation}
 	 V(x) =  \lambda |x|^{-a } \qquad \lambda>0 \ , \,  a \in ( 0,   1  ]  \ . 
 \end{equation}
Observe that its Fourier transform 
is given explicitly by
$\widehat V(\xi)  =  \lambda   C_{d, a} |\xi|^{-d +a }$.
in dimension $d  \geq 1.$
The regularization is defined as follows. 
Consider  a large parameter $\Lambda\gg1  $
 and introduce  the new potential 
 via the inverse Fourier formula 
 \begin{equation}
 	 V_\Lambda ( x ) \equiv  
 	 \int_{  |\xi| \leq \Lambda 	} e^{ i x \cdot \xi}  
 \,  	 \widehat V(\xi) d \xi \ , \qquad x\in \R^d \ . 
 \end{equation}
Note that   $V_\Lambda $ 
satisfies the  ``regular conditions"
 \begin{align}
\widehat V_\Lambda (\xi) \geq 0
\qquad
\t{and}
\qquad 
 	 \|	 V_\Lambda	\| 
 	   \leq C   \Lambda^{1 + a } \ ,  \qquad \forall \Lambda \geq 1 \  ,
 \end{align}
where  $\|  V\| = 	\int_{\R^d}  \widehat V (\xi) (1 +  |\xi|  ) d \xi $.

\medskip

We will   need to analyze the tail of the regularization, 
which we   denote by 
\begin{equation}
	W_\Lambda   \equiv  V  - V_\Lambda \ . 
\end{equation}
In particular, 
the tail is in the Sobolev space $H^s (\R^d )$
 for any   $ 0 \leq  s < \frac{d}{2} -a $, and 
 satisfies the estimate 
\begin{equation}
	 \|	 W_\Lambda	\|_{H^s (\R^d )}
	   \leq 
 C \Lambda^{	 - 		(	 \frac{d}{2}  - a -s 		) }
\end{equation}
As an application,   
thanks to the Sobolev embedding  $H^s (\R^d)	 \hookrightarrow L^p (\R^d ) 	$ 
for  
$
 s   =  d  ( 
\frac{1}{2}
- 
\frac{1}{p}  )  \in ( 0 , \frac{d}{2})
$
we obtain 
the following   $L^p (\R^d)$ estimate 
\begin{equation}
	\label{eq:Lp}
	 \|	 W_{\Lambda}	\|_{ L^{  p  }	(\R^d ) 	}
 \leq  C \Lambda^{	 - 		(	 \frac{d}{p}  - a  	) }   \ , 
\end{equation}
 valid for $2 \leq p < \frac{d}{a}$. 
        One of the key  estimates
    is the relative  smallness of 
    the two-body potential
generated by the tail    $W_\Lambda(x -y )$.
The following estimate is essentially a consequence of the Lieb-Thirring inequality.

\begin{lemma} \label{lemma:W}
Assume $d > \frac{2a}{2-a }$.	 Let $\Psi_N$ be an approximate  ground state of ${\rm H}_N$.
	 Then, there exists a constant $C>0$ such that for all $ N \geq 1 $
	 and $\Lambda \geq 1 $
\begin{equation}
	\notag 
 \frac{1}{N}	 
 \<  \Psi_N, \sum_{ i < j } 
 (  W_\Lambda )_- (x_i - x_j) \Psi_N	\>
 \leq C N  \|  W_\Lambda \|_{L^{ 1 + \frac{d}{2} }	} 
  \leq 
   C N  \Lambda^{	 - 	\theta  }  
    , 
\end{equation}
where $	 \theta  \equiv  	 \frac{ 2d}{ d+2  }  - a  .  $
\end{lemma}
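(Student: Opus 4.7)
The plan is to reduce to a one-body Lieb--Thirring estimate on the one-particle density $\rho_{\Psi_N}$. First I would dominate $(W_\Lambda)_-(r) \leq |W_\Lambda(r)|$ pointwise and pass to the two-body reduced density $\rho^{(2)}_{\Psi_N}$, giving
\begin{equation*}
\frac{1}{N}\left\langle \Psi_N, \sum_{i<j}(W_\Lambda)_-(x_i-x_j)\Psi_N\right\rangle
\,\leq\, \frac{1}{2N}\iint |W_\Lambda|(x-y)\,\rho^{(2)}_{\Psi_N}(x,y)\,dx\,dy .
\end{equation*}
The key fermionic input is then the bound $\rho^{(2)}_{\Psi_N}(x,y) \leq \rho_{\Psi_N}(x)\rho_{\Psi_N}(y)$, which is exact for Slater determinants (one has $\rho^{(2)} = \rho \otimes \rho - |\gamma^{(1)}|^{2}$) and is invoked here using the proximity of $\Psi_N$ to the Hartree minimizer $\gamma_N^{\rm H}$, itself a Slater determinant by \eqref{fixed}.

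Once linearized in the one-body density, I would change variables $z = x-y$ and apply Young's convolution inequality together with H\"older at the dual exponent pair $(1+d/2,\,(d+2)/d)$:
\begin{equation*}
\iint |W_\Lambda|(x-y)\,\rho_{\Psi_N}(x)\,\rho_{\Psi_N}(y)\,dx\,dy
\,\leq\, \|W_\Lambda\|_{L^{1+d/2}}\,\|\rho_{\Psi_N}\|_{L^{(d+2)/(d+1)}}^{2} .
\end{equation*}
Interpolating $L^{(d+2)/(d+1)}$ between $L^{1}$ and $L^{1+2/d}$ at exponent $\theta = 1/2$ yields $\|\rho_{\Psi_N}\|_{L^{(d+2)/(d+1)}} \leq \|\rho_{\Psi_N}\|_{L^{1}}^{1/2}\,\|\rho_{\Psi_N}\|_{L^{1+2/d}}^{1/2}$. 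The first factor equals $N^{1/2}$, and the second is controlled by the standard Lieb--Thirring inequality together with the uniform kinetic bound $\tr(-\hbar^{2}\Delta)\gamma_{\Psi_N} \leq CN$, available for the approximate ground state because $V\geq 0$ makes the pair interaction nonnegative. Combining these estimates gives the first inequality $\frac{1}{N}\langle\cdots\rangle \leq CN\|W_\Lambda\|_{L^{1+d/2}}$. The second inequality then follows immediately from the tail estimate \eqref{eq:Lp} at $p = 1+d/2$: the resulting exponent is $\theta = \frac{2d}{d+2}-a$, which is positive exactly under the hypothesis $d > 2a/(2-a)$.

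The hardest step is justifying the fermionic bound $\rho^{(2)} \leq \rho \otimes \rho$. It is not valid pointwise for arbitrary antisymmetric states---superpositions of two Slater determinants supported on disjoint spectral regions produce ratios $\rho^{(2)}/(\rho\otimes\rho)$ of arbitrary size---so its use here must be tied to $\Psi_N$ being close to a single Slater determinant. A softer alternative avoiding this point is to apply Lieb--Thirring in dimension $2d$ directly to $\gamma^{(2)}_{\Psi_N}$ (which satisfies $0 \leq \gamma^{(2)}_{\Psi_N} \leq \2$ as an operator on $\bigwedge^{2} L^{2}(\R^{d})$, with $\tr(-\hbar^{2}\Delta)\gamma^{(2)}_{\Psi_N} = O(N^{2})$), obtaining $\|\rho^{(2)}_{\Psi_N}\|_{L^{1+1/d}(\R^{2d})} \leq CN^{2}$, and then bounding the pair slice $g(z) = \int \rho^{(2)}_{\Psi_N}(y+z,y)\,dy$ by interpolating between its $L^{1}$ and $L^{1+1/d}$ controls together with the Agmon-type support localization of $\rho_{\Psi_N}$ inherited from the confining potential $U$.
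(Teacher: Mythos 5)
Your reduction to the two-body density and the Young/H\"older/interpolation bookkeeping at the end are correct, but the central step---the pointwise factorization $\rho^{(2)}_{\Psi_N}(x,y)\le\rho_{\Psi_N}(x)\,\rho_{\Psi_N}(y)$---is a genuine gap, and you have identified it without closing it. The factorization holds for Slater determinants but fails for general antisymmetric states, and $\Psi_N$ is only an approximate ground state of the \emph{interacting} Hamiltonian. Worse, at the point in the argument where this lemma is invoked (to pass from $\mathcal H$ to $\mathcal H_\Lambda$ in \eqref{ineq:0}), no closeness of $\Psi_N$ to the Hartree Slater determinant has yet been established: that closeness is precisely the output of Theorem \ref{thm2}, whose proof uses this lemma, so appealing to it here would be circular---and even trace-norm closeness of one-body density matrices would not yield pointwise control of $\rho^{(2)}$. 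Your fallback route is also flawed: the fermionic two-particle reduced density matrix does \emph{not} satisfy $0\le\gamma^{(2)}_{\Psi_N}\le\2$; with the normalization $\tr\gamma^{(2)}_{\Psi_N}=N(N-1)$, Yang's bound only gives $\gamma^{(2)}_{\Psi_N}\le N$, so a Lieb--Thirring inequality on $\bigwedge^2L^2(\R^d)$ does not apply with an $N$-independent constant, and the subsequent ``pair slice'' interpolation with Agmon localization is too vague to verify.

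The paper's proof is essentially a citation: it invokes \cite[Lemma 3.4]{Fournais1}, which is exactly the statement needed for an \emph{arbitrary} fermionic state. That lemma is an operator-type inequality of the form
$\langle\Psi,\sum_{i<j}f(x_i-x_j)\Psi\rangle\le C\|f\|_{L^{1+d/2}}\big(\varepsilon\,\langle\Psi,\sum_i(-\Delta_i)\Psi\rangle+\varepsilon^{-d/2}N^2\big)$
for $0\le f\in L^{1+d/2}(\R^d)$, proved by trading the pair interaction against kinetic energy via the fermionic Lieb--Thirring inequality in $d$ dimensions (not $2d$), with no factorization assumption on $\rho^{(2)}$. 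Combined with the a priori kinetic bound $\langle\Psi_N,\sum_i(-\hbar^2\Delta_i)\Psi_N\rangle\le CN$ for approximate ground states (which does follow from $U$ being confining and $V\ge0$, as you note) and the choice $\varepsilon=1$, this yields the first inequality; the second is then \eqref{eq:Lp} at $p=1+\frac d2$, exactly as in your final step. If you want a self-contained argument, you should reproduce that operator inequality rather than attempt to factorize $\rho^{(2)}$.
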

\begin{proof}
	The function $f(x) : = (W_\Lambda)_-(x) \geq 0 $
	is in the space $L^{1 + d/2 } (\R^d)$ 
and we   use \cite[Lemma 3.4]{Fournais1}
for the first inequality.
For the second inequality we use \eqref{eq:Lp}
for $p = 1 + \frac{d}{2} <  \frac{d}{a}$, which is the first assumption in the statement. 
This finishes the proof. 
\end{proof}

\begin{proof}[Proof of Theorem \ref{thm2}]
In order to prove Theorem \ref{thm2} we introduce   the regularized Hamiltonian 
on Fock space $\F$
\begin{equation}
	\mathcal H _\Lambda
	= \int_{\R^d} a_x^*  \Big(  \!  - \hbar^2  \Delta_x   + U(x)  \Big)  a_x dx 
	+ 
	\frac{1}{2N }
	\int_{\R^d \times \R^d } V_\Lambda (x -y ) a_x^* a_y^* a_y a_x d x d y  
\end{equation} 
Lemma \ref{lemma:W}
implies immediately 
the following lower bound 
\begin{align}
\notag 
 \< \Psi_N  , \mathcal H \Psi_N		\> 
 & 
  \ \geq  \ 
   \< \Psi_N  , \mathcal H_\Lambda \Psi_N		\>  
 \     - \  C N  \Lambda^{	 - 	\theta  }      \\ 
    	\label{ineq:0}
 & \   =  \ 
    \< \Omega_N  , 
     \mathcal R^*_N
     \mathcal H_\Lambda \mathcal R_N \Omega_N 	\>      \ - \ 
     C N  \Lambda^{	 - 	\theta  }   
\end{align}
where $\Omega_N$ and $\mathcal R_N$
are obtained from   
     the minimizer $ \gamma \equiv \gamma_N^{\rm H}$, 
\textit{without} cut-offs. 
Since the potential $V_\Lambda$ 
 has finite norm $ \| V_\Lambda \|$, 
it can be analyzed 
with the methods of the previous section.
More precisely,   the  calculation from Lemma \ref{lemma1}
implies  
	\begin{align*}
 		 \mathcal R^*_N  \mathcal H_\Lambda \mathcal R _N  
 		  & \ =    \ 
  \mathscr E ^{\rm HF}_{N,\Lambda} (\gamma)
 \	+  \ 
	d \Gamma ( u H_{\gamma, \Lambda}^{\rm HF} u^*	)
 \ 	-  \ 
	d \Gamma ( \overline v  (  H_{\gamma, \Lambda}^{\rm HF})^t  \overline v^*	) \\
 &  \ 	+  \ 
	\int  (u H_{\gamma,  \Lambda}^{\rm HF}   v^*) 	 	   (x,y) a_x^*a_y^* 
	dxdy  \  + \  h.c    	 \ 	+  \   \frac{1}{2N}  \mathcal L_{4, \Lambda}  \ . 
\end{align*}
 Here, the new subscript $\Lambda$ 
 in the Hartree-Fock terms, and in the quartic term, indicates
 evaluation in the regularized potential $ V_\Lambda$.
 
 \medskip

 Let us now specialize to $d =3$.  
we will apply Lemma \ref{lemma:W} for any $a \leq1 $.
 In particular $\theta  = \frac{6}{5} -a $. 
 Additionally, thanks to Proposition \ref{prop1} 
 we have   the commutator estimates for $\gamma = \gamma_N^{\rm H}$
 \begin{equation}
\|		[ e^{ i x \cdot \xi}   , \gamma ]			\|_{\tr } \leq   C_0 N \hbar |\xi|
 \end{equation}
 for some constant $C_0>0$. 
 We also use that $\|  \varrho_{\gamma}\|_{L^p}$ is uniformly bounded in $N$, for any $1 \leq p \leq \infty$. 
 
  We now proceed in various steps, 
 and we analyze with the   methods form the previous section 
 how to bound the error terms. 
First, the exhange contributions inside the quadratic operator terms  
 can be estimated  thanks to Lemma \ref{lemma:ex} 
  \begin{align}
  	\label{eq1}
 	   \pm   N^{-1 } 	 d \Gamma ( u X_{\gamma, \Lambda} u^*) 
  &  \ 	\leq  \ 
 	 C  N^{-1 }  \Lambda^{1 + a } \mathcal N
 	\\
 	\label{eq2}
 	   \pm  N^{-1 }  	 d \Gamma ( \overline v 
 	X_{\gamma, \Lambda}^t  \overline v^*) 
  &  \ 	\leq  \ 
  C 
 	 	 N^{-1 }\Lambda^{1 + a } \mathcal N   \\
 	 	 \label{eq3}
 	\pm   N^{-1 }
 	\int_{	\R^6 	} 
 	( u X_{\gamma, \Lambda}  v^*	) (x,y  ) a_x^*a_y^* d x dy  
 &  \ 	\leq	\	 C   N^{-1/2 } \Lambda^{1 +a }
 	\mathcal N^{1/2} \ . 
 \end{align}
For the quartic term, 
we use Lemma \ref{lemma:off} and \ref{lemma:diag}
for the off-diagonal and diagonal terms, respectively. 
In particular, 
here it is crucial 
that  the  regularized version 
is    chosen so that $ \widehat V_\Lambda (\xi ) \geq 0$.
We find 
\begin{equation}
	\label{eq4}
 N^{-1 }	\mathcal L_{4, \Lambda}
	 \geq 
	- C      \hbar  \Lambda^{1 +a } (\mathcal N + \2 )
\end{equation}

Thus, combining the inequalities
\eqref{eq1}, \eqref{eq2}, \eqref{eq3} and \eqref{eq4}
one obtains the lower bound
	 	\begin{align*}
  \mathcal R^*_N  \mathcal H_\Lambda \mathcal R_N
 &  \      \geq    	       \ 
	 	\tr\gamma  H_{\gamma, \Lambda}^{\rm HF}   
	  \ 	+  \ 
	 	d \Gamma ( u H_{\gamma, \Lambda} u^*	)
 \	  	-  \ 
	 	d \Gamma ( \overline v    H_{\gamma, \Lambda}  \overline v^*	)  \\
  & 	 	 \ +  \ 
	 	 	\int_{	\R^6 	}   (u H_{\gamma,  \Lambda}   v^*) 	 	   (x,y) a_x^*a_y^* \, d x dy 
	 	\ + \   h.c    		 
	 	 \    - \  C   \hbar \Lambda^{1+a}
	 	  \(
	 	  \mathcal N +\2
	 	  \) 
\end{align*}
in the sense of quadratic forms in $\F.$
 The next step
 is to replace $H_{\gamma, \Lambda}$
 with the non-regularized Hamiltonian $H_{\gamma }$, within all the quadratic terms. 
 First, we have thanks to Lemma \ref{lemma fermion estimates} and H\"older's inequality with $ p =1 + 3/2$
 \begin{align}
 	\notag
 	 d\Gamma ( u H_{\gamma, \Lambda} u^* )
 	 &    \ = \ 
 	   	 d\Gamma ( u H_{\gamma} u^* )
 	   	    -  d \Gamma (      u ( W_\Lambda* \varrho_\gamma 	)  u^*  	)  \\
 &  	   	    \  \geq  \ 
 	   	     	   	 d\Gamma ( u H_{\gamma} u^* )
 - \|	W_\Lambda 	\|_{L^p} 
 \|	 \varrho_\gamma	\|_{L^{ p ' }} \mathcal N  \ ,   \\
  	\notag
	 	 	d \Gamma ( \overline v    H_{\gamma, \Lambda}^t  \overline v^*	) 
	 	 & 	 \ =  \ 
	 	 		 	d \Gamma ( \overline v    H_{\gamma}^t  \overline v^*	) 
	 	 		 	 - 
	 	 		 	  	d \Gamma ( \overline v   
	 	 		 	  (  	 W_\Lambda*\varrho_\gamma )  \overline v^*	)  \\
 & 	 	 		  \ 	 	 \leq  \ 
	 	 		 	 	 	 	 		 	d \Gamma ( \overline v    H_{\gamma}   \overline v^*	) 
	 	 		 	 	 	 	 		 	 +  \|	W_\Lambda 	\|_{L^p} 
	 	 		 	 	 	 	 		 	\|	 \varrho_\gamma	\|_{L^{ p ' }} \mathcal N   \,  ,
\end{align}
and $p' = 5/3 $ is the conjugate exponent. 
 On the other hand, for the off-diagonal term we 
 can write, using
 $u H_\gamma v^*=0$
 and again Lemma \ref{lemma fermion estimates} and  H\"older's inequality
 \begin{align}
 	 	\notag
 	 	 \pm	\int (u H_{\gamma,  \Lambda}   v^*) 	 	   (x,y) a_x^*a_y^* \d x \d y 
 	 	 &  \ 	 = \  \mp
 	 	 	 	 	\int (u 
 	 	 	 	  W_\Lambda*\varrho_\gamma
 	 	 	 	   v^*) 	 	   (x,y) a_x^*a_y^* \d x \d y    		 \\
 	 	 	 	    	\notag
 	 	 	 	   &  \ \leq  \ 
 	 	 	 	   \|	 u (W_\Lambda *\varrho_\gamma ) v^*	\|_{\rm Tr} 
 	 	 	 	    \2  \\
 	 	 	 	    \notag
 	 	 	 	   &
 	 	 	 	   \  \leq \  N \|	W_\Lambda	\|_{L^p}
 	 	 	 	   \|	 \varrho_{\gamma}	\|_{L^{p' }}    \2 \ . 
 \end{align}
and similarly for the hermitian conjugate. 
Recall  
 $\varrho_\gamma = \varrho_{\gamma_N^{\rm H}}$ is uniformly bounded in any $L^p$ space, 
and also 
$\|  W_\Lambda\|_{L^{5 /2}} \leq \Lambda^{-\theta}$. 
Putting our estimates together, one finds
for some constant $C>0$
	 	\begin{align*}
	\mathcal R^*  \mathcal H_\Lambda \mathcal R
 & \ 	\geq    	   \ 
  \mathscr E ^{\rm HF}_{N,\Lambda} (\gamma)
	+ 
 d \Gamma (u H_\gamma u^*)
  - 
  d \Gamma (\overline v H_\gamma \overline v^*) \\
 &  \qquad 	- \  C   \hbar \Lambda^{1+a}
	\(
	\mathcal N +\2
	\) 
	 - C N  \Lambda^{	 - 	\theta  }    \2
\end{align*}
Let us now replace 
the $\Lambda$-dependent term
in the exhange operator in the scalar term. 
For this, we use the estimate 
\begin{equation*}
	\frac{1}{N}
	 \int 
	 W_\Lambda(x -y )
	 | \gamma(x,y)|^2 dx dy 
	  \leq 
	  C 
	  \|W_\Lambda	\|_{L^{1 + d/2 }}
	  \Big( 
	  \frac{\ve}{N}
 \tr (- \Delta)\gamma 
 + 
 \frac{ N }{\ve^{3/2}} 
 \Big) 
 \ ,  \  \forall \ve >0 
\end{equation*} 
see e.g. \cite{Fournais1}. 
 Choosing $\ve =1 $ and noting $N^{-1} \leq \hbar^{2}$
 we see that we can   absorbe this remainder into the total error term
 $N \Lambda^{- \theta}$. 
Thus, we arrive at the following operator inequality 
 	 	\begin{align}
\notag 
	\mathcal R^*  \mathcal H_\Lambda \mathcal R
& \ 	\geq    	   \ 
  \mathscr E ^{\rm HF}_{N} (\gamma)
+ 
d \Gamma (u H_\gamma u^*)
- 
d \Gamma (\overline v H_\gamma \overline v^*) \\ 
 	 		\label{ineq:1}
& \qquad - C   \hbar \Lambda^{1+a}
\(
\mathcal N +\2
\) 
-C N  \Lambda^{	 - 	\theta  }   \2 \ .
 \end{align}
 in the sense of quadratic forms in $\F$. 
 
\medskip 
The estimate \eqref{ineq:1} is our 
main operator inequality. 
In particular, we can now test it  against   $\Omega_N = \mathcal R^*\Psi_N$
and combine with \eqref{ineq:0} to get 
\begin{align*}
  	\< \Psi_N, \mathcal H \Psi_N\>  & 
  	 \ 	 \geq  \ 
  \mathscr E ^{\rm HF}_{N} (\gamma)
 \ 	 +  \ 
	 \< \Omega_N,   
	 d \Gamma (u H_\gamma u^*
	  \,  - \,  
	  \overline v H_\gamma \overline v^*
	 ) 
	    \Omega_N\>  \\
 & \qquad 	  - \,  C   \hbar \Lambda^{1+a}
	 \(
 \<  \Omega_N , 	 \mathcal N \Omega_N \>  + 1  
	 \) 
 \	 - \  C N  \Lambda^{	 - 	\theta  }   \2 \ .
\end{align*} 
Next,  we  re-write the excitation operator 
$	 d \Gamma (
u H_\gamma u^*
- 
\overline v H_\gamma \overline v^*
) $.
More precisely, we note that $u$, $v$ and $\gamma$ commute, as they
are built with the same set of eigenfunctions. 
We can then obtain  
using the structure 
$\gamma = \1_{ ( - \infty , \mu ] } (H_\gamma)$, that---for our specific choice of ($u, v, \gamma$)---we have 
\begin{align}
	\notag 
	 u H_\gamma u^*
  	 - 
	 \overline v H_\gamma \overline v^*  
	 &  = 
	  H_\gamma  ( \2 - \gamma )
	   -H_\gamma  \gamma  \\
	   	\notag 
	    &  =  (  H_\gamma  - \mu )
(\2 - \gamma )
	      + ( \mu - H_\gamma )  
 \gamma 
	      + \mu \(
 ( \2 - \gamma )
	        - 
\gamma 
	      \) \\
	      & 
	       = 
	       | H_\gamma - \mu| 
	        + 
	        \mu (   (\2 - \gamma ) - \gamma   ) \ . 
\end{align}
The key point is that now 
the last 
operator on the right hand 
measures the difference between $ N $
and $\tr \gamma$.
More precisely, for all $ N \geq 1 $ we have 
thanks to Proposition \ref{prop1}
\begin{equation}
	 \<	 \Omega_N  ,  
	 \mu \, 
	 d \Gamma (	    (\2  - \gamma ) - \gamma   		) \Omega_N \> 
	  = 
	 \mu \,   \<   \Psi_N ,  (\mathcal N - \tr \gamma  ) \Psi_N   \> 
	  = O ( N  \hbar^{ \tfrac12	}   | \ln \hbar|^{\tfrac12}   ) \ .
\end{equation}
Thus, we arrive at the lower bound 
\begin{align*}
 	\< \Psi_N, \mathcal H \Psi_N\>  	
 &  \ 	 \geq  \ 
  \mathscr E ^{\rm HF}_{N} (\gamma)
 \ 	+  \ 
	\< \Omega_N,   
	d \Gamma 
	 | H_\gamma - \mu | 
	\Omega_N\>  \\
 & 	 \ - \  C   \hbar \Lambda^{1+a}
	\(
	\<  \Omega_N , 	 \mathcal N \Omega_N \>  + 1  
	\) 
 	 \ - \   C N  \Lambda^{	 - 	\theta  }   
 \ 	 - \  C N    \hbar^{ \tfrac12	}   | \ln \hbar|^{\tfrac12}    \ . 
\end{align*}

\medskip

In order to proceed, 
we control the scalar terms. 
In particular, we prove in 
Lemma \ref{lemma:scalar}  
that 
$	 
  \< \Psi_N , \mathcal H \Psi_N \> \leq   \mathscr E ^{\rm HF}_{N} (\gamma) + C N 
   \hbar^{ 1/2	}   | \ln \hbar|^{1/2}. $ 
Therefore, one has 
\begin{align}
\notag
	 & 	 \<  \Omega_N,  \,  \d \Gamma (   | H_\gamma - \mu | )  \,  \Omega_N \>  \\
	 	\label{eq5}
 & 	 \quad    	 \leq   \,  
	 	 	   C   \hbar \,  \Lambda^{1+a}
	 	 \big(
	 	 1 + 
	 	 \< \Omega_N ,	 \mathcal N	\Omega_N 	\> 
	 	 \big)  
	 	  \ + \  
	 	    C N  \Lambda^{	 - 	\theta  }   
 \ 	 + \  C N 
  \hbar^{ 1/2 	}   | \ln \hbar|^{ 1/2} 
     \ . 
 	   \end{align}
Let now $\ve \geq \hbar$ be a parameter, soon to be chosen. 
We note that the identity  $\2 $ on $L^2 (\R^3)$
satisfies the operator inequality
\begin{equation*}
	 \2_{L^2 (\R^3 )}
	  \, =   \, 
	   \1_{	 |  H_\gamma - \mu|  \geq \ve  }
	  + 
	    \1_{	 |  H_\gamma - \mu|   <  \ve  } 
 \, 	    \leq  \, 
\ve^{-1 }
	     | H_\gamma - \mu|
	     + 
	        \1_{	 |  H_\gamma - \mu|   <  \ve  } 
	         \ . 
\end{equation*}
In particular, this implies that in the sense of quadratic forms in $\F$
\begin{equation}
	\label{eq6}
	 \mathcal N 
	 \, \leq \,
	    \ve^{-1 }
	 d \Gamma ( | H_\gamma  - \mu|)
	  \, + \,  d \Gamma (\1_{	 |  H_\gamma - \mu|   <  \ve  }  ) \ . 
\end{equation}
We can now test
this operator inequality against 
$\Omega_N$
and combine with \eqref{eq5}
to find  the estimates  
 \begin{align}
 	\notag 
 	  \<  \Omega_N ,  \mathcal N \Omega_N  \> 
 	   & 
 	    \  \leq  \ 
\ve^{-1 }
 	 	 	 \<  \Omega_N  ,   \, \d \Gamma   (  | H_\gamma - \mu |	) \,  \Omega_N   \>
 	   \ 	  + 	 \ 
 	  	\< \Omega_N,
 	  	d \Gamma (   \1_{ | H_\gamma  -\mu | < \ve } 	)  \Omega_N \> 			\\
 	  	\notag  
 	&  \  	  \leq  \ 
 	 	   	 	 	      \frac{C \hbar \Lambda^{1+a} }{\ve}  
 	 	   \big(
 1 +  	 	   \< 	 \Omega_N   ,  \mathcal N  \Omega_N \>   
 	 	   \big) 
 + 
 	 	   \frac{ C N  \Lambda^{	 - 	\theta  }   }{\ve}
 	 	    + 
 \frac{ C N \hbar^{1/2} }{\ve } 
   \\
 &   \qquad  + \   	 	  	  	\< \Omega_N, d \Gamma (   \1_{ | H_\gamma  -\mu | < \ve } 	)   \Omega_N \> 	\ . 
  	 	   \end{align}
Let us now analyze the very last term
in the previous inequality. 
To this end, we denote by 
 $\gamma_{\Omega_N} : L^2 (\R^3) \rightarrow L^2 (\R^3)$
the one-particle reduced density matrix of $\Omega_N \in \F$, 
which satisfies the operator bounds $ 0 \leq \gamma_{\Omega_N} \leq \2 $. 
Then, one obtains 
\begin{align}
	 	 	  	  	\< \Omega_N, d \Gamma (   \1_{ | H_\gamma  -\mu | < \ve } 	)    \Omega_N \> 	
 = 
 	  	 	   \tr \1_{ | H_\gamma	 - \mu| < \ve }  \gamma_{\Omega_N}	 
	\leq 
	  	 	      \tr \1_{ | H_\gamma	 - \mu| < \ve }  
\leq 
 C N \ve  
\end{align}
where in the last line we   used
the validity of the optimal Weyl's law 
for the Schr\"odinger operator 
$H_\gamma $,  see e.g.  Corollary \ref{coro}. 
This gives us the final estimate 
\begin{align}
   	 \langle  \Omega_N ,  & \mathcal N	    \Omega_N  \rangle   
 \\
   & \leq   
  \notag 
  	 \frac{C_* \hbar\Lambda^{1+a} }{\ve}  
	 \big(
	 1 +  	 	   \< 	 \Omega_N   ,  \mathcal N  \Omega_N \>   
	 \big) 
 + 
	 \frac{ C N  \Lambda^{	 - 	\theta  }    }{\ve}
 + 
 C N  \(\ve  + \frac{    \hbar^{  \frac{1}{2}	}   | \ln \hbar|^{  \frac{1}{2}}   }{\ve}	\) 
\end{align}
where, for the next argument, we have distinguished the first constant $C_*$. 
We are now ready to choose the parameter  $\ve >0$ as    follows 
 \begin{equation}
 	 \ve  \equiv  2 C_*   \hbar \Lambda^{1 +a } \geq \hbar 
 \end{equation}
and  we are able to bootstrap the expectation value of the number operator to obtain
 \begin{align}
 	\label{eq:11}
 \<	 \Omega_N 	,	\mathcal N \Omega_N	\>
\leq 
C 
+ 
C N  
  	 \( 
  \hbar^{ -1 } \Lambda^{-\theta - 1 -a  }    
    + 
       \hbar \Lambda^{1 +a }
    +   
     \hbar^{ - \frac{1}{2}} \Lambda^{ - 1 -a }   
     |\ln \hbar|^\frac{1}{2}
  	 \) 	  .
 \end{align}
The cut-off parameter $\Lambda = \Lambda(\hbar)$
is chosen  in order to minimize the first two terms in the  
bracket $ N (\cdots)$. That is, 
\begin{equation}
 \hbar^{ -1 } \Lambda^{-\theta - 1 -a  }    
  = 
   \hbar \Lambda^{1 +a }
  \iff 
  \Lambda 
   = 
\hbar^{   -  \frac{1}{ 1 + a +  \theta/2 }  } \ .  
\end{equation}
Then, we obtain 
the explicit expression 
\begin{equation}
	\textstyle 
	 \hbar \Lambda^{1 +a } 
	  = \hbar^{ \delta_1 }
	  \quad
	  \t{with}
	  \quad 
	    \delta_1   
	   = \frac{ \theta}{2 + 2a + \theta}    \ .
\end{equation}
Using $  \theta =  \frac{6}{5}-a $
gives the power $  \delta_1 = (6 - 5a)/ (16+5a).$
We consider now the last term in \eqref{eq:11}.
We note that  $ \hbar^{-1/2 } \Lambda^{-1-a} = 
\hbar^{1/2} \hbar^{ - \delta_1}  = \hbar^{\delta_2}$
where  $ \delta_2 \equiv 1/2 - \delta_1 $.
One may easily check $\delta_2 >0$ for any $a>0$. 
Thus, from \eqref{eq:11} we conclude that 
 \begin{equation}
 	\<\Omega _N,  \mathcal N	\Omega _N \>
 	\leq  C  N  \hbar^{2 \delta}    \,  |\! \ln \hbar|^\frac{1}{2}  \ , 
 	\qquad 
 	2 \delta \equiv 
 	\min(  
 	\delta_1, \delta_2
 	)  > 0 \ . 
 \end{equation}
Finally, let us comment that for the case $a <1 $, one has 
the better asymptotics $\tr \gamma_N^{\rm H} = N  + O(N \hbar^{1/2})$ with no logarithmic factor.
In particular, this removes the same factor from   Proposition \ref{prop1} and Lemma \ref{lemma:scalar}, which then propagates through the proof. 
\end{proof}

We now prove the following lemma, used in the proof of Theorem \ref{thm2}.

\begin{lemma}
	\label{lemma:scalar}
Let $d=3$.
Let $\Psi_N$ be an approximate ground state of $ {\rm H}_N $ with $\ve_N = N^{-1/6}$, 
and let $\gamma = \gamma_{N}^{\rm H}$
be the Hartree minimizer. 
	There is a constant $C> 0$
	such that for all $ N \geq 1 $
	\begin{equation}
		   \< \Psi_N ,   {\rm H}_N   \Psi_N \> \leq   \mathscr E ^{\rm HF}_{N} (\gamma) + C N   \hbar^{ \tfrac12	}   | \ln \hbar|^{\tfrac12} 
	\end{equation}
\end{lemma}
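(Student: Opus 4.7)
The plan is to combine the approximate ground state property of $\Psi_N$ with a variational upper bound built from a modified Slater determinant associated to the Hartree minimizer $\gamma = \gamma_N^{\rm H}$. Since $\ve_N = O(N^{-1/6}) = O(\hbar^{1/2})$, the approximate ground state condition immediately gives
\begin{equation*}
\<\Psi_N, {\rm H}_N\Psi_N\> \leq \inf\sigma_{\mathscr H_N}({\rm H}_N) + N\ve_N \leq \inf\sigma_{\mathscr H_N}({\rm H}_N) + CN\hbar^{1/2},
\end{equation*}
so the task reduces to producing an $N$-particle trial state $\Phi_N \in \mathscr H_N$ whose energy satisfies $\<\Phi_N, {\rm H}_N\Phi_N\> \leq \mathscr E^{\rm HF}_N(\gamma) + CN\hbar^{1/2}|\!\ln\hbar|^{1/2}$.

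Since $\gamma = \mathds{1}_{(-\infty, \mu]}(H_\gamma)$ is a spectral projection, I write $\gamma = \sum_{i=1}^M|f_i\>\<f_i|$ with $H_\gamma f_i = \epsilon_i f_i$ and $M = \tr\gamma \in \N$. Proposition \ref{prop1} gives $|M - N| \leq CN\hbar^{1/2}|\!\ln\hbar|^{1/2}$, while Corollary \ref{coro} applied with $\ve^* = c\hbar^{1/2}|\!\ln\hbar|^{1/2}$ guarantees at least $|N-M|$ eigenvalues of $H_\gamma$ in $[\mu - \ve^*, \mu + \ve^*]$. I then adjust $\gamma$ to a rank-$N$ orthogonal projection $\gamma' = \gamma + \delta$ by adding (if $M<N$) or removing (if $M>N$) exactly $|N-M|$ eigenfunctions of $H_\gamma$ with eigenvalues inside this window; the resulting $\delta$ is a signed finite-rank spectral projection of $H_\gamma$ with $\|\delta\|_{\tr} = |N-M|$ and range in the spectral subspace near $\mu$. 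The trial state $\Phi_N$ is taken to be the Slater determinant satisfying $\gamma_{\Phi_N} = \gamma'$.

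Expanding $\<\Phi_N, {\rm H}_N\Phi_N\>$ via $\gamma' = \gamma + \delta$ produces an exact cancellation between the Hartree cross-term $\frac{1}{N}\int V(x-y)\gamma(x,x)\delta(y,y)\,dx\,dy$ and the linearization of $\tr(V*\varrho_\gamma)\delta$ (after symmetrizing $x \leftrightarrow y$), leaving
\begin{align*}
\<\Phi_N, {\rm H}_N\Phi_N\> - \mathscr E^{\rm HF}_N(\gamma) & = \tr(H_\gamma \delta) + \tfrac{1}{2N}\int V(x-y)\delta(x,x)\delta(y,y)\,dx\,dy \\
&  \quad - \tfrac{1}{N}\re\int V(x-y)\gamma(x,y)\overline{\delta(x,y)}\,dx\,dy - \tfrac{1}{2N}\int V(x-y)|\delta(x,y)|^2\,dx\,dy.
\end{align*}
The dominant term is $\tr(H_\gamma\delta) = \pm\sum_{i\in I}\epsilon_i = \pm\mu(N-M) + O(|N-M|\ve^*) = O(N\hbar^{1/2}|\!\ln\hbar|^{1/2})$, since $|\mu|$ is a fixed constant. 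The direct $\delta$-term should be controlled via the Hardy--Littlewood--Sobolev inequality after interpolating $\|\delta(\cdot,\cdot)\|_{L^1} = |N-M|$ against semiclassical pointwise bounds for spectral projections of $H_\gamma$ on the energy window near $\mu$. The two exchange-type terms involving $\delta$ yield to a Cauchy--Schwarz estimate in $L^2(V\,dx\,dy)$, using that both $\int V(x-y)|\gamma(x,y)|^2dx dy$ and $\int V(x-y)|\delta(x,y)|^2 dx dy$ admit semiclassical bounds via Hardy-type inequalities and the Fermi-scale localization of the eigenfunctions.

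The hard part will be the careful estimation of the quadratic $\delta$-corrections in the singular Coulomb regime $a = 1$, where standard operator-norm bounds on $X_\gamma$ fail and one must instead exploit the semiclassical structure of the eigenfunctions composing $\delta$; this is where the logarithmic factor $|\!\ln\hbar|^{1/2}$ in the final estimate originates. Once all sub-leading contributions are shown to be $O(N\hbar^{1/2}|\!\ln\hbar|^{1/2})$ or smaller, assembling yields $\inf\sigma({\rm H}_N) \leq \<\Phi_N, {\rm H}_N\Phi_N\> \leq \mathscr E^{\rm HF}_N(\gamma) + CN\hbar^{1/2}|\!\ln\hbar|^{1/2}$, which combined with the initial reduction above concludes the proof.
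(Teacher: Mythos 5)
Your overall strategy---reduce to a variational upper bound via a rank-$N$ Slater determinant built from eigenfunctions of $H_\gamma$, with the rank adjusted near the Fermi level---is the same as the paper's, but there is one concrete gap in how you justify the construction. You invoke Corollary \ref{coro} to ``guarantee at least $|N-M|$ eigenvalues of $H_\gamma$ in $[\mu-\ve^*,\mu+\ve^*]$.'' Corollary \ref{coro} gives an \emph{upper} bound $\tr \1_{[\mu-\ve,\mu+\ve]}(H_\gamma)\le CN(\ve+\hbar)$ on the eigenvalue count in such a window; it cannot produce the \emph{lower} bound you need. What your construction actually requires is that the $N$-th eigenvalue of $H_\gamma$ lies within $O(\hbar^{1/2}|\!\ln\hbar|^{1/2})$ of $\mu$, and this needs the two-sided Weyl asymptotics of Proposition \ref{prop2} combined with the strict monotonicity $F'(\mu)>0$ of the phase-space volume function $F(\nu)=\frac{1}{(2\pi)^3}\int(U+V*\varrho_{\rm TF}-\nu)_-^{3/2}dx$ (plus the $L^1$ comparison of $\varrho_{\gamma_N^{\rm H}}$ with $\varrho_{\rm TF}$). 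This is precisely the content of the auxiliary trace formula \eqref{eq:asymp} that the paper proves before Lemma \ref{lemma:scalar}. Without it, $\tr(H_\gamma\delta)$ could a priori be much larger than $O(N\hbar^{1/2}|\!\ln\hbar|^{1/2})$, and your leading-order estimate collapses. Separately, the quadratic-in-$\delta$ terms that you defer as ``the hard part'' are in fact controllable with elementary tools already in the paper (the pointwise bound $|\delta(x,y)|^2\le\delta(x,x)\delta(y,y)$ for a projection, the uniform $L^1\cap L^\infty$ bounds on the densities from Proposition \ref{prop1}, the estimate $\|V*\varrho\|_{L^\infty}\le C(\|\varrho\|_{L^1}+\|\varrho\|_{L^\infty})$, and Cauchy--Schwarz in $L^2(V\,dx\,dy)$ for the exchange cross-term); no ``semiclassical pointwise bounds for eigenfunctions'' are needed, and the logarithm does not originate there but from $|\tr\gamma-N|$.

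For comparison: the paper avoids your second-order expansion altogether. It sandwiches the rank-$N$ projection $\omega$ between two spectral projections, $\gamma_-=\1_{(-\infty,\mu-\epsilon]}(H_\gamma)\le\omega\le\gamma_+=\1_{(-\infty,\mu+\epsilon]}(H_\gamma)$ with $\epsilon\sim\hbar^{1/2}|\!\ln\hbar|^{1/2}$, drops the (negative) exchange term from the upper bound via $\mathscr E_N^{\rm HF}(\omega)\le\mathscr E_N(\omega)$, and then uses monotonicity to bound $\mathscr E_N(\omega)\le\tr H_{\gamma_+}\gamma_+-\tfrac12\tr(V*\varrho_{\gamma_-})\gamma_-$. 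The only remaining comparisons are between spectral projections of the \emph{same} operator $H_\gamma$ at nearby thresholds, handled by Corollary \ref{coro} (used correctly, as an upper bound) and the $L^1$--$L^\infty$ duality for the direct term. This sidesteps the $\int V|\delta(x,y)|^2$ and $\int V\gamma\overline{\delta}$ estimates entirely. Your route is viable once the trace-formula input is supplied and the quadratic estimates are carried out, but it is strictly more work.
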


\begin{remark}
	 The statement of the lemma may seem trivial at first glance, 
	 as it would be an obvious consequence of the variational principle
	 in case $\tr \gamma = N$.
	 However,  $\gamma $
	 only satisfies  
	 $\tr \gamma = N  +  O ( N \hbar^{1/2} |\! \ln \hbar|^{1/2 }	)$, 
	 see e.g. Proposition \ref{prop1}. 
\end{remark}

Before we turn to the proof of Lemma \ref{lemma:scalar}, 
we derive a useful trace formula. 
Consider the auxiliary function 
\begin{equation}
	F( \nu)  \equiv 
	\frac{1}{(2\pi)^3}
	\int_{\R^3}
	\(
	U  + V*\varrho_{\rm TF}    - \nu 
	\)^{3/2}_- d  x \, \qquad \nu \in \R \ .    
\end{equation}
Note   $F( \mu) =1$ thanks to the choice of $\mu$. 
Additionally,   $F \in C^1$,  both $F, F'$ are increasing, and  $F' (\mu)>0$.

\begin{lemma} 
	 Let $ H_{\gamma_N^{\rm H}}$ be the Hartree Hamiltonian. 
	 Then, there is $\epsilon_0 \in (0,1)$ 
	 and $C_0>0$
	 such that 
	 for all $\epsilon \in  ( 0 , \epsilon_0 )$
	 \begin{equation}
	 	\label{eq:asymp}
	 	|	 	 		\tr \1_{      (   - \infty , \nu]  	} ( H_{\gamma_N^{\rm H}} ) 
	 	- 
	 	N F(\nu)  | 
	 	\leq C_0  N \hbar^{ \frac{1}{2}}
	 	 |\ln \hbar|^{1/2} 
	 	  \ , \quad \forall \nu \in  ( \mu - \epsilon, \mu + \epsilon) \  .
	 \end{equation}
\end{lemma}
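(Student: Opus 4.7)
The plan is to interpret $NF(\nu)$ as the leading Weyl term for the Schr\"odinger operator with the Thomas--Fermi mean field, and then swap $\varrho_{\rm TF}$ for $\varrho_{\gamma_N^{\rm H}}$ inside that mean field using the trace-norm convergence from Proposition~\ref{prop1}(3). First I would apply Proposition~\ref{prop2} to the shifted operator $H_{\gamma_N^{\rm H}} - \nu = -\hbar^2\Delta + (W_N - \nu)$, where $W_N \equiv U + V*\varrho_{\gamma_N^{\rm H}}$. As noted in Section~\ref{section:prelim}, $W_N \in C^{1,\alpha}_{\rm loc}(\R^3)$ with seminorms uniform in $N$ on any fixed bounded set, and constant shifts do not affect these seminorms. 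For $\epsilon_0$ small enough that all sublevel sets $\Omega(W_N - \nu,\nu_*)$ remain inside a common bounded set (guaranteed by $U(x)\to\infty$ and the uniform $L^\infty$ bound on $V*\varrho_{\gamma_N^{\rm H}}$), Proposition~\ref{prop2} gives uniformly in $\nu \in (\mu-\epsilon,\mu+\epsilon)$
\begin{equation*}
\tr \1_{(-\infty,\nu]}(H_{\gamma_N^{\rm H}}) = \frac{|B_3|}{(2\pi\hbar)^3}\int_{\R^3}(W_N(x)-\nu)_-^{3/2}\, dx + O(N\hbar),
\end{equation*}
where $\hbar^{-3}=N$ and $\hbar^{-2}=N\hbar$ have been used. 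Applying the same proposition to $H^{\rm TF} \equiv -\hbar^2\Delta + U + V*\varrho_{\rm TF}$ identifies $NF(\nu)$ with the analogous expression with $\varrho_{\gamma_N^{\rm H}}$ replaced by $\varrho_{\rm TF}$, again up to an $O(N\hbar)$ error.

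It remains to compare the two classical phase-space integrals. Both negative parts are supported on a common bounded set $K \subset \R^3$ uniformly in $N$ and $\nu$, and on $K$ the potentials $W_N - \nu$ and $U + V*\varrho_{\rm TF} - \nu$ are uniformly bounded. Since $h(a) := a_-^{3/2}$ is $C^1$ with $h'(a) = -\tfrac{3}{2} a_-^{1/2}$, the mean value theorem yields pointwise
\begin{equation*}
\big|(W_N - \nu)_-^{3/2}(x) - (U + V*\varrho_{\rm TF} - \nu)_-^{3/2}(x)\big| \leq C\, \1_K(x)\, |V*(\varrho_{\gamma_N^{\rm H}} - \varrho_{\rm TF})(x)|,
\end{equation*}
with $C$ independent of $N$ and $\nu$. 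Integrating and applying Young's inequality in the form $\int_K |V*g|\, dx \leq (\sup_y \int_K V(x-y)\, dx)\|g\|_{L^1(\R^3)}$ (finite since $V=\lambda|x|^{-a}$ is locally integrable on $\R^3$ for $a \leq 1$), bounds the difference of classical integrals by $C_K \|\varrho_{\gamma_N^{\rm H}} - \varrho_{\rm TF}\|_{L^1(\R^3)}$.

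To conclude, use $\varrho_\gamma(x) = N^{-1}\gamma(x,x)$ together with Proposition~\ref{prop1}(3) to obtain
\begin{equation*}
\|\varrho_{\gamma_N^{\rm H}} - \varrho_{\rm TF}\|_{L^1(\R^3)} \leq N^{-1}\|\gamma_N^{\rm H} - \gamma_N^{\rm TF}\|_{\tr} \leq C\, \hbar^{1/2}|\ln\hbar|^{1/2}.
\end{equation*}
Multiplying by the Weyl prefactor $|B_3|/(2\pi\hbar)^3 = N|B_3|/(2\pi)^3$ and combining with the two $O(N\hbar)$ Weyl errors (subdominant since $\hbar \leq \hbar^{1/2}|\ln\hbar|^{1/2}$) yields the asserted bound. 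The main technical point is the \emph{uniformity} in $\nu$ of Proposition~\ref{prop2}: this reduces to verifying that a single $\epsilon_0$ and a single bounded set work for all $W_N - \nu$ simultaneously, both of which follow from the coercivity of $U$ and the uniform regularity of the Hartree mean-field potentials recorded in Section~\ref{section:prelim}.
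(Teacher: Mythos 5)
Your proposal is correct and follows essentially the same route as the paper: the uniform-in-$\nu$ Weyl law of Proposition \ref{prop2} for $H_{\gamma_N^{\rm H}}-\nu$, a mean-value/Taylor estimate for $a\mapsto a_-^{3/2}$ on a common compact sublevel set, a Young-type bound for the convolution with the locally integrable $V$, and the $L^1$ density convergence inherited from Proposition \ref{prop1}(3). The only superfluous step is invoking Proposition \ref{prop2} a second time for $-\hbar^2\Delta+U+V*\varrho_{\rm TF}$: since $\hbar^{-3}=N$, the quantity $NF(\nu)$ \emph{is} the classical phase-space integral with the Thomas--Fermi mean field by definition, so no extra $O(N\hbar)$ error is incurred there.
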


\begin{proof}
	 	 Consider now $   \epsilon \in( 0 ,  \epsilon_0 ) $, 
	 where $\epsilon_0 $ is sufficiently small. 
	 The Hamiltonian $H_{\gamma_N^{\rm N}} - \nu $ is a Schr\"odinger operator
	 with potential
	 $    U    + V * \varrho_{ \gamma_N^{\rm H}}   - \nu $
	 in the class $C_{\rm loc }^{1,\alpha}(\R^3)$ for any $\alpha \in (0,1)$. 
	 As discussed in Section \ref{section:prelim}, its semi-norms
	 are uniformly bounded in $N \geq1$.
	 Thus, we may apply the asymptotics for the trace, 
	 	 see e.g. Proposition \ref{prop2}
	 \begin{equation*}
	 	\tr \1_{      (   - \infty , \nu]  	} ( H_{\gamma_N^{\rm N}} ) 
	 	=  
	 	\frac{1}{ (2 \pi \hbar)^3 }
	 	\int_{	\R^3	}
	 	\(
	 	U  + V* \varrho_N  
	 	- \nu
	 	\)_-^{3/2}dx 
	 	+ O  (\hbar^{-2 }) 
	 \end{equation*} 
and the error is uniform over  $ \nu \in (\mu - \epsilon, \mu  + \epsilon)$.
Here,  we let 
	 $\varrho_N \equiv \varrho^{\rm H}_{\gamma_{N} }  $. 
	 On the other hand, $\|	 \varrho_N^{\rm} - \varrho_{\rm TF}\|_{L^1} = O (\hbar^{ \frac{1}{2}} |\ln \hbar|^{ \frac{1}{2}}	)$
	 follows from Proposition \ref{prop1}. 
	 Using the non-negativity $V * \varrho \geq 0$  for $\varrho  = t \varrho_N + (1-t) \varrho_{\rm TF}$ 
	 where $t \in [0,1]$, 
	 one finds thanks to a Taylor estimate 
	 \begin{align*}
	 	\Big| 	 	  \int_{\R^3}
	 	\big(
&  	 	U  + V* \varrho_N 
	 	- \nu
	 	\big)_-^{3/2}  	  	 	   	 	    -    \int_{\R^3} 
	 	\(
	 	U  + V* \varrho_{\rm TF}   
	 	- \nu
	 	\)_-^{3/2}dx  
	 	\Big|  \\
	 	&  \leq    C \,  \int_{\R^3}   ( U -  \nu)_-^{ \frac{3-2}{2}} dx \| \varrho_N - \varrho_{\rm TF}	 \|_{L^1(\R^3)}
	 	= O(\hbar^{ \frac{1}{2} }	 |\! \ln \hbar|^{\frac{1}{2}}) \ .
	 \end{align*}
	 Therefore, we conclude the validity of the  asymptotics for the trace  
	 upon combination of the last two estimates.
\end{proof}
	  
	 This will be the main technical ingredient in the proof of the lemma, alongside the variational principle.

\begin{proof}[Proof of Lemma \ref{lemma:scalar}]
Throughout the proof we denote
$
\gamma \equiv \gamma_N^{\rm H}
$. 
	Let $ (E_n)$ be the  eigenvalues of $H_\gamma$, 
	counted with multiplicities and listed in increasing order.
	Let $(f_n)$ be the collection of associated eigenfunctions. 
	Then we define
	\begin{equation*}
		\gamma_- \equiv
		\1_{ (   - \infty,  \mu - \epsilon  ]		} (H_\gamma) \ ,
		\qquad 
		\omega \equiv
		\textstyle  \sum_{ 1 \leq  n \leq N } \ket{f_n} \bra{f_n}  \ , 
		\qquad 
		\gamma_+  \equiv
		\1_{ (   - \infty,  \mu + \epsilon  ]		} (H_\gamma)  
	\end{equation*}
	where $\epsilon  \equiv C(\mu )\hbar^{ \tfrac12}   | \ln \hbar|^{\tfrac12} 	$.
	We   claim       $C(\mu)> 0 $
	can be chosen large enough so that 
		 \begin{equation}
		 	\label{ineq:op}
		\gamma_- 
		\leq 
		\omega
		\leq \gamma_+ 
	\end{equation}
	in the sense of operators in $L^2(\R^3)$. 
	To see this, 
we use the fact that $F' >0 $ is strictly increasing. 
  We then find  thanks to the trace formula 
   \eqref{eq:asymp}
   \begin{align*}
\tr \gamma_- 
   &  	  \leq N \(     F( \mu - \epsilon ) + C_0   
 \hbar^{ \tfrac12	}   | \ln \hbar|^{\tfrac12} 
   		 \) 
   	  \leq N  \( 	  1  - F ' (\mu) \epsilon + 		C_0  \hbar^{ \tfrac12	}   | \ln \hbar|^{\tfrac12} 	\) \\
\tr\gamma_+  
&    \geq  N \(     F( \mu +  \epsilon )  -  C_0   \hbar^{ \tfrac12	}   | \ln \hbar|^{\tfrac12}  \) 
   	   \geq    N  \( 	  1    + 		F ' (\mu) \epsilon  - 	C_0   \hbar^{ \tfrac12	}   | \ln \hbar|^{\tfrac12} 	\)  \ .
   \end{align*}
It is clear now that we can choose, for instance $C(\mu) = 2 C_0 / F'(\mu)$, 
so that 
 $$
  \tr \gamma_- < \tr \omega < \tr \gamma_+  \ .
 $$
  The operator inequality \eqref{ineq:op}
  then follows from the previous trace inequalities, as the eigenvalues are listed in increasing order.

	 Let us now estimate the energy in the state $\omega$.
Thanks to the variational principle 
$  \inf \sigma (H_N) \leq 	
  \mathscr E ^{\rm HF}_{N} (\omega)
	\leq
  \mathscr E _{N} (\omega)$.
  Additionally, a straightforward calculation shows that
  \begin{align}
  	   \mathscr E _{N} (\omega)
  	    & = 
  	     \tr H_\omega \omega 
  -  \tfrac12  \tr V*\varrho_\omega \omega  \\
  \notag
  	    &  \leq  \tr H_{\gamma_+} \gamma_+  
  	        - \tfrac12 \tr V*\varrho_{\gamma_-} \gamma_-  \\
  	        \notag
  	       &   = 
  	            	   \mathscr E _{N} (\gamma) 
  	            	   +  
  	            	 (      \tr H_{\gamma_+} \gamma_+  - \tr H_\gamma \gamma   ) 
  	            	      	        +
  	            	      	      \tfrac12  (   	            	      	            \tr V*\varrho_{\gamma} \gamma 
  	            	      	        -    
  	            	      	           \tr V*\varrho_{\gamma_-} \gamma_- ) \ . 
  \end{align}
For the first term, we   find thanks to spectral calculus 
and Corollary \ref{coro}
\begin{equation}
 \tr H_{\gamma_+} \gamma_+  - \tr H_\gamma \gamma 
	 \leq   ( \mu + \epsilon)
	 \tr_{ [ \mu, \mu  + \epsilon]		} (H_\gamma)
	 \leq C N    \hbar^{ \tfrac12	}   | \ln \hbar|^{\tfrac12} 
\end{equation}
	 due to our choice of $\epsilon  $. 
	 For the second term, we use the convolution structure
	 and extract the difference in $L^1$ norms
	 \begin{align*}
	 	\tr   (V * 	 ( \varrho_{  \gamma}  - \varrho_{  \gamma_-} 	)	) 
	 	(   \gamma + \gamma_- )   
	 	& 
	 	 \  =  \ 
	 	N 
	 	\int_{\R^d} 
	 	\(   \varrho_\gamma    - \varrho_{ \gamma_-}     \)
	 	\(
	 	V *  ( \varrho_\gamma + \varrho_{  \gamma_-} ) 
	 	\)   dx  \\
	 	&   \ \leq  \ 
	 	N  \, \|	 \varrho_\gamma - \varrho_{  \gamma_-} 	\|_{L^1 }
	 	\|		   V *  ( \varrho_\gamma + \varrho_{  \gamma_- } ) 	\|_{L^\infty} \\
	 	&  \ \leq  \ 
	 	\tr | \gamma - \gamma_-		|
	 	\|		   V *  ( \varrho_\gamma + \varrho_{ \gamma_-} ) 	\|_{L^\infty} \\
	 	&    \  \leq   \ 
	 	N  \hbar^{1/2 }
	 	\|		   V *  ( \varrho_\gamma  + \varrho_{ \gamma_-} ) 	\|_{L^\infty}
	 \end{align*}
	 The norms
	 $   \|		    |x|^{-a} *    \varrho 	\|_{L^\infty}$
	 can be controlled via H\"older's inequality 
	 in small and large regions, and  leads to the following well-known estimate
	 $$   \|		   V *   \varrho 	\|_{L^\infty}
	 \leq C_a  
	 \(
	 \|	  \varrho \|_{L^1}
	 + 
	 \|		   \varrho \|_{L^\infty} 
	 \) \ , \qquad
	 \varrho = \varrho_\gamma + \varrho_{\gamma_-} . 
	   $$
 In particular, the $L^p $ norms of $\varrho_\gamma$
 and $\varrho_{\gamma_-}$ are uniformly bounded in $N$, see e.g. Proposition
 \ref{prop1}. 
Finally, we include the correction due to the exhange term, i.e.
which is at most of order 
$  N \hbar^{1/2}  $.
	 This finishes the proof
after we collect all the estimates, 
and  use 
$\<
 \Psi_N, {\rm H }_N \Psi_N\>
 \leq \inf \sigma_{\mathscr H_N} ({\rm H}_N)
 + C N  \hbar^{1/2 	}   | \ln \hbar|^{1/2}  .$
\end{proof}

 \section{Proof of Theorem \ref{thm1}} \label{section:proof}
 \begin{proof}[Proof of Theorem \ref{thm1}]
 	We denote by $\Psi_N$ the approximate 
 	ground state of $ {\rm H}_N  $ with $\ve_N= N^{-1/6 }$, 
 	and by $\gamma_N$ its density matrix.
 	We denote by $\gamma^{\rm H}_N$
 	the minimizer of the Hartree functional  $\mathscr E_N $ satisfying \eqref{fixed}. 
 	Let us then write in terms of the field operators
 	\begin{equation}
 		\gamma_N(x,y)    = 
 		\<  \Psi_N,  a_x^* a_y  \Psi_N		\> \ . 
 	\end{equation}
Let  	$\Omega_N   =  \mathcal R_N^* \Psi_N $
be the fluctuation vector. 
 	Then, one has
 	\begin{align*}
 		\gamma_N (x,y)
 		- 
 		\gamma_N^{\rm H} (x, y)  
 		& =    
 		\<   \Omega_N ,  a^*(u_{y })  a(u_{x })  \Omega_N 			\>   
 		-  
 		\<   \Omega_N  ,  a^*(\overline{v}_{x})  a(\overline{v}_{y})  \Omega_N  			\>  	\\ 
 		\nonumber
 		& 	 		\quad  	 +   
 		\<   \Omega_N  ,  a^*(u_{y})  a^*(v_x )  \Omega_N			\>  + 
 		\<   \Omega_N ,  a(\overline{v}_{y})a( u_{x}) \Omega_N  \>   \    . 
 	\end{align*}
Using  $\| v\|_{\rm HS}  \leq C  N^{1/2}$, a standard argument  
 	using Lemma \ref{lemma fermion estimates} shows that
 	\begin{align}
 		\|	 \gamma_N - \gamma^{\rm H}_N	\|_{\rm Tr} 
 		& \leq 
 		C N^{1/2 } 
 		\Big( 
 		\< \Omega_N  ,  
 		\mathcal N \Omega_N 
 		\>
 		+ 1
 		\Big)^{1/2 } 
 	\end{align}
see e.g \cite{Benedikter1}. 
 	Under the assumptions of Theorem \ref{thm1} we 
 	prove Theorem \ref{thm2}
 	which states
 	that the right hand side is of order $ O  ( N \hbar^{\delta}  |\ln \hbar|^{1/4} 	)$. 
 	This establishes the desired convergence rate
 	in trace norm between $\gamma_N$
 	and $\gamma_N^{\rm H}$.
 	The proof of these trace bounds is finished
after we  use Proposition \ref{prop1}
to bound
the distance between 
  $\gamma_N^{\rm H}$
 	and $\gamma_{\rm TF}.$
 \end{proof}

\appendix

\section{The conjugated Hamiltonian}

 	Let $u$ and $v$ be operators on $L^2(\R^d)$ with kernels 
 	$u(x,y)$ and $v(x,y)$.
Let us denote 
 	$	\alpha_x \equiv a (u_x)$
and
 $		\beta_x \equiv  a (	 \overline{v_x}) $
 where $ u_x(y) = u (y,x)$ and $ v_x (y)  = v (y,x)$.
 We also let $\gamma \equiv v^*v$. 
In this section we assume that
 $\mathcal R:\F \rightarrow \F$ is a unitary map that  satisfies
  \begin{equation}
 	\boxed{
 		\mathcal R^* a_x  \mathcal R  = 
 		\alpha_x + \beta_x^*   
 	\qquad
 	\t{and}
 	\qquad 
 	 		\mathcal R^* a^*_x  \mathcal R  = 
 	\alpha^*_x + \beta_x }  
 \end{equation}
We will maintain a level of generality that is broader than required. We hope that the calculation presented here can be of independent use.

We start with one-body operators.

\begin{lemma}\label{lemma3}
Let $A= A^* $ be a self-adjoint operator on $L^2 (\R^d)$.  Then, 
 	\begin{align*}
	\nonumber 
	\mathcal R ^* d \Gamma (A  ) \mathcal R
	&  	 	 	    = 
	\t{Tr} A \gamma  + 
	\d \Gamma (u A  u^*)
	- 
	d \Gamma (\overline v A^t \overline v^*)
	\\ 
	& 	+ 
	\int  (u A v^*) 	 	   (x,y) a_x^*a_y^* dxdy
	+
	\int 
	(v A  u ^* ) ( x  , y  )
	a_x a_y 
	d x   d y  \ . 
\end{align*}
\end{lemma}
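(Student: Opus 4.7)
The plan is a direct computation starting from the integral representation
\begin{equation*}
d\Gamma(A) = \int_{\R^{2d}} A(x,y)\, a_x^* a_y\, dx\, dy,
\end{equation*}
and inserting the conjugation relations $\mathcal R^* a_x^* \mathcal R = \alpha_x^* + \beta_x$ and $\mathcal R^* a_y \mathcal R = \alpha_y + \beta_y^*$ under the integral sign. Expanding the product gives four contributions
\begin{equation*}
\mathcal R^* a_x^* a_y \mathcal R = \alpha_x^* \alpha_y + \alpha_x^* \beta_y^* + \beta_x \alpha_y + \beta_x \beta_y^*,
\end{equation*}
and the task is to identify each term after integration against $A(x,y)$.

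For the first three terms one just expresses $\alpha_x$ and $\beta_x$ in terms of the operator-valued distributions $a_z^\#$ via $\alpha_x = \int \overline{u(z,x)}\, a_z\, dz$ and $\beta_x = \int v(z,x)\, a_z\, dz$, then computes the integral kernels that appear. A short calculation gives
\begin{align*}
\int A(x,y)\alpha_x^* \alpha_y\, dx\, dy &= d\Gamma(uAu^*),\\
\int A(x,y)\alpha_x^* \beta_y^*\, dx\, dy &= \int (uAv^*)(z,w)\, a_z^* a_w^*\, dz\, dw,\\
\int A(x,y)\beta_x \alpha_y\, dx\, dy &= \int (vAu^*)(z,w)\, a_z a_w\, dz\, dw,
\end{align*}
where the key identity is that, in the first line, $\int u(z,x) A(x,y)\overline{u(w,y)}\, dx\, dy = (uAu^*)(z,w)$, and similarly for the other two; these are routine manipulations with kernels, self-adjointness of $A$, and the definitions of $u^*$ and $v^*$.

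The only step requiring genuine care is the $\beta_x \beta_y^*$ term, which is the one that \emph{normal orders}: using the CAR $\{a_z, a_w^*\} = \delta(z-w)$ one writes
\begin{equation*}
\beta_x \beta_y^* = \int v(z,x)\overline{v(w,y)}\bigl(\delta(z-w) - a_w^* a_z\bigr)\, dz\, dw,
\end{equation*}
and the $\delta$-function contribution produces the scalar term $\int A(x,y)(v^* v)(x,y)\, dx\, dy = \tr(A\gamma)$, while the normal-ordered piece, after integrating against $A(x,y)$, becomes $-\int (vAv^*)(z,w) a_w^* a_z\, dz\, dw = -d\Gamma((vAv^*)^t) = -d\Gamma(\overline v A^t \overline v^*)$, using $(vAv^*)^t = \overline v A^t \overline v^*$ (which follows from the elementary kernel identities $v^t = \overline v^*$ and $(AB)^t = B^t A^t$). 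Summing the four contributions yields precisely the formula in the lemma. The main (mild) obstacle is purely notational: keeping track of transposes versus complex conjugates in the identifications of the operator kernels; there is no analytic difficulty beyond the CAR.
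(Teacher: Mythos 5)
Your computation is correct and follows essentially the same route as the paper: expand $\mathcal R^* a_x^* a_y \mathcal R = (\alpha_x^*+\beta_x)(\alpha_y+\beta_y^*)$, identify the kernels of the three already-normal-ordered terms, and normal-order $\beta_x\beta_y^*$ via the CAR, with the anticommutator producing $\tr A\gamma$ and the remainder producing $-d\Gamma(\overline v A^t\overline v^*)$ through the identities $v^t=\overline v^*$, $(v^*)^t=\overline v$. The only blemish is a harmless index swap in the scalar term (the $\delta$-contribution is $\int A(x,y)(v^*v)(y,x)\,dx\,dy$, which is what equals $\tr(A\gamma)$), precisely the transpose-versus-conjugate bookkeeping you yourself flag.
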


\begin{proof}
	 We have
	 \begin{align*}
& 	 	  		\mathcal R ^* d \Gamma (A  ) \mathcal R \\
	 	  & 		 = \int 
 	 	 A(x, y ) 		 	 	  		\mathcal R ^*  a_x^* a_y \mathcal R dx dy   \\ 
 	 	 &  = 
 	 	 \int A(x, y)  \(
 	 	 \alpha_x^* + \beta_x 
 	 	 \) \(
 	 	  \alpha_y + \beta_y^*
 	 	 \) dx d y  \\
 	 	 & = 
 	 	 \int A(x, y) \alpha_x^* \alpha_y d xd y 
 	 	  - \int A(x,y) \beta_y^* \beta_x d x dy
 	 	  +
 	 	  \int A(x,y ) 	\{	 \beta_x, \beta_y^*		\} d x dy \\
 &  	 	\quad   + 
 	 	    	 	 \int A(x, y)     \alpha_x^* \beta_y^*	 d xd y 
 	 	    	 	 + 
 	 	  	 	 \int A(x, y)   \beta_x \alpha_y  d xd y  \ . 
	 \end{align*}
Now one can compute the operator kernels
\begin{align*}
  	 	 \int A(x, y) \alpha_x^* \alpha_y d xd y  
 &   	 	  = \int   \(  \int  A(x,y )   u_x(z_1) \overline{ u_y (z_2)}  d x dy \) a_{z_1}^* a_{z_2}  dz_1 d z_2 \\
  	 	  & = 
  	 	   \int   \(  \int u(z_1, x)     A(x,y )    u^*(y, z_2) d x dy \) a_{z_1}^* a_{z_2}  dz_1 d z_2 \\
  	 	   & = 
  	 	     	 	   \int    (u A  u^*)(z_1, z_2)  a_{z_1}^* a_{z_2}  dz_1 d z_2  = 
  	 	     	 	   d \Gamma (u A u^*)  \ . 
\end{align*}
Similarly, exchanging the roles of $u$ and $\overline v$, we get
\begin{align*}
 \int A(x,y) \beta_y^* \beta_x d x dy = 
 \int A^t(x, y) \beta_x^* \beta_y dx dy	 = d \Gamma ( \overline v A^t \overline v^*) \ . 
\end{align*}
For the off-diagonal term we obtain
\begin{align*}
	 	 \int A(x, y)   \beta_x \alpha_y  d xd y   
	 	 &  = 
	 	 \int 			\( 	 \int A(x, y)  v_x(z_1) \overline{ u_y (z_2) } 	 dx dy\)  a_{z_1} a_{z_2}  dz_1 dz_2 \\
	 	 & = 
	 	 \int 
	 	 			\( 	 \int    v(z_1 , x) A(x, y) u^*(y, z_2 )	 dx dy\)  a_{z_1} a_{z_2}  dz_1 dz_2 \\
	 	 			& = 
	 	 			\int (v A u ^*)  (z_1, z_2)  a_{z_1} a_{z_2}  dz_1 dz_2   , 
\end{align*}
whereas the other off-diagonal term is simply the hermitian conjugate. 
Finally, 
for the scalar term we use 
$\{	 \beta_x, \beta_y^*		\}  = \< \overline v_x , \overline v_y	\>  =  (  v^*v)  (y, x )$. Hence
\begin{equation*}
	 \int A(x, y) v^*v (y, x)  dxdy 
	  = \int  	(  A v^*v ) (x)	 dx  = \tr A v^*v \ . 
\end{equation*}
This finishes the proof. 
\end{proof}
  
We now consider two-body operators given by a pair interaction.  
 
 \begin{lemma}\label{lemma4}
Let  $V : \R^d \rightarrow \R $
be an even function. 
Denote $\gamma \equiv v^*v$.  	Then,  there holds 
 	\begin{align*}
 		& 	 \mathcal R^*
 		\int 
 		V(x - y )
 		a_x^* a_y^* a_y a_x   
 		\mathcal R  \\ 
 		& 	    = 
 		\int  
 		V(x-y )
 		\(
 		\gamma(x,x) \gamma(y,y)
 		-  | \gamma (x,y)|^2
 		\)   dxdy     \\
 		&   \quad 		  + 
 		2 \int 
 		V(x-y )
 		\(
 		\gamma(x,x)
 		\beta_y^*\beta_y  - 
 		\gamma(x,y) \beta_y^* \beta_x
 		- \gamma(x,y) \alpha_x^* \alpha_y 
 		+ \gamma (y,y) \alpha_x^* \alpha_x
 		\) dxdy  \\ 
 		&  \quad + 
 		2 \int 
 		V(x-y )
 		\(
 		\gamma( y,y )
 		\beta_x \alpha_x
 		- 
 		\gamma(y, x)  \beta_y \alpha_x
 		-
 		\omega(x,y) \alpha_x^* \beta_y^* 
 		+ \omega (y,y)
 		\alpha_x^* \beta_x^* 
 		\)  dxdy    \\ 
 	& \quad +
 	\int V(x-y)
 	\(
 	\a_x^* \a_y^* \a_y \a_x  
 	+
 	\b_x^* \b_y^* \b_y \b_x 
 	+ 
 	2 \alpha_x^* \beta_x^* \beta_y \alpha_y 
 	- 
 	2 \alpha_x^* \beta_y^* \beta_y \alpha_x 
 	\) dx dy \\
 	&  \quad +  
 	\int V(x-y)
 	\(
 	2 \a_x^* \b_y \a_y \a_x  - 2 \b_x^* \b_x \b_y \a_y  
 	+    \b_x \b_y \a_y \a_x 
 	\)dxdy   + h.c 
 	\end{align*}

 \end{lemma}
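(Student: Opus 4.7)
My plan is to verify the identity by direct substitution followed by Wick ordering. Inserting the particle-hole relations yields
\begin{equation*}
\mathcal{R}^* a_x^* a_y^* a_y a_x \mathcal{R}
= (\alpha_x^* + \beta_x)(\alpha_y^* + \beta_y)(\alpha_y + \beta_y^*)(\alpha_x + \beta_x^*),
\end{equation*}
which expands into $2^4=16$ monomials. After multiplication by $V(x-y)$ and integration over $(x,y)\in\R^{2d}$, I normal-order each monomial using the canonical anticommutation relations, collecting contributions by their degree in the field operators.

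The only contractions that survive are $\{\alpha_x, \alpha_y^*\} = (u^*u)(x,y)$ and $\{\beta_x, \beta_y^*\} = (v^*v)(y,x) = \gamma(y,x)$; the cross-contractions $\{\alpha_x, \beta_y^*\}$ and $\{\beta_x, \alpha_y^*\}$ vanish because of the orthogonality $uv^* = vu = 0$ of the particle-hole parameters recorded below Lemma \ref{lemma:bog}. Since $\gamma$ is a projection, $u = \2 - \gamma$ satisfies $u^*u = \2-\gamma$, so a non-trivial $\alpha$-contraction produces a $\delta(x-y)$-part plus a $-\gamma(x,y)$-part; the singular piece absorbs back into the rearrangement of the remaining quartic monomials into their Wick-ordered form, while the $\gamma$-piece is responsible for the quadratic scalar weights appearing in the claim.

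Organizing the resulting terms by contraction depth, the doubly-contracted terms combine via the evenness $V(x-y) = V(y-x)$ into the Hartree--Fock scalar $\gamma(x,x)\gamma(y,y) - |\gamma(x,y)|^2$. The singly-contracted terms yield the four quadratic expressions weighted by $\gamma$ in the second and third lines of the statement, with the factors of $2$ arising naturally from symmetrizing pairs of monomials under $x \leftrightarrow y$. The zero-contraction monomials that survive normal ordering assemble into the pure quartic pieces $\mathcal{L}_4^{\rm diag}$ and $\mathcal{L}_4^{\rm off}$ in the last two lines, the off-diagonal mixtures $\alpha^*\beta\alpha\alpha$, $\beta^*\beta\beta\alpha$, $\beta\beta\alpha\alpha$ being exactly those that do not commute with $\mathcal{N}$.

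The main obstacle is bookkeeping: the sixteen raw monomials split further upon Wick ordering, yielding roughly forty terms whose signs (arising from repeated anticommutations) and combinatorial multiplicities must be tracked carefully before the symmetric collapse under $x\leftrightarrow y$ is performed. I would follow the strategy of \cite[Proposition 3.1]{Benedikter:mixed}, where the analogous computation is carried out for mixed states; the only new ingredient in the pure-state setting is that the diagonal quartic contribution $\mathcal{L}_4^{\rm diag}$, which commutes with the number operator and therefore drops out of the dynamical argument, must now be retained in the final expression.
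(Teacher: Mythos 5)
Your proposal is correct and follows essentially the same route as the paper: expand $\mathcal R^*a_x^*a_y^*a_ya_x\mathcal R$ into the sixteen monomials, normal order via the CAR, and collect terms by contraction depth into scalar, quadratic and quartic pieces, identifying the cross blocks under $x\leftrightarrow y$ using the evenness of $V$, exactly as in the mixed-state computation of Benedikter et al. One minor inaccuracy in your description: no $\alpha$--$\alpha^*$ contractions (hence no $\delta(x-y)$ parts needing to be ``absorbed'') actually arise, because the $\alpha$'s inherit the normal ordering of the original $a_x^*a_y^*a_ya_x$; the only contractions required are $\{\beta_x,\beta_y^*\}=\gamma(x,y)$, which is where all the $\gamma$-weights in the quadratic and scalar terms come from.
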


 \begin{proof}
 We now have to compute
 \begin{equation*}
 	\mathcal R^*
 	a_x^* a_y^* a_y a_x
 	\mathcal R
 	= 
 	\(
 	\alpha_x^* + \beta_x 
 	\)
 	\(
 	\alpha_y^* + \beta_y 
 	\)
 	\(
 	\alpha_y + \beta_y^* 
 	\)
 	\(
 	\alpha_x + \beta_x^* 
 	\) \ . 
 \end{equation*}
 This has sixteen terms 
 \begin{align}
 	\tag{I}
 	\mathcal R^*
 	a_x^* a_y^* a_y a_x
 	\mathcal R 
 	& 	
 = 
 	\a_x^* \a_y^* \a_y \a_x 
+ 
 	\a_x^* \a_y^* \a_y \b_x^*
 + 
 	\a_x^* \a_y^* \beta_y^*  \a_x 
 	+ 
 	\a_x^* \a_y^* \beta_y^*  \b_x^*  \\ 
 	\tag{II}
 	& +    \a_x^* \beta_y  \a_y \a_x 
+ 
 	\a_x^* \beta_y  \a_y \b_x^*
+
 	\a_x^* \beta_y  \beta_y^*  \a_x 
 +
 	\a_x^* \beta_y  \beta_y^*  \b_x^*  \\
 	\tag{III}
 	& +     \b_x  \a_y^* \a_y \a_x 
+  
 	\b_x  \a_y^* \a_y \b_x^*
 + 
 	\b_x   \a_y^* \beta_y^*  \a_x 
 	+ 
 	\b_x   \a_y^* \beta_y^*  \b_x^*  \\ 
 	\tag{IV}
 	& +     \b_x   \beta_y  \a_y \a_x 
+  
 	\b_x  \beta_y  \a_y \b_x^*
 + 
 	\b_x  \beta_y  \beta_y^*  \a_x 
 	+ 
 	\b_x   \beta_y  \beta_y^*  \b_x^*
 \end{align}
Let us immediately observe that once we integrate
over the even function $V(x-y)$,    use
the CAR $\{	 \alpha_x^\#, \beta_y^\#\}=0$
and a change of variables  between $x,y\in \R^d$ to see that (II) and (III) give the same contributions. 

 We now proceed to normal order 
 \begin{align*}
 	(I) 
 	&  = 
 	\a_x^* \a_y^* \a_y \a_x 
 	\  + \ 
 	\a_x^* \a_y^* \a_y \b_x^*
 	\  + \ 
 	\a_x^* \a_y^* \beta_y^*  \a_x 
 	\  + \ 
 	\a_x^* \a_y^* \beta_y^*  \b_x^*   \\
 	& = 
 	\a_x^* \a_y^* \a_y \a_x 
 	\  +  \ 
 	\a_x^*\b_x^*  \a_y^*    \a_y  
 	\  + \ 
 	\a_x^* \a_y^* \beta_y^*  \a_x 
 	\  + \ 
 	\a_x^* \a_y^* \beta_y^*  \b_x^*   \  . 
 \end{align*}
 \begin{align*}
 	(II)     & =  
 	\a_x^* \beta_y  \a_y \a_x 
 	\  + \ 
 	\a_x^* \beta_y  \a_y \b_x^*
 	\  + \ 
 	\a_x^* \beta_y  \beta_y^*  \a_x 
 	\  + \ 
 	\a_x^* \beta_y  \beta_y^*  \b_x^* \\
 	&   = 
 	\a_x^* \beta_y  \a_y \a_x 
 	\  + \ 
 	\a_x^*\b_x^*  \beta_y  \a_y   
 	\  -  \ 
 	\a_x^*  \beta_y^*\beta_y     \a_x 
 	\  + \ 
 	\a_x^*    \beta_y^*  \b_x^*  \beta_y  \\ 
 	&   -  
 	\gamma(x,y) 	\a_x^*  \a_y   
 	+  \gamma(y,y) \alpha_x^*\alpha_x 
 	+ \gamma(y,y)  \a_x^* \b_x^* 
 	- \gamma(x , y) \a_x^* \b_y^* 
 \end{align*}
 \begin{align*}
 	(IV) &   = 
 	\b_x   \beta_y  \a_y \a_x 
 	\  + \ 
 	\b_x  \beta_y  \a_y \b_x^*
 	\  + \ 
 	\b_x  \beta_y  \beta_y^*  \a_x 
 	\  + \ 
 	\b_x   \beta_y  \beta_y^*  \b_x^* \\
 	& + 
 	\b_x   \beta_y  \a_y \a_x 
 	\   - \ 
 	\b_x^* 	 \b_x  \beta_y  \a_y   
 	\  + \ 
 	\beta_y^* 	 \b_x  \beta_y    \a_x 
 	\  + \ 
 	\b_x   \beta_y  \beta_y^*  \b_x^* \\
 	&   - \gamma(x,y)  \b_x \a_y  + \gamma(x,x) \b_y \a_y 
 	+ \gamma(y,y) \b_x \a_x 
 	-  \gamma( y, x ) \b_y \a_x 
 \end{align*}
Where in (IV) we have left  $\beta_x \beta_y \beta_y^* \beta_x^*$
untouched, as it requires more care. 
We now normal order it as follows.
Here and below, the symbol $(x \iff y ) $ stands
for any the  transpose conjugate of the previous term, i.e.
$ T (x,y  ) + ( x\iff y ) = T(x, y) + T(y,x)$.
After we integrate by $\int V(x-y) dxdy$, they give rise to the same contribution. 
 We find 
 \begin{align*}
 	\b_x   \beta_y  \beta_y^*  \b_x^* 
 	&  	   =   \gamma(y,y)  \b_x     \b_x^* -   \gamma(x,y) 
 	\b_x   \beta_y^* 	 + \b_x    \b_y^* \b_x^* \b_y 		  \\ 
 	& = 
 	\gamma(y,y) \gamma(x,x) - \gamma(y,y) \b_x^*\b_x 
 	- \gamma(x,y) \gamma(y, x) 
 	+ \gamma(x,y) \b_y^* \b_x  \\
 	&  \quad + \gamma(y,x)  \b_x^*\b_y 
 	- \gamma(x,x) \b_y^* \b_y 
 	+ \b_y^*\b_x ^* \b_x \b_y  \\
 	& =
 	\gamma(y,y) \gamma(x,x)
 	- | \gamma(x,y)|^2  \\
 	& 
 	\quad - \gamma(y,y) \b_x^* \b_x + \gamma(x,y) \b_y^* \b_x 
 	+  (x \iff y ) \\
 	& \quad + 
 	\b_y^* \b_x^* \b_x \b y \ . 
 \end{align*}
 Hence, 
 \begin{align*}
 	( IV ) 
 	&  =  
 	\b_x   \beta_y  \a_y \a_x 
 	\   - \ 
 	\b_x^* 	 \b_x  \beta_y  \a_y   
 	\  - \ 
 	\beta_y^* 	 \b_y  \b_x       \a_x 
 	\  + \ 
 	\b_y^* \b_x^* \b_x \b_y   \\
 	&   - \gamma(x,y)  \b_x \a_y  +
 	\gamma(x,x) \b_y \a_y 
 	+  (x \iff y ) \\  
 	& 	   
 	- \gamma(y,y) \b_x^* \b_x + \gamma(x,y) \b_y^* \b_x 
 	+  (x \iff y ) \\
 	&  +       \gamma(y,y) \gamma(x,x)
 	- | \gamma(x,y)|^2   \ . 
 \end{align*}
Making now the explicit dependence on the variables  $(x,y) \in \R^{2d}$,   we   have 
 \begin{align*}
 	\mathcal R
 	\int&  V(x -y )	 
 	  a_x^* a_y^* a_y a_x d x dy 
 	\mathcal R \\
 	& = \int V(x - y ) \big(
 	(I )(x,y) +   (II )(x,y)
 	+ 
 	(III )(x,y) +   (IV )(x,y)
 	\big) dxdy  \\
 	& =  S  + \mathcal L_2 + \mathcal L_4 \ . 
 \end{align*}
 Here, 
 \begin{equation}
 	S  = \int_{\R^d \times \R^d } V(x - y ) 
 	\Big(
 	\gamma(x,x) \gamma(y , y)  - |\gamma(x,y )|^2 
 	\Big) d x d y 
 \end{equation}
 is the scalar term. 
  $\mathcal L^2$ is the quadratic term
 (the  (II)  and  (III)  contributions fuse into a single term)
 \begin{align*}
 	& 	 \mathcal L _2  =  \\
 	&  2  \int V(x -y ) 
 	\big(
 	-  
 	\gamma(x,y) 	\a_x^*  \a_y   
 	+  \gamma(y,y) \alpha_x^*\alpha_x 
 	+ \gamma(y,y)  \a_x^* \b_x^* 
 	- \gamma(x , y) \a_x^* \b_y^* 
 	\big) \\
 	&   + 
 	2 \int V(x- y )
 	\big(
 	- \gamma(x,y)  \b_x \a_y  +
 	\gamma(x,x) \b_y \a_y 
 	- \gamma(y,y) \b_x^* \b_x + \gamma(x,y) \b_y^* \b_x 
 	\big) \\
 	&  = 
 	2 \int V(x - y ) 
 	\big(
 	\gamma(y,y) \a_x^* \a_x   - \gamma(x,y ) \a_x^* \a_y 
 	\big)  \\
 	& \quad 
 	- 2 \int V(x  - y )
 	\big(
 	\gamma(y,y) \b_x^* \b_x   -  {\gamma(x,y ) } 
 	\b_y^* \b_x 
 	\big)   \\
 	&  \quad + 
 	2 \int V(x - y)
 	\big(
 	\gamma(y,y) \a_x^* \b_x^* - \gamma(x ,y ) \a_x^* \b_y^* 
 	\big) + h.c 
 \end{align*}
  On the other hand, $\mathcal L^4$ is the quartic term (here also (II) and (III) combine togehter)
 \begin{align*}
 	\mathcal L_4    
 	&  	
 	=
 	\int V(x - y )
 	\big(  
 	\a_x^* \a_y^* \a_y \a_x 
 	\  +  \ 
 	2 	  \a_x^*\b_x^*  \a_y^*    \a_y   
 	\  + \ 
 	\a_x^* \a_y^* \beta_y^*  \b_x^* 
 	\big)  
 	\\
 	& 
 	+   2 \int V(x - y )
 	\big(  
 	\a_x^* \beta_y  \a_y \a_x 
 	\  + \ 
 	\a_x^*\b_x^*  \beta_y  \a_y   
 	\  -  \ 
 	\a_x^*  \beta_y^*\beta_y     \a_x 
 	\  + \ 
 	\a_x^*    \beta_y^*  \b_x^*  \beta_y   
 	\big)  
 	\\  
 	& +   \int V(x - y )\big(
 	\b_x   \beta_y  \a_y \a_x 
 	\   - \ 
 	2   \b_x^* 	 \b_x  \beta_y  \a_y    
 	\  + \ 
 	\b_y^* \b_x^* \b_x \b_y    
 	\big) \\
 	& = 
 	\int V(x-y)
 	\big(
 	\a_x^* \a_y^* \a_y \a_x  
 	+
 	\b_x^* \b_y^* \b_y \b_x 
 	+ 
 	2 \alpha_x^* \beta_x^* \beta_y \alpha_y 
 	- 
 	2 \alpha_x^* \beta_y^* \beta_y \alpha_x 
 	\big) \\
 	&  +  
 	\int V(x-y)
 	\big(
 	2 \a_x^* \b_y \a_y \a_x  - 2 \b_x^* \b_x \b_y \a_y  
 	+    \b_x \b_y \a_y \a_x 
 	\big)   + h.c 
 \end{align*}
This finishes the proof. 
  \end{proof}

 \bigskip

\noindent \textbf{Acknowledgements}. 
The author    acknowledges    fruitful and stimulating  discussions 
with  Laurent Lafleche and 
Arnaud Triay.

\end{document}